\documentclass[a4paper,11pt]{article} 
\usepackage[a4paper,hmargin=2.8cm,vmargin=3cm]{geometry}
\usepackage[utf8x]{inputenc} 
\usepackage[english]{babel}
\usepackage{amsmath,amsthm,amsfonts,amssymb,mathrsfs}
\usepackage{authblk}
\usepackage{lineno}
\usepackage[dvipsnames]{xcolor}

\definecolor{BeauBlue}{rgb}{0, 0.2, .9}
\definecolor{BeauOrange}{rgb}{.8, .1, 0}
\usepackage{hyperref}
\hypersetup{
	colorlinks = true,
	linkcolor = BeauBlue,
	urlcolor = cyan,
	citecolor = BeauOrange,
}

\numberwithin{equation}{section}

\usepackage{tikz}
\usetikzlibrary{calc,shapes,patterns,decorations.pathreplacing}

\makeatletter
\DeclareRobustCommand\widecheck[1]{{\mathpalette\@widecheck{#1}}}
\def\@widecheck#1#2{%
    \setbox\z@\hbox{\m@th$#1#2$}%
    \setbox\tw@\hbox{\m@th$#1%
       \widehat{%
          \vrule\@width\z@\@height\ht\z@
          \vrule\@height\z@\@width\wd\z@}$}%
    \dp\tw@-\ht\z@
    \@tempdima\ht\z@ \advance\@tempdima2\ht\tw@ \divide\@tempdima\thr@@
    \setbox\tw@\hbox{%
       \raise\@tempdima\hbox{\scalebox{1}[-1]{\lower\@tempdima\box
\tw@}}}%
    {\ooalign{\box\tw@ \cr \box\z@}}}
\makeatother

\makeatletter
\newcommand{\leqnos}{\tagsleft@true\let\veqno\@@leqno}
\newcommand{\reqnos}{\tagsleft@false\let\veqno\@@eqno}
\reqnos
\makeatother



\newcommand{\triple}[1]{{\left\vert\kern-0.25ex\left\vert\kern-0.25ex\left\vert #1 
    \right\vert\kern-0.25ex\right\vert\kern-0.25ex\right\vert}}




\newcommand{\im}{\operatorname{Im}}
\newcommand{\ii}{\mathrm{i}}
\newcommand{\ee}{\mathrm{e}}

\newcommand{\R}{\mathbb{R}}

\newcommand{\tr}{\mathrm{tr}}
\newcommand{\op}{\mathrm{op}}

\newcommand{\veps}{\varepsilon}
\newcommand{\W}{\mathcal{W}}

\newcommand{\F}{\mathcal{F}}
\newcommand{\cN}{\mathcal{N}}

\def\tr{\mathrm{tr}}

\def\bR{\mathbb{R}}

\def\cE{\mathcal{E}}
\def\cH{\mathcal{H}}
\def\cN{\mathcal{N}}
\def\cU{\mathcal{U}}


\newtheorem{theorem}{Theorem}[section] 
\newtheorem{proposition}[theorem]{Proposition} 
\newtheorem{corollary}[theorem]{Corollary}
\newtheorem{lemma}[theorem]{Lemma}


\newtheorem{remark}[theorem]{Remark}
\newtheorem{assumption}[theorem]{Assumption}

\title{Effective Dynamics of Extended Fermi Gases in the High-Density Regime}

\author[1]{Luca Fresta}
\author[2]{Marcello Porta}
\author[3]{Benjamin Schlein}
\affil[1]{Hausdorff Center for Mathematics,
University of Bonn, Endenicher Allee 60
53115 Bonn, Germany}
\affil[2]{SISSA, Via Bonomea 265, 34136 Trieste, Italy}
\affil[3]{Institute of Mathematics, University of Zurich, Winterthurerstrasse 190, 8057 Zurich, Switzerland}

\begin{document}

\maketitle

\begin{abstract}
We study the quantum evolution of many-body Fermi gases in three dimensions, in arbitrarily large domains. We consider both particles with non-relativistic and with relativistic dispersion. We focus on the high-density regime, in the semiclassical scaling, and we consider a class of initial data describing zero-temperature states. In the non-relativistic case we prove that, as the density goes to infinity, the many-body evolution of the reduced one-particle density matrix converges to the solution of the time-dependent Hartree equation, for short macroscopic times. In the case of relativistic dispersion, we show convergence of the many-body evolution to the relativistic Hartree equation for all macroscopic times. With respect to previous work, the rate of convergence does not depend on the total number of particles, but only on the density: in particular, our result allows us to study the quantum dynamics of extensive many-body Fermi gases.
\end{abstract}

\tableofcontents     

\section{Introduction}

In the last years there has been substantial progress in the derivation of effective equations for interacting fermions in the mean-field regime. In this scaling limit, we consider systems of $N$ particles, confined in a region $\Lambda \subset \bR^3$ with volume of order one, interacting through a weak two-body potential with range comparable to the size of $\Lambda$. Denoting by $V_\text{ext}$ the trapping potential and by $V$ the interaction, the Hamilton operator takes the form 
\begin{equation}\label{eq:HNmf} H_N^\text{mf} (V_\text{ext}) = \sum_{j=1}^N \big[ -\varepsilon^2 \Delta_{x_j} + V_\text{ext} (x_j) \big] + \frac{1}{N} \sum_{i<j}^N V (x_i - x_j) \end{equation}  
and, in accordance with the Pauli principle, it acts on $L^2_a (\bR^{3N})$, the subspace of $L^2 (\bR^{3N})$ consisting of functions that are antisymmetric with respect to permutations. In (\ref{eq:HNmf}), we set $\varepsilon = N^{-1/3}$. Together with the factor $N^{-1}$ in front of the potential energy, this choice guarantees that all terms in the Hamilton operator are, typically, of order $N$. In fact, because of the fermionic statistics, the expectation of $\sum_{j=1}^N -\Delta_{x_j}$ on states trapped in a volume of order one is at least of order $N^{5/3}$; this can be verified with the Lieb-Thirring inequality, see {\it e.g.} \cite[Chapter 4]{LiSa}.

To describe low-energy properties of (\ref{eq:HNmf}), we introduce Hartree--Fock theory, defined by restricting (\ref{eq:HNmf}) to Slater determinants, {\it i.e.}, to wave functions of the form 
\begin{equation}\label{eq:slater}  \psi_\text{Slater} (x_1, \dots , x_N) = \frac{1}{\sqrt{N!}}\text{det } (f_i (x_j))_{1 \leq i,j \leq N}  \end{equation} 
where $\{ f_j \}_{j=1}^N$ is an orthonormal family in the one-particle space $L^2 (\bR^3)$. Slater determinants are an example of quasi-free states: they are completely characterized by their one-particle reduced density matrix 
\[ \omega_N = N \tr_{2,\dots , N} |\psi_\text{Slater} \rangle \langle \psi_\text{Slater}| = \sum_{j=1}^N |f_j \rangle \langle f_j|\;, \]
coinciding with the orthogonal projection onto the $N$-dimensional subspace of $L^2 (\bR^3)$, spanned by the orbitals $\{ f_j \}_{j=1}^N$. In particular, the energy of the Slater determinant (\ref{eq:slater}) is given by the Hartree--Fock energy functional 
\begin{equation}\label{eq:HFfun} \begin{split} \cE_\text{HF} (\omega_N) &= \langle \psi_\text{Slater}, H_N^\text{mf} (V_\text{ext}) \psi_\text{Slater} \rangle \\ &= \tr \, \big[ -\varepsilon^2 \Delta + V_\text{ext} \big] \omega_N + \frac{1}{2N} \int dx dy \, V(x-y) \big[ \omega_N (x;x) \omega_N (y;y) - |\omega_N (x;y)|^2 \big]\;. \end{split} \end{equation} 
The interaction contributes to (\ref{eq:HFfun}) through the direct term, proportional to the product of the particle densities $\omega_N (x;x)$ and $\omega_N (y;y)$ and through the exchange term, proportional to $|\omega_N (x;y)|^2$. The Hartree--Fock energy $E_N^\text{HF}$, obtained minimizing (\ref{eq:HFfun}) over all rank-$N$ orthogonal projections $\omega_N$, provides a good approximation for the ground state energy of (\ref{eq:HNmf}) as $N\gg 1$; see \cite{Bach,GS} for the case of Coulomb systems. Recently, the large $N$ asymptotics of the correlation energy, defined as the difference between the many-body ground state energy minus the Hartree--Fock ground state energy, has been determined in \cite{HPR,BNPSSa,BNPSSb,BPSScorr,CHN}.  

It is natural to ask what happens when the external traps are switched off; does the Hartree--Fock theory also describe the resulting many-body evolution $\psi_{N,t}^\text{mf} = e^{-i H_N^\text{mf} (0) t/\varepsilon} \psi_N$, generated by the translation invariant Hamiltonian $H_N^\text{mf} (0)$? Here, the presence of the parameter $\varepsilon = N^{-1/3}$ guarantees that $\psi^\text{mf}_{N,t}$ undergoes macroscopic changes for times $t$ of order one. The convergence of the one-particle reduced density matrix 
\begin{equation}\label{eq:gamma1}
\gamma_{N,t}^{(1)} = N \tr_{2,\dots ,N} |\psi_{N,t} \rangle \langle \psi_{N,t} |
\end{equation}
associated with $\psi^\text{mf}_{N,t}$ towards the solution of the time-dependent Hartree--Fock equation 
\begin{equation}\label{eq:HFtd}  i \varepsilon \partial_t \omega_{N,t} = \Big[ -\varepsilon^2 \Delta + (V * \rho_t) -X_t , \omega_{N,t} \Big] \end{equation} 
has been established in \cite{EESY}, for analytic potentials and for short times. Here $\rho_t (x) = N^{-1} \omega_{N,t} (x;x)$ is the density associated with $\omega_{N,t}$ and $X_t$ is the exchange operator, defined by its integral kernel $X_t (x;y) = N^{-1} V(x-y) \omega_t (x;y)$. More recently, in \cite{BPS} (and later in \cite{PP}, following a different approach), this convergence has been generalized to a much larger class of interaction potentials and to all times. This result and the techniques that were used to derive it provide the starting point for the present work. The initial data considered in \cite{BPS} are assumed to satisfy suitable semiclassical estimates, which appear as a natural characterization of trapped equilibrium states, in the mean-field regime. Furthermore, in \cite{BPS} it is also shown that, for bounded potentials, the exchange term in (\ref{eq:HFtd}) is subleading, compared with the direct term, and that the many-body dynamics can also be approximated by the Hartree equation 
\begin{equation}\label{eq:Ha0} i \varepsilon \partial_t \omega_{N,t} = \big[ -\varepsilon^2 \Delta + (V * \rho_t) , \omega_{N,t} \big]\;. \end{equation} 
Notice that the result of \cite{BPS} holds in the sense of convergence of density matrices; convergence in $L^{2}$-norm for homogeneous Fermi gases has been recently obtained in \cite{BNPSS}, via the rigorous bosonization techniques developed in \cite{BNPSSa,BNPSSb,BPSScorr} (in this case, $\omega_N$ is translation invariant which implies, in particular, that $\omega_{N,t} = \omega_N$ is stationary).

The result of \cite{BPS} has been extended to fermions with relativistic dispersion (known as pseudo-relativistic fermions) in \cite{BPS2} and to quasi-free mixed states in \cite{BJPSS}. See also \cite{BPSbook} for a review. All these works consider bounded interaction potentials. As for unbounded potentials, the time-dependent Hartree--Fock equation for particles interacting through a Coulomb potential has been derived in \cite{PRSS}, under the assumption that a suitable semiclassical structure of the initial datum propagates along the flow of the Hartree--Fock equation. Recently, the propagation of the semiclassical structure has been proven in \cite{CLS}, for mixed states and for a class of singular potentials that includes a suitably regularized version of the Coulomb interaction. In the absence of semiclassical scaling, that is, setting $\varepsilon = 1$ in the previous discussion, convergence to the time-dependent Hartree--Fock equation has been shown in \cite{BGGM} for bounded potentials, and then extended to Coulomb potentials in \cite{FK} (see also \cite{BBPPT}).

Notice that both the Hartree--Fock equation (\ref{eq:HFtd}) and the Hartree equation (\ref{eq:Ha0}) still depend on the number of particles $N$. In the limit $N \to \infty$, the Hartree--Fock and the Hartree dynamics are known to converge to the Vlasov equation, a classical effective evolution equation. The first proof of convergence from the quantum many-body dynamics to the Vlasov dynamics has been obtained in \cite{NS} for analytic potentials, and then extended in \cite{Sp} to a much larger class of interactions. Next, convergence from the Hartree--Fock to the Vlasov equation has been proved in \cite{LP, MM}. All these results hold in a weak sense. Bounds on the rate of convergence from the Hartree--Fock equation to the Vlasov equation have been first obtained in \cite{APPP}, and more recently in \cite{BPSS} for a larger class of initial data and of interaction potentials. Unbounded interaction potentials, including the Coulomb interaction, have been considered in \cite{LS}. Finally, let us mention the result \cite{LeSa}, where convergence from the Hartree equation to the Vlasov equation is proven for local perturbations of the equilibrium state of extended Fermi gases at fixed density, in the high density regime. This last setting will be related to the one considered in the present work.

The results described above (with the exception of \cite{LeSa}) apply to the mean-field limit, where particles are initially trapped in a volume of order one. To describe the physically important case of extended gases, let us now consider $N$ fermions moving in a large region $\Lambda \subset \bR^3$, at high density $\varrho = N / |\Lambda| \gg 1$. If the potential has range of order one, each particle interacts, at any given time, with order $\varrho$ other particles. Furthermore, the kinetic energy of the $N$ particles is now of the order $\varrho^{2/3} N$. Therefore, to describe an extended Fermi gas at high density, we consider the Hamilton operator 
\begin{equation}\label{eq:HNdefintro}
H_{N} = \sum_{j=1}^{N} - \varepsilon^{2}\Delta_{j} + \varepsilon^{3} \sum_{i<j}^{N} V(x_{i} - x_{j})\qquad \text{on $L^{2}_{\text{a}}(\mathbb{R}^{3N})$}
\end{equation}
where we set $\varepsilon = \varrho^{-1/3}$ to make sure that both terms are of order $N$. In contrast with the mean-field regime (where we had $\varepsilon = N^{-1/3}$), $\varepsilon$ is now small but independent of $N$. We will be interested in the many-body evolution governed by the Schr\"odinger equation
\begin{equation}\label{eq:schr0} i \varepsilon\partial_t \psi_{N,t} = H_N \psi_{N,t} \end{equation} 
for initial data $\psi_{N,0} = \psi_N$ that are close (in an appropriate sense) to a Slater determinant with reduced density matrix $\omega_N$, describing a quasi-free state of $N$ fermions in a large domain $\Lambda \subset \bR^3$, with density of particles bounded everywhere by $\varepsilon^{-3}$ (up to a multiplicative constant), in the sense that 
\begin{equation}\label{eq:noblowup} 
\sup_{z\in \mathbb{R}^{3}} \int dy\, \frac{\omega_N (y;y)}{1 + |y - z|^{4}} \leq C\varepsilon^{-3}\;.
\end{equation}
Additionally, we will assume the initial data to exhibit a local semiclassical structure, captured by localized commutator bounds for $\omega_N$ with the position operator and with the momentum operator, that are expected to hold true  for equilibrium states of confined systems at (or close to) zero temperature and at density of order $\varepsilon^{-3}$. 

The present setting can also be viewed as a {\it Kac regime}, see \cite{EP} for a review. In this situation, the density of particles is order $1$, the range of the potential is $\gamma \gg 1$ uniformly in the size of the system, and the strength of the potential is fixed so that its $L^{1}$-norm is order $1$. Therefore, the many-body Hamiltonian is:
\begin{equation*}
H_{N}^{\text{Kac}} = \sum_{j=1}^{N} -\Delta_{j} + \gamma^{-3} \sum_{i<j}^{N} V((x_{i} - x_{j})/\gamma)\;.
\end{equation*}
We look at the evolution of the system for times $\tau = O(\gamma)$, so that every particle covers a distance of the order of the range of the interaction potential. Thus, writing $\tau = \gamma t$, the time evolution of the system is described by the Schr\"odinger equation:
\begin{equation}\label{eq:kacevo}
i \gamma^{-1} \partial_{t} \psi_{N,t} = H_{N}^{\text{Kac}} \psi_{N,t}\;.
\end{equation}
Let us denote by $U_{\gamma}$ the unitary operator on $L^{2}(\mathbb{R}^{3N})$ implementing the space rescaling: $U_{\gamma} \psi_{N}(x_{1}, \ldots, x_{N}) = \gamma^{3N/2} \psi_{N}(\gamma x_{1}, \ldots, \gamma x_{N})$. Applying the transformation to both sides of (\ref{eq:kacevo}), we have:
\begin{equation}
\begin{split}\label{eq:resckac}
i \gamma^{-1} \partial_{t} U_{\gamma} \psi_{N,t} &= U_{\gamma} H_{N}^{\text{Kac}} U_{\gamma}^{*} U_{\gamma}\psi_{N,t} \\
&= \Big( \sum_{j=1}^{N} -\gamma^{-2} \Delta_{j} + \gamma^{-3} \sum_{i<j}^{N} V(x_{i} - x_{j}) \Big) U_{\gamma}\psi_{N,t}\;,
\end{split}
\end{equation}
where now $U_{\gamma}\psi_{N,t}$ is a wave function describing a quantum system with density $O(\gamma^{3})$. Thus, the dynamics (\ref{eq:resckac}) is equivalent to the one generated by (\ref{eq:HNdefintro}), after setting $\varepsilon = \gamma^{-1}$.

Although (\ref{eq:HNdefintro}) does not describe a mean-field regime (particles typically interact with $\varepsilon^{-3}$ other particles and the size of the potential is of the order $\varepsilon^3$, with $\varepsilon$ now independent of $N$), for small $\varepsilon > 0$ we can still expect a local averaging mechanism to take place and thus that the many-body dynamics (\ref{eq:schr0}) can be approximated by the time-dependent Hartree equation 
\begin{equation}\label{eq:Hintro}
i\varepsilon \partial_{t} \omega_{N,t} = [ -\varepsilon^{2} \Delta + (V *\rho_{t}), \omega_{N,t} ] \, , 
\end{equation}
with $\omega_{N,0} = \omega_N$ and where now
\begin{equation*}
\rho_{t}(x) := \varepsilon^{3} \omega_{N,t}(x;x)\;.
\end{equation*}
Motivated by the results of \cite{BPS}, we are neglecting here the exchange term, appearing on the r.h.s.~of (\ref{eq:HFtd}), since it is expected to be small for the class of smooth interaction potentials that we are going to consider. In our main theorem we compare the one-particle reduced density matrix $\gamma^{(1)}_{N,t}$ associated with the solution of (\ref{eq:schr0}) with the solution of (\ref{eq:Hintro}) and we show that 
\begin{equation}\label{eq:maininf}
\frac{\| \gamma^{(1)}_{N,t} - \omega_{N,t} \|_{\text{HS}}}{N^{1/2}} \leq C\varepsilon^{1/2}\;,
\end{equation}
for short macroscopic times $t$ of order $1$ in $\varepsilon$, and for a constant $C$ independent of $\varepsilon$ and of $N, |\Lambda|$. This result should be compared with the trivial estimates 
\begin{equation*}
\| \gamma^{(1)}_{N,t}\|_{\text{HS}} \leq N^{1/2},\qquad  \| \omega_{N,t} \|_\text{HS} = N^{1/2} \,.
\end{equation*}
It is important to notice that the rate of convergence, on the r.h.s.~of (\ref{eq:maininf}), only depends on the parameter $\varepsilon$, and not on the volume of the system: in particular, our theorem applies to the setting in which the limit $|\Lambda| \to \infty$ is taken {\it before} the limit $\varepsilon \to 0$. To the best of our knowledge, this is the first derivation of the time-dependent Hartree equation for an interacting, extended Fermi gas (notice, however, that the dynamics of tracer particles or impurities moving through an extended ideal gas has been considered in \cite{DFPP,MP}).

Furthermore, we also consider the case of massive pseudo-relativistic fermions, evolving with Hamiltonian:
\begin{equation}\label{eq:Hrel}
H_{N}^{\text{rel}} = \sum_{j=1}^{N} \sqrt{1 - \varepsilon^{2} \Delta_{j}} + \varepsilon^{3} \sum_{i<j}^{N} V(x_{i} - x_{j})\;.
\end{equation}
In this case, the relevant effective evolution equation is the pseudo-relativistic Hartree equation,
\begin{equation}
i\varepsilon \partial_{t} \omega_{N,t} = \big[ \sqrt{1 - \varepsilon^{2}\Delta} + (V *\rho_{t}), \omega_{N,t} \big] \;.
\end{equation}
For mean-field fermions, the validity of the pseudo-relativistic Hartree equation has been proved in \cite{BPS2}. Here, we show the validity of the bound (\ref{eq:maininf}) for extended pseudo-relativistic fermions, for all times.

Technically, the main challenge we have to face consists in showing that the nonlinear Hartree equation, both in the non-relativistic and in the relativistic case, preserves the local semiclassical structure assumed on the initial data $\omega_N$. The fact that the commutator bounds for $\omega_{N,t}$ are localized in space makes the proof of propagation of the semiclassical structure much more involved than in the mean-field regime. In particular, to achieve this we need to propagate the bound (\ref{eq:noblowup}) on the density of particles along the solution of the Hartree equation. For non-relativistic fermions we are able to propagate this estimate for small times of order $1$ in $\varepsilon$. For the pseudo-relativistic case, on the other hand, we can take advantage of the boundedness of the group velocity of the particles to show that (\ref{eq:noblowup}) remains true for all times (of order $1$ in $\varepsilon$). 

The article is organized as follows. In Section \ref{sec:main} we state our main results, Theorem~\ref{thm:main} (non-relativistic case) and Theorem \ref{thm:mainrel} (relativistic case). Both theorems are stated for initial data satisfying the estimates of Assumption \ref{ass:sc}. These assumptions are verified for the free Fermi gas and for coherent states in Appendix \ref{app:localsc}. In Section \ref{sec:fock} we introduce the basic tools of our analysis, namely the fermionic Fock space and Bogoliubov transformations. In Section \ref{sec:proofmain} we prove our main result; the proof is based on the adaptation of the method of \cite{BPS} to extended systems, which crucially relies on the propagation of the local semiclassical structure along the flow of the Hartree equation, as stated in Theorem \ref{thm:propcomm} for the non-relativistic case. The proof of Theorem \ref{thm:propcomm} is given in 
Section \ref{sec:propsc}. Finally, in Section \ref{sec:proprel} we extend the propagation of the local semiclassical structure to the pseudo-relativistic case.

\section{Main Result}\label{sec:main}

Let $\Lambda \subset \R^{3}$ denote a Lebesgue measurable domain of volume $|\Lambda|$, such that $\mathrm{diam}(\Lambda) \leq C |\Lambda|^{1/3}$. We consider an initial wave function $\psi_N \in L^{2}_{\text{a}}(\mathbb{R}^{3N})$, $\| \psi_{N} \|_{2} = 1$, describing $N$ particles localized in $\Lambda$ (in a sense that will be made precise below). We set $\veps = \varrho^{-1/3}$, with the average density $\varrho = N / |\Lambda|$. We are interested in the high-density regime, where $\varrho \gg 1$ or, equivalently, $\veps \ll 1$ (independently of $N, |\Lambda|$). 

Let us denote by $\gamma^{(1)}_{N}$ the reduced one-particle density matrix of $\psi_{N}$, with the integral kernel 
\begin{equation}\label{eq:gamma11}
\gamma^{(1)}_{N}(x;y) = N \int dx_{2}\ldots dx_{N}\, \psi_{N} (x, x_2, \dots , x_N) \overline{\psi_N (y, x_2, \dots , x_N)}\;.
\end{equation}
In the high-density regime, the ground state of the system is expected to be well approximated by a Slater determinant 
\begin{equation}\label{eq:Slaterdef}
\psi_{\text{Slater}} = f_{1} \wedge \cdots \wedge f_{N}\;,\qquad \langle f_{i}, f_{j} \rangle = \delta_{ij}\;,
\end{equation}
for a suitable choice of orthonormal functions $f_{i}$. The closeness of $\psi_{N}$ to a Slater determinant is expressed in terms of closeness of the reduced one-particle density matrix. The reduced one-particle density matrix of the Slater determinant (\ref{eq:Slaterdef}) is given by a rank-$N$ orthogonal projector 
\begin{equation*}
\omega_{N} = \sum_{i=1}^{N} |f_{i}\rangle \langle f_{i} |\;,
\end{equation*}
and we shall suppose that 
\begin{equation*}
\frac{1}{N} \| \gamma^{(1)}_{N} - \omega_{N} \|_{\text{HS}}^{2} \ll 1 \;.
\end{equation*}
We shall now introduce some important assumptions on $\omega_{N}$, which are expected to hold for a wide class of confined systems at equilibrium. We shall check them in Appendix \ref{app:localsc}, for the free Fermi gas and for coherent states. Strictly speaking, coherent states do not really fit our setting, since they are not projections. However, in Appendix \ref{app:localsc} we will consider a family of coherent states that can be viewed as approximate projections. 

For any $z\in \mathbb{R}^{3}$, any $t\in \mathbb{R}$ and any $n\in \mathbb{N}$, let us define the localization operator as:
\begin{equation}\label{eq:defW}
\W_{z}^{(n)}(t):= \frac{1}{1+|\hat{x}(t) - z|^{4n}} \;, \qquad \W_{z}^{(n)}:=\W_{z}^{(n)}(0)\;,
\end{equation}
where $\hat{x}(t)$ denotes the free evolution of the position operator
\begin{equation}\label{free_evolution_x}
\hat{x}(t) =\ee^{-i \veps t \Delta}\hat{x} \ee^{+i \veps t \Delta} = \hat{x}  - 2i \veps t \nabla \;.
\end{equation}
Let us also introduce the weight function:
\begin{equation}\label{eq:chilambda}
X_{\Lambda}(z):= 1+\mathrm{dist}(z,\Lambda)^{4}\;,
\end{equation}
with $\text{dist}(x,\Lambda) = \inf_{y\in \Lambda} | x - y |$. Next, we collect the assumptions we shall make on the reference Slater determinant.

\begin{assumption}[Assumptions on the Initial Datum]\label{ass:sc} There exist $n\in \mathbb{N}$ and $T_{1}\geq 0$ such that the following holds true:
\begin{equation}
\sup _{t \in [0,T_{1}]} \sup_{z\in \mathbb{R}^{3}} \sup_{\substack{p\in\mathbb{R}^{3} \\ |p| \leq \varepsilon^{-1}}} \frac{X_{\Lambda}(z)}{1+|p|} \big \| \W_{z}^{(n)}(t) \big[ e^{i p \cdot \hat{x}},\omega_{N}\big]\big\|_{\tr} \leq C \veps^{-2} \;,
\label{eq:propcomm1}
\end{equation}
and
\begin{equation}
\sup _{t \in [0,T_{1}]} \sup_{z\in \R^{3}} X_{\Lambda}(z) \big \| \W_{z}^{(n)}(t) \big[ \veps \nabla,\omega_{N}\big] \big\|_{\tr}  \leq C \veps^{-2}\;.
\label{eq:propcomm2}
\end{equation}
Furthermore,
\begin{equation}\label{eq:H2}
\sup _{t \in [0,T_{1}]} \sup _{z \in \R^{3}} X_{\Lambda}(z) \big\|\W_{z}^{(n)}(t)  \, \omega_{N} \big\|_{\tr}  \leq C \veps^{-3} \;,
\end{equation}
and
\begin{equation}
\label{eq:assump_density}
\sup_{t\in [0,T_{1}]} \sup_{z\in \mathbb{R}^{3}} \tr\, \mathcal{W}^{(1)}_{z}(t) \omega_{N} \leq  C\varepsilon^{-3}\;.
\end{equation}
\end{assumption}
We shall refer to the first two estimates in Assumption \ref{ass:sc} as the local semiclassical structure of the initial datum.

We are now ready to state our main result. We shall separate the cases of non-relativistic and pseudo-relativistic fermions.
\begin{theorem}[Main Result: Non-Relativistic Case] \label{thm:main} Let $\omega_{N}$ be a rank-$N$ orthogonal projector on $L^{2}(\mathbb{R}^{3})$, satisfying Assumption \ref{ass:sc} for some $n\in \mathbb{N}$ and $T_{1} > 0$. Suppose that $V \in L^{1}(\R^{3})$ is such that
\begin{equation}
\label{eq:assV}
\sup _{\alpha: |\alpha| \leq 8 n}
\int _{\R^{3}} d p \, (1+|p|^{\mathrm{max}(4n,7)})\, \big| \partial_{p}^{\alpha} \widehat{V}(p)\big| < \infty \;.
\end{equation}
Let $\psi_{N} \in L^{2}_{\text{a}}(\mathbb{R}^{3N})$, such that:
\begin{equation}\label{eq:trdatum}
\| \gamma^{(1)}_{N} - \omega_{N} \|_{\tr} \leq C\varepsilon^{\delta} N \;,\qquad \text{for some $\delta >0$.}
\end{equation}
Let $\psi_{N,t} = e^{-iH_{N} t/\varepsilon} \psi_{N}$, with $H_{N}$ given by (\ref{eq:HNdefintro}), and let $\gamma^{(1)}_{N,t}$ be the reduced one-particle density matrix of $\psi_{N,t}$. Let $\omega_{N,t}$ be the solution of the time-dependent Hartree equation:
\begin{equation}\label{eq:H}
i\varepsilon \partial_{t} \omega_{N,t} = [ -\varepsilon^{2}\Delta + \rho_{t} * V, \omega_{N,t} ]\;,\qquad \omega_{N,0} = \omega_{N}\;.
\end{equation}
%
%
%
Then, there exist $0< T < T_1$ and $C > 0$, independent of $\veps$ and $N$ such that, for all $t\in [0,T]$:
\begin{equation}\label{eq:mainclaim}
\| \gamma^{(1)}_{N,t} - \omega_{N,t} \|_{\mathrm{HS}} \leq C \max\{ \varepsilon^{\frac{1}{2}}, \varepsilon^{\frac{\delta}{2}} \} N^{\frac{1}{2}}\;.
\end{equation}
%
\end{theorem}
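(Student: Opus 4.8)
The plan is to follow the Fock-space approach of \cite{BPS}: lift the many-body dynamics to the fermionic Fock space, conjugate with the Bogoliubov transformation $R_{\omega_N}$ associated with the Slater determinant $\omega_N$, and control the growth of the number of particles in the fluctuation dynamics $\mathcal{U}_N(t) = R_{\omega_{N,t}}^* e^{-iH_N t/\varepsilon} R_{\omega_N}$. The standard computation shows that $\| \gamma^{(1)}_{N,t} - \omega_{N,t} \|_{\mathrm{HS}}^2$ is bounded, up to constants and the initial-data error $\varepsilon^{\delta} N$ coming from \eqref{eq:trdatum}, by $\langle \xi_t, \mathcal{N} \xi_t \rangle$ where $\xi_t$ is the evolved fluctuation vector and $\mathcal{N}$ is the number operator; so the main task is a Grönwall estimate for $\partial_t \langle \xi_t, \mathcal{N} \xi_t \rangle$. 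Differentiating produces a commutator of $\mathcal{N}$ with the cubic and quartic terms of the generator of $\mathcal{U}_N$, and — exactly as in the mean-field case — the cubic term is the dangerous one: its contribution is bounded by $C\varepsilon^{-3} \|V\|_? \cdot \langle \xi_t, (\mathcal{N}+1)\xi_t\rangle$ \emph{only if} one has good control on operators of the form $\mathcal{W}^{(n)}_z(t)[e^{ip\cdot\hat{x}}, \omega_{N,t}]$ and $\mathcal{W}^{(n)}_z(t)[\varepsilon\nabla,\omega_{N,t}]$, integrated against $\widehat V(p)$. This is precisely where \eqref{eq:assV} enters (enough decay and derivatives of $\widehat V$ to absorb the weights $X_\Lambda(z)$ and the factors $(1+|p|)$).

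Concretely I would: (i) recall from Section~\ref{sec:fock} the explicit form of the generator $\mathcal{L}_N(t)$ of $\mathcal{U}_N(t)$, isolating the quadratic, cubic, and quartic pieces; (ii) write out $\frac{d}{dt}\langle \xi_t, \mathcal{N}\xi_t\rangle = \langle \xi_t, [i\mathcal{L}_N(t), \mathcal{N}]\xi_t\rangle$ and note the quadratic and quartic parts either commute with $\mathcal{N}$ or are handled by crude $L^1$-bounds on $V$ together with \eqref{eq:H2}; (iii) estimate the cubic term by pairing a creation/annihilation operator with a kernel built from $[e^{ip\cdot\hat x},\omega_{N,t}]$, Cauchy–Schwarz in the Fock space, and the operator-norm/trace-norm bounds of Theorem~\ref{thm:propcomm} applied to the \emph{solution} $\omega_{N,t}$ rather than the initial datum. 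The key input is Theorem~\ref{thm:propcomm}: it guarantees that for $t \in [0,T]$ (with $T$ possibly smaller than $T_1$) the bounds \eqref{eq:propcomm1}, \eqref{eq:propcomm2}, \eqref{eq:H2}, \eqref{eq:assump_density} continue to hold for $\omega_{N,t}$ in place of $\omega_N$, now for the freely-evolved weight $\mathcal{W}^{(n)}_z(t-s)$, with constants uniform in $\varepsilon, N, |\Lambda|$. Once the cubic term is bounded by $C\,\langle \xi_t,(\mathcal{N}+1)\xi_t\rangle$ (the dangerous $\varepsilon^{-3}$ being exactly compensated by the $\varepsilon^3$ in front of $V$ in \eqref{eq:HNdefintro} together with the density bound \eqref{eq:assump_density}), Grönwall gives $\langle \xi_t,\mathcal{N}\xi_t\rangle \leq C$ for $t\in[0,T]$, and then unwinding the Bogoliubov conjugation as in \cite{BPS} yields $\|\gamma^{(1)}_{N,t}-\omega_{N,t}\|_{\mathrm{HS}} \leq C\max\{\varepsilon^{1/2},\varepsilon^{\delta/2}\}N^{1/2}$.

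The main obstacle — and the reason $T$ is forced to be small in the non-relativistic case — is \emph{not} the Fock-space bookkeeping, which is by now routine, but the propagation of the local semiclassical structure, i.e. Theorem~\ref{thm:propcomm}. The difficulty is that the commutator bounds are localized in space through the weights $\mathcal{W}^{(n)}_z(t)$ and the factors $X_\Lambda(z)$, and the nonlinearity $\rho_t * V$ in \eqref{eq:H} couples the behavior of $\omega_{N,t}$ near $z$ to its density $\rho_t$ everywhere; propagating \eqref{eq:noblowup}/\eqref{eq:assump_density} — a pointwise-in-space bound on the particle density — along the Hartree flow is exactly the place where one loses uniformity in time and only gets small macroscopic times of order $1$ in $\varepsilon$. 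I would therefore treat Theorem~\ref{thm:propcomm} as a black box here (it is proved in Section~\ref{sec:propsc}) and present the argument above as a reduction of Theorem~\ref{thm:main} to it, emphasizing that every constant produced is independent of $N$ and $|\Lambda|$ precisely because the weighted trace-norm bounds of Assumption~\ref{ass:sc} are, and because the interaction strength $\varepsilon^3$ matches the local density $\varepsilon^{-3}$.
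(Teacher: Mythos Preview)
Your outline is the paper's strategy: reduce Theorem~\ref{thm:main} to a Gr\"onwall bound on $\langle\xi_t,\mathcal{N}\xi_t\rangle$ (this is Proposition~\ref{prp:fluct}), which in turn feeds on Theorem~\ref{thm:propcomm} via Corollary~\ref{cor:commF}; and you correctly locate the short-time restriction in the propagation of the density bound~\eqref{eq:assump_density}.

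There is, however, a concrete gap in your handling of the quartic contribution. You assert that ``the quadratic and quartic parts either commute with $\mathcal{N}$ or are handled by crude $L^1$-bounds on $V$ together with \eqref{eq:H2}'' and then conclude ``Gr\"onwall gives $\langle\xi_t,\mathcal{N}\xi_t\rangle\le C$''. Both statements are wrong. In the identity~\eqref{eq:derN} the surviving quartic term (term~I, schematically $\int V(x-y)\,a(\bar v_{t;x})a(\bar v_{t;y})a(u_{t;y})a(u_{t;x})$) changes particle number and cannot be dominated by $\langle\xi_t,\mathcal{N}\xi_t\rangle$; it is precisely the source of the \emph{inhomogeneous} term in the Gr\"onwall inequality. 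To bound it by $C\varepsilon^2 N$ (so that after dividing by $\varepsilon$ one gets $\partial_t\langle\xi_t,\mathcal{N}\xi_t\rangle \le C N\varepsilon + C\langle\xi_t,\mathcal{N}\xi_t\rangle$) the paper uses the convolution splitting $V(x-y)=\int dz\,V^{(1)}(x-z)V^{(2)}(z-y)$ of~\eqref{eq:Vsplit}--\eqref{def_Vj}, Cauchy--Schwarz in the auxiliary variable $z$, and then the \emph{localized} commutator bound of Corollary~\ref{cor:commF} in the form $\|[\omega_{N,t},V^{(j)}_z]\|_{\mathrm{tr}}\le C\varepsilon^{-2}X_\Lambda(z)^{-1}$, so that the remaining integral contributes $\int dz\,X_\Lambda(z)^{-2}\le C|\Lambda|=C\varepsilon^3 N$. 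This is not a crude $L^1$ estimate; it is exactly where the localized semiclassical structure and the weight $X_\Lambda$ do their work in the extended setting. The resulting inhomogeneity $N\varepsilon$ is what produces the $\varepsilon^{1/2}N^{1/2}$ in~\eqref{eq:mainclaim}; your ``$\le C$'' would yield only $\varepsilon^{\delta/2}N^{1/2}$, which is inconsistent with the $\max$ you quote at the end. (A minor additional point: because the effective equation here is Hartree and not Hartree--Fock, an extra quadratic term~IV appears in~\eqref{eq:derN}; it is harmless, bounded by $C\varepsilon^3 N$ via~\eqref{eq:propagaz2}, but should be accounted for.)
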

\begin{remark}\label{rmk:sizes_thm_one}
\begin{itemize}
\item[(i)] The $N$-dependence of the estimates (\ref{eq:trdatum}), (\ref{eq:mainclaim}) is the natural one. In particular, these estimates allow to quantify the closeness of the expectation values of extensive operators.

\item[(ii)] As discussed in the introduction, the result should be compared with the trivial estimates $\| \gamma_{N,t}^{(1)} \|_{\mathrm{HS}} \leq N^{1/2}$, $\|\omega_{N,t}\|_{\mathrm{HS}} = N^{1/2}$. Normalizing the Hilbert-Schmidt norm by $|\Lambda|^{1/2}$, and recalling that $N = \varrho |\Lambda|$ with $\varrho = O(\varepsilon^{-3})$, our main theorem proves convergence of the many-body evolution towards the nonlinear Hartree equation as the density goes to infinity, uniformly in the system size $|\Lambda|$.

\item[(iii)] We expect the Hartree--Fock equation to give a better approximation of the many-body quantum dynamics. However, the difference between the Hartree and the Hartree-Fock dynamics is smaller than the error 
on the r.h.s.~of (\ref{eq:mainclaim}) and thus it cannot be resolved with our present techniques. 
\end{itemize}
\end{remark}
The time $T>0$ appearing in Theorem \ref{thm:main} is related to the validity of a suitable non-concentration estimate for the solution of the time-dependent Hartree equation, which quantifies the number of particles in bounded regions of space. We prove this bound in Proposition \ref{prp:density} for short macroscopic times of order $1$ in $\varepsilon$.

Next, let us consider pseudo-relativistic fermions. In this case, we only need assumptions on norms of the initial projector $\omega_N$ and of its commutators, multiplied with the multiplication operator $\W_z^{(n)}$ (in the non-relativistic case, the assumptions involved instead the free evolution $\W_z^{(n)} (t)$ of $\W_z^{(n)}$). For this reason, in the next theorem we will only require Assumption \ref{ass:sc} to hold with $T_1=0$ (which implies $t=0$ in \eqref{eq:propcomm1} - \eqref{eq:assump_density}). The other important difference, compared with the non-relativistic case, is that, thanks to the boundedness of the group velocity of the particles, we can establish convergence towards Hartree dynamics, for all fixed times $t \in \bR$ (rather than only for short times).  
\begin{theorem}[Main Result: Pseudo-Relativistic Case]\label{thm:mainrel}
Let $\omega_{N}$ be a rank-$N$ orthogonal projector on $L^{2}(\mathbb{R}^{3})$, satisfying Assumption \ref{ass:sc} for some $n \in \mathbb{N}$ and for $T_{1} = 0$. Assume that $V \in L^{1}(\R^{3})$ is such that
\begin{equation}
\label{eq:assV-ps}
\sup _{\alpha: |\alpha| \leq 8 n}
\int _{\R^{3}} d p \, (1+|p|^{7})\, \big| \partial_{p}^{\alpha} \widehat{V}(p)\big| < \infty \;.
\end{equation}
%
%
Let $\psi_{N}$ be as in Theorem \ref{thm:main}, let $\psi_{N,t} = e^{-iH_{N}^{\text{rel}} t/\varepsilon} \psi_{N}$, with $H_{N}^{\text{rel}}$ given by Eq.~(\ref{eq:Hrel}),
and let $\gamma^{(1)}_{N,t}$ be the reduced one-particle density matrix of $\psi_{N,t}$. Let $\omega_{N,t}$ be the solution of the time-dependent pseudo-relativistic Hartree equation:
\begin{equation}
i\varepsilon \partial_{t} \omega_{N,t} = \big[ \sqrt{1-\varepsilon^{2}\Delta} + \rho_{t} * V, \omega_{N,t} \big]\;,\qquad \omega_{N,0} = \omega_{N}\;.
\end{equation}
Then, for all $t\in \mathbb{R}$:
\begin{equation}\label{eq:mainclaim2}
\| \gamma^{(1)}_{N,t} - \omega_{N,t} \|_{\mathrm{HS}} \leq C\exp(\exp Ct) \max\{ \varepsilon^{\frac{1}{2}}, \varepsilon^{\frac{\delta}{2}} \} N^{\frac{1}{2}}\;.
\end{equation}
\end{theorem}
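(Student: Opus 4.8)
\textbf{Proof proposal for Theorem~\ref{thm:mainrel}.}

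The plan is to follow the Fock-space strategy of \cite{BPS,BPS2}: embed the many-body state in fermionic Fock space, conjugate the dynamics by the Bogoliubov transformation $R_{\omega_{N,t}}$ built from the solution of the pseudo-relativistic Hartree equation, and control the number of ``fluctuation'' particles created by the fully interacting evolution relative to the Hartree evolution. More precisely, one writes $\psi_{N,t}$ as a state with few particles in the transformed representation, introduces the conjugated generator $\mathcal{L}_{N}(t) = (i\veps\,\partial_t R_{\omega_{N,t}})R_{\omega_{N,t}}^{*} + R_{\omega_{N,t}}\,\mathcal{H}_{N}^{\mathrm{rel}}\,R_{\omega_{N,t}}^{*}$, and computes a Gronwall estimate for $\langle \xi_{N,t}, \mathcal{N}\, \xi_{N,t}\rangle$, where $\mathcal{N}$ is the number operator and $\xi_{N,t} = R_{\omega_{N,t}}^{*}\,e^{-i\mathcal{H}_{N}^{\mathrm{rel}}t/\veps}\,R_{\omega_{N}}\,\xi_{N}$. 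The final bound $\|\gamma^{(1)}_{N,t}-\omega_{N,t}\|_{\mathrm{HS}} \lesssim (\langle \mathcal{N}\rangle_{\xi_{N,t}})^{1/2}$ (up to the initial-data error $\veps^{\delta}N$) then yields \eqref{eq:mainclaim2} once we show $\langle \mathcal{N}\rangle_{\xi_{N,t}} \le C\exp(\exp Ct)\,\max\{\veps, \veps^{\delta}\}\, N$.

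The differentiation of $\langle \xi_{N,t}, \mathcal{N}\,\xi_{N,t}\rangle$ produces, as in \cite{BPS2}, only the cubic and quartic (in creation/annihilation operators) pieces of $\mathcal{L}_{N}(t)$; the quadratic terms commute with $\mathcal{N}$ up to the error coming from the fact that $\omega_{N,t}$ solves the Hartree equation exactly, which cancels the leading quadratic contribution. Each remaining term is estimated by a combination of $\|[\veps\nabla,\omega_{N,t}]\|$-type and $\|[e^{ip\cdot\hat x},\omega_{N,t}]\|$-type bounds — i.e.\ exactly the \emph{local} semiclassical commutator norms of Assumption~\ref{ass:sc}, now needed for $\omega_{N,t}$ and not just $\omega_{N}$. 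This is where the relativistic case diverges from, and is in fact simpler than, the non-relativistic one: the propagation of the local semiclassical structure for the pseudo-relativistic Hartree flow must be established for \emph{all} times, and the key technical input is the boundedness of the group velocity $\nabla\sqrt{1-\veps^{2}\Delta}$, which is uniformly bounded by $1$. One first proves a non-concentration estimate $\sup_{z}\tr\,\mathcal{W}^{(1)}_{z}\,\omega_{N,t}\le C e^{Ct}\veps^{-3}$ by a Duhamel/transport argument: the free relativistic flow moves mass at speed $\le 1$, so a region of size $O(1)$ can only accumulate mass from an $O(1+t)$-neighborhood, and the nonlinear drift $\rho_{t}*V$ is a bounded perturbation controlled by $\|V\|_{1}\sup_z\tr\,\mathcal{W}^{(1)}_{z}\omega_{N,t}$, closing a Gronwall loop; then the commutator bounds \eqref{eq:propcomm1}–\eqref{eq:propcomm2} are propagated by a similar but more involved Duhamel expansion in which one tracks how $\mathcal{W}^{(n)}_{z}$ interacts with $e^{-it\sqrt{1-\veps^2\Delta}/\veps}$ and with the multiplication operator $\rho_t*V$ and its derivatives (this is where the assumption \eqref{eq:assV-ps} on $8n$ derivatives of $\widehat V$ enters). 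The double-exponential in $t$ arises because propagating the commutator bound requires the non-concentration estimate, which already grows exponentially, inside another Gronwall argument.

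The main obstacle, as the authors flag, is precisely the propagation of the \emph{localized} commutator bounds along the nonlinear relativistic Hartree flow. In the mean-field regime of \cite{BPS2} the analogous bounds are global (no weight $\mathcal{W}^{(n)}_{z}$), so one can simply differentiate $\|[\,\cdot\,,\omega_{N,t}]\|_{\mathrm{tr}}$ and absorb the nonlinear term using $\|\nabla^{k}(\rho_t*V)\|_{\infty}\le \|\rho_t\|_1\|\nabla^{k}V\|_{\infty}$ together with $\|\rho_t\|_1 = \veps^3 N$. Here the weight does not commute with $\omega_{N,t}$ nor with the Hartree Hamiltonian, so expanding the commutator $[\mathcal{W}^{(n)}_{z},\cdot]$ generates a hierarchy of terms with lower powers of the weight and extra factors of $\hat x - z$ and $\veps\nabla$, which must be reabsorbed using the other estimates in Assumption~\ref{ass:sc}; closing this system for all $t$ (rather than short $t$) is the heart of Section~\ref{sec:proprel}, and the finiteness of the group velocity is what makes it work. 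Once the propagated version of Assumption~\ref{ass:sc} is in hand, the Fock-space Gronwall estimate for $\langle\mathcal{N}\rangle_{\xi_{N,t}}$ is a direct adaptation of \cite[Section~3]{BPS2} with $-\veps^2\Delta$ replaced by $\sqrt{1-\veps^2\Delta}$ and all global norms replaced by their localized counterparts, and the conclusion \eqref{eq:mainclaim2} follows.
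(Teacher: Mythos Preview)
Your proposal is correct and follows essentially the same route as the paper: Fock-space fluctuation dynamics plus Gronwall on $\langle\mathcal{N}\rangle$, fed by the propagation of the localized semiclassical bounds along the relativistic Hartree flow (Theorem~\ref{thm:proprel}), with the bounded group velocity as the mechanism that makes this work for all times and produces the double exponential. Two small corrections: first, in the paper the non-concentration bound (Proposition~\ref{prop:W-rel-prop} and Corollary~\ref{crl-no-conc-rel}) is even simpler than you describe---since $\mathcal{W}^{(k)}_{z}$ and $V*\rho_{t}$ are both multiplication operators, they commute, so the nonlinear drift contributes \emph{nothing} to $\partial_{t}\,U_{\mathrm{rel}}^{*}\mathcal{W}^{(k)}_{z}U_{\mathrm{rel}}$ and only the kinetic commutator $[\mathcal{W}^{(k)}_{z},\sqrt{1-\veps^{2}\Delta}]$ enters the Gronwall; second, because the paper compares to Hartree rather than Hartree--Fock, a quadratic term (term~IV in \eqref{eq:derN}) does survive in $\partial_{t}\langle\mathcal{N}\rangle$, but it is subleading and harmless.
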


The rest of the paper is devoted to the proof of Theorems \ref{thm:main}, \ref{thm:mainrel}. The proof of the two theorems are very similar, except for the propagation of the local semiclassical structure. To fix notations, we work with non-relativistic particles. Only in Section \ref{sec:proprel} we come back to the pseudo-relativistic case. 
In Section \ref{sec:fock} we introduce the Fock space formalism, which will allow us to efficiently describe the fluctuation of the many-body evolution around the nonlinear effective dynamics as the creation and annihilation of particles around a suitable time-dependent state. The proof of Theorem \ref{thm:main} is given in Section \ref{sec:proofmain}. It relies on a bound for the growth of the number of fluctuations, proven in Proposition \ref{prp:fluct}. In turn, this result crucially relies on the propagation of the local semiclassical structure along the flow of the Hartree equation. This is established in Section \ref{sec:propsc} for non-relativistic fermions and in Section \ref{sec:proprel} for pseudo-relativistic fermions.

\section{Fock Space Representation}\label{sec:fock}

To prove Theorem \ref{thm:main}  we switch to a Fock space formulation of the problem.

\subsection{Second Quantization}
We define the fermionic Fock space $\mathcal{F}$ over $L^{2}(\mathbb{R}^{3})$ as:
\begin{align*}
\notag
\F = \mathbb{C} \oplus \bigoplus_{n\geq 1} \mathcal{F}^{(n)}\;,\qquad \F^{(n)}:= L_{\text{a}}^2(\mathbb{R}^{3n}) \;.
\end{align*}
Vectors in the Fock space correspond to infinite sequences of functions $(\psi^{(n)})_n$ with $\psi^{(n)} \in L_{\text{a}}^2(\mathbb{R}^{3n})$. A simple example is the vacuum state, $\Omega = (1, 0, \ldots, 0, \ldots)$.

Given $\psi \in \mathcal{F}$, $\psi = (\psi^{(0)}, \psi^{(1)}, \ldots, \psi^{(n)}, \ldots)$, we define the scalar product:
\begin{equation*}
\langle \psi_{1}, \psi_{2} \rangle = \sum_{ n \geq 0 } \langle \psi_{1}^{(n)},\psi_{2}^{(n)}  \rangle_{L^{2}(\R^{3n})} \;.
\end{equation*}
Equipped with this natural inner product, the Fock space $\mathcal{F}$ is a Hilbert space. We henceforth denote by $\| \cdot \|$ the norm induced by this inner product, and with a slight abuse of notation we shall use the same notation to denote the operator norm of linear operators acting on $\mathcal{F}$.

It is convenient to introduce creation and annihilation operators, acting on $\mathcal{F}$. Let $f\in L^{2}(\mathbb{R}^{3})$. We define the creation operator $a^{*}(f)$ and the annihilation operator $a(f)$ as
\begin{equation*}
\begin{split} 
\left(a^*(f) \psi\right)^{(n)}(x_1,\ldots, x_n) & := \frac{1}{\sqrt{n}} \sum_{j=1}^n (-1)^{j-1} f(x_j)\psi^{(n-1)}(x_1, \ldots,x_{j-1}, x_{j+1}, \ldots, x_n) \\
\left(a(f)\psi\right)^{(n)}(x_1,\ldots, x_n)&:= \sqrt{n+1} \int_{\R^{3}} d x \; \overline{f(x)} \, \psi^{(n+1)} (x, x_1,\ldots, x_n)\;,
\end{split}
\end{equation*}
for any $\psi \in \mathcal{F}$. 
From a physics viewpoint, the creation operator creates a particle with wave function $f$, while the annihilation operator annihilates a particle with wave function $f$. These definitions are supplemented by the requirement $a(f) \Omega = 0$.

It is not difficult to see that $a^{*}(f) = a(f)^{*}$. Also, the creation and annihilation operators satisfy the canonical anticommutation relations:
\begin{equation*}
\{ a(f), a(g) \} = \{ a^{*}(f), a^{*}(g) \} = 0\;,\qquad \{ a(f), a^{*}(g) \} = \langle f, g \rangle_{L^{2}(\mathbb{R}^{3})}\;.
\end{equation*}
These relations imply that $\|a(f)\| \leq \|f\|_{2}$, $\|a^{*}(f)\|\leq \|f\|_{2}$ (it is not difficult to see that these bounds are sharp, that is the norms of the operators are actually equal to $\|f\|_{2}$). In the following sections, we will also make use of operator-valued distributions, e.g., $a_x^*$ and $ a_x$ for $x \in \R^{3}$, such that
\begin{equation*}
a^* (f) = \int_{\R^{3}} d x \, f(x)\, a_x^* , \quad a(f) = \int_{\R^{3}} d x  \, \overline{f (x)}\, a_x\,. 
\end{equation*}
The creation and annihilation operators can be used to define the second quantization of observables. For instance, consider the number operator $\mathcal{N}$, acting on a given Fock space vector as $(\mathcal{N} \psi)^{(n)} = n \psi^{(n)}$. In terms of the operator-distributions, it can be written as:
\begin{equation*}
\cN = \int_{\R^{3}} d x \, a_x^* a_x \;.
\end{equation*}
More generally, for a given one-particle operator $J$ on $L^{2}(\mathbb{R}^{3})$ we define its second quantization $d\Gamma(J)$ as the operator on the Fock space acting as follows:
\begin{equation*}
 d\Gamma (J) \upharpoonright_{\F^{(n)}} = \sum_{j=1}^n J^{(j)}
\end{equation*}
where $J^{(j)} = \mathbf{1}^{\otimes (n-j)} \otimes J \otimes \mathbf{1}^{\otimes (j-1)}$. If $J$ has the integral kernel $J (x;y)$, we can write $d\Gamma (J)$ as:
\begin{equation*}
d\Gamma (J) = \int_{(\R^{3})^{2}} d x d y \, J (x;y) a_x^* a_y\;.
\end{equation*}
In the next lemma we collect some bounds for the second quantization of one-particle operators, that will play an important role in the proof of our main result. Their proofs can be found in \cite[Lemma 3.1]{BPS}.
\begin{lemma}
\label{lemma:bounds}
Let $J$ be a bounded operator on $L^{2}(\mathbb{R}^{3})$. We have, for any $\psi \in \mathcal{F}$:
\begin{equation*}
|\langle \psi, d\Gamma (J) \psi \rangle| \leq \| J \|_{\mathrm{op}} \langle \psi, \cN \psi \rangle \;, \qquad \left\|d\Gamma (J) \psi \right\| \leq \left\| J \right\|_{\mathrm{op}} \left\| \cN \psi\right\|\;.
\end{equation*}
Let $J$ be a Hilbert-Schmidt operator. We then have, for any $\psi \in \mathcal{F}$:
\begin{equation*}
\begin{split} 
\left\|d\Gamma (J) \psi \right\| &\leq \left\| J \right\|_{\mathrm{HS}} \left\| \mathcal{N}^{1/2} \psi \right\| \\
\left\| \int_{(\R^{3})^{2} } d x d x'\, J (x; x') a_x a_{x'} \psi \right\| & \leq \| J \|_{\mathrm{HS}} \left\| \mathcal{N}^{1/2} \psi \right\| \\
\left\| \int_{(\R^{3})^{2}} d x d x'\, J (x; x') a^*_x a^*_{x'} \psi \right\| & \leq 2\| J \|_{\mathrm{HS}} \left\| (\mathcal{N}+1)^{1/2} \psi \right\|\;. 
\end{split}
\end{equation*}
Let $J$ be a trace class operator. We then have, for any $\psi \in \mathcal{F}$:
\begin{equation*}
\begin{split}
\left\|d\Gamma (J) \psi \right\| &\leq 2  \left\| J \right\|_{\mathrm{tr}}\|\psi\| \\
\left\| \int_{(\R^{3})^{2}} d x d y\, J (x; x') a_x a_{x'} \psi \right\| & \leq 2  \left\| J \right\|_{\mathrm{tr}} \|\psi\| \\
\left\| \int_{(\R^{3})^{2}} d x d y\,  J (x; x') a^*_x a^*_{x'} \psi \right\| & \leq 2  \left\| J \right\|_{\mathrm{tr}} \|\psi\|  \;.
\end{split}
\end{equation*}
\end{lemma}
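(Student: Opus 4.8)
The plan is to reduce every one of these inequalities to two elementary inputs: the canonical anticommutation relations — in particular the sharp norm bounds $\|a(f)\| \le \|f\|_2$, $\|a^*(f)\| \le \|f\|_2$ recorded above — together with the Schmidt (singular-value) decomposition $J = \sum_k \lambda_k |u_k\rangle\langle v_k|$, where $\lambda_k \ge 0$ and $\{u_k\}$, $\{v_k\}$ are orthonormal systems in $L^2(\R^3)$, so that $\sum_k \lambda_k^2 = \|J\|_{\mathrm{HS}}^2$ when $J$ is Hilbert--Schmidt and $\sum_k \lambda_k = \|J\|_{\mathrm{tr}}$ when $J$ is trace class. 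Substituting this into the kernel representations yields the expansions $d\Gamma(J) = \sum_k \lambda_k\, a^*(u_k)a(v_k)$, $\int J(x;x')\,a_x a_{x'} = \sum_k \lambda_k\, a(\overline{u_k})a(v_k)$ and $\int J(x;x')\,a_x^* a_{x'}^* = \sum_k \lambda_k\, a^*(u_k)a^*(\overline{v_k})$. I would also record once the identity $\sum_k \|a(v_k)\psi\|^2 = \langle\psi, d\Gamma(P_v)\psi\rangle \le \langle\psi,\cN\psi\rangle$, valid because $P_v \le \mathbf{1}$, where $P_v$ is the orthogonal projection onto $\overline{\mathrm{span}}\{v_k\}$.

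\textbf{Bounded and trace-class cases.} For $J$ bounded no Schmidt decomposition is needed: on the $n$-particle sector $d\Gamma(J)\upharpoonright_{\mathcal{F}^{(n)}} = \sum_{j=1}^n J^{(j)}$ with $\|J^{(j)}\|_{\mathrm{op}} = \|J\|_{\mathrm{op}}$, so $|\langle\psi^{(n)},d\Gamma(J)\psi^{(n)}\rangle| \le n\|J\|_{\mathrm{op}}\|\psi^{(n)}\|^2$ and $\|d\Gamma(J)\psi^{(n)}\| \le n\|J\|_{\mathrm{op}}\|\psi^{(n)}\|$; summing over $n$ and using orthogonality of the sectors gives $|\langle\psi,d\Gamma(J)\psi\rangle| \le \|J\|_{\mathrm{op}}\langle\psi,\cN\psi\rangle$ and $\|d\Gamma(J)\psi\|^2 = \sum_n \|d\Gamma(J)\psi^{(n)}\|^2 \le \|J\|_{\mathrm{op}}^2 \sum_n n^2\|\psi^{(n)}\|^2 = \|J\|_{\mathrm{op}}^2\|\cN\psi\|^2$. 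For $J$ trace class, in each of the three expansions above both factors $a^\sharp(u_k)$ and $a^\sharp(v_k)$ have operator norm at most $1$ (recall $\|\overline{u_k}\|_2 = \|u_k\|_2 = 1$), so the operator is bounded in norm by $\sum_k \lambda_k = \|J\|_{\mathrm{tr}} \le 2\|J\|_{\mathrm{tr}}$, applied to any $\psi$.

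\textbf{Hilbert--Schmidt case.} For $d\Gamma(J)$ and for $\int J a_x a_{x'}$ I would argue identically: by the triangle inequality, $\|a^\sharp(u_k)\|_{\mathrm{op}} \le 1$ and Cauchy--Schwarz in $k$, one gets $\|\sum_k \lambda_k a^\sharp(u_k)a(v_k)\psi\| \le \sum_k \lambda_k \|a(v_k)\psi\| \le (\sum_k \lambda_k^2)^{1/2}(\sum_k \|a(v_k)\psi\|^2)^{1/2} \le \|J\|_{\mathrm{HS}}\|\cN^{1/2}\psi\|$, using the recorded identity for the last sum. The remaining bound, on the pair-creation operator $\int J a_x^* a_{x'}^*$, is the one step where the direct route fails: Cauchy--Schwarz in $k$ after pulling out $a^*(u_k)$ would require bounding $\sum_k \|a^*(\overline{v_k})\psi\|^2 = \sum_k(\|\psi\|^2 - \|a(\overline{v_k})\psi\|^2)$, which diverges as soon as $J$ has infinite rank. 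The way around it is a duality argument: $\int J(x;x')\,a_x^* a_{x'}^* = \big(\int K(x;x')\,a_x a_{x'}\big)^*$ with $K(x;x') := \overline{J(x';x)}$, hence $\|K\|_{\mathrm{HS}} = \|J\|_{\mathrm{HS}}$; the operator $\int K a_x a_{x'}$ maps $\mathcal{F}^{(m+2)}$ into $\mathcal{F}^{(m)}$ with operator norm at most $\|K\|_{\mathrm{HS}}\sqrt{m+2}$ by the $aa$-bound just proven, and since passing to the adjoint preserves the operator norm between these two sectors one obtains $\|(\int K a_x a_{x'})^*\psi^{(m)}\| \le \|J\|_{\mathrm{HS}}\sqrt{m+2}\,\|\psi^{(m)}\|$ for every $m$; summing over $m$, $\|\int J a_x^* a_{x'}^* \psi\|^2 \le \|J\|_{\mathrm{HS}}^2 \sum_m (m+2)\|\psi^{(m)}\|^2 = \|J\|_{\mathrm{HS}}^2\langle\psi,(\cN+2)\psi\rangle \le 2\|J\|_{\mathrm{HS}}^2\|(\cN+1)^{1/2}\psi\|^2$, even slightly stronger than claimed.

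\textbf{Main obstacle.} I expect the pair-creation Hilbert--Schmidt bound to be the only genuine obstacle; once the $aa$-bound is in place the duality reduction disposes of it, and everything else is bookkeeping with the Schmidt decomposition and the elementary anticommutator estimates. An alternative to the duality step is to expand $\|\int J a_x^* a_{x'}^*\psi\|^2 = \langle\psi, B^*B\psi\rangle$ and normal-order $B^*B$, producing scalar terms of size $\|J\|_{\mathrm{HS}}^2$, quadratic terms controlled by $\langle\psi,\cN\psi\rangle$ via the bounded case, and one nonnegative quartic term bounded by the $aa$-estimate; the duality argument is shorter, and I would prefer it.
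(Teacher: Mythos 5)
Your proof is correct, and it is worth noting at the outset that the paper itself does not prove Lemma \ref{lemma:bounds}: it defers entirely to \cite[Lemma 3.1]{BPS}, so the comparison is with that standard argument rather than with anything in the present text. Your skeleton — singular value decomposition $J=\sum_k\lambda_k|u_k\rangle\langle v_k|$, the CAR bounds $\|a^\sharp(f)\|\leq\|f\|_2$, the identity $\sum_k\|a(v_k)\psi\|^2=\langle\psi,d\Gamma(P_v)\psi\rangle\leq\langle\psi,\cN\psi\rangle$, and Cauchy--Schwarz in $k$ — is exactly the standard route for the bounded, trace-class, and the two "easy" Hilbert--Schmidt bounds ($d\Gamma(J)$ and the $aa$ term), and all of these steps check out (your trace-class constant $1$ and your sector-wise bookkeeping for the bounded case are fine). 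The only place requiring an idea is the pair-creation Hilbert--Schmidt estimate, and you correctly diagnose why the naive Cauchy--Schwarz in $k$ breaks down for infinite-rank $J$ (the sum $\sum_k\|a^*(\overline{v_k})\psi\|^2$ diverges). Your resolution by duality — viewing $\int J\,a^*_xa^*_{x'}$ sector by sector as the adjoint of an $aa$ operator with kernel $K(x;x')=\overline{J(x';x)}$, so that $\|\int J\,a^*a^*\psi\|^2\leq\|J\|_{\mathrm{HS}}^2\langle\psi,(\cN+2)\psi\rangle$ — is a complete and legitimate argument, and it even yields the constant $\sqrt{2}$ in place of the stated $2$; the more commonly written-out route (and the one you sketch as an alternative) obtains the $(\cN+1)^{1/2}$ bound by normal-ordering $B^*B$ and absorbing the contraction terms, which is longer but produces the same conclusion. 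In short: correct proof, same strategy in outline as the cited reference, with a cleaner duality shortcut for the one genuinely non-trivial bound.
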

Given a Fock space vector $\psi \in \mathcal{F}$, we define its reduced one-particle density matrix as the non-negative trace class operator $\gamma^{(1)}_{\psi}$ on $L^{2}(\mathbb{R}^{3})$ with integral kernel: 
\begin{equation}
\label{eq:gamma-FS} 
\gamma_\psi^{(1)} (x;y) = \langle \psi, a_y^* a_x \psi \rangle \;.
\end{equation}
If $\psi$ is an $N$-particle state, it is not difficult to check that this definition agrees with (\ref{eq:gamma11}). Furthermore, given a one-particle observable $J$, we have:
\begin{equation}
\label{expectation_one_body}
\langle \psi,d\Gamma (J) \psi \rangle = \int_{(\R^{3})^{2}} d x d y \, J (x;y) \, \langle \psi, a_x^* a_y \psi \rangle = \tr \, J \gamma^{(1)}_\psi\;,
\end{equation}
an identity which motivates the definition \eqref{eq:gamma-FS}. In particular, $\tr \, \gamma^{(1)}_\psi = \langle \psi , \cN \psi \rangle$ is the expected number of particles in $\psi$. 

Next, we lift the many-body Hamiltonian to the Fock space, as follows. We define the second quantization of $H_{N}$ as $\mathcal{H}_{N} = 0 \oplus \bigoplus_{n\geq 1} \mathcal{H}_{N}^{(n)}$, where
\begin{equation*}
\mathcal{H}_{N}^{(n)} = \sum_{i=1}^{n} -\varepsilon^{2} \Delta_{i} + \varepsilon^{3} \sum_{i<j}^{n} V(x_{i} - x_{j})\;.
\end{equation*}
In terms of the operator-valued distributions, we can write:
\begin{equation}
\label{eq:FocKHN} 
\cH_{N} = \veps^2 \int_{\R^{3}} d x \,  \nabla  a_{x}^{*} \nabla a_{x} + \frac{\veps^{3}}{2} \int_{(\R^{3})^{2}} d x d y \, V(x-y) a_{x}^{*} a_{y}^{*} a_{y} a_{x} \;.
\end{equation}
The time evolution of a state in the Fock space is defined as $\psi_{t} = e^{-i\mathcal{H}_{N} t / \varepsilon} \psi$. On $N$-particle states, this coincides with the solution of the Schr\"odinger equation (\ref{eq:schr0}). 
\subsection{Bogoliubov Transformations}\label{sec:Bogtra}
The Fock-space representation of the problem is particularly convenient also in view of the representation of Slater determinants via Bogoliubov transformations, defined here. 

Given an orthonormal family $(f_{i})_{i=1}^{N}$, the corresponding Slater determinant can be represented in Fock space as follows:
\begin{equation*}
a^{*}(f_{1}) \cdots a^{*}(f_{N}) \Omega = \Big( 0, 0, \ldots, 0, f_{1}\wedge \cdots \wedge f_{N}, 0, \ldots \Big) 
\end{equation*}
where the only nontrivial entry is the $N$-th. A crucial fact for our analysis, as was the case in previous work starting from \cite{BPS}, is the existence of a unitary operator $R_{\omega_{N}} : \F \to \F$ with the following two properties: 
\begin{equation*}
R_{\omega_{N}} \Omega = a^* (f_1) \dots a^* (f_N) \Omega
\end{equation*}
and, for any $g \in L^{2}(\R^{3})$,
\begin{equation}
\label{eq:bog} 
R^*_{\omega_{N}} a (g) R_{\omega_{N}} = a (u_{N} g) + a^* (\overline{v}_{N} \overline{g})
\end{equation}
where $u_{N} = 1- \omega_{N}$ and $v_{N} = \sum_{j=1}^N | \overline{f_j} \rangle \langle f_j|$. Equivalently, consider an orthonormal basis $(f_{i})_{i\geq 1}$ of $L^{2}(\mathbb{R}^{3})$ obtained completing the orthonormal family $f_{1}, \ldots, f_{N}$ in an arbitrary way. We have:
\begin{equation}\label{eq:Bogo}
\begin{split}
 R^*_{\omega_{N}} a(f_j) R_{\omega_{N}}= \begin{cases}
 a (f_j) & \text{if } \quad j >N \\
 a^{*}(f_{j})  & \text{if } \quad j  \leq N\;.
\end{cases}
\end{split}
\end{equation}
It follows that $R_{\omega_{N}}^{*} = R_{\omega_{N}} = R_{\omega_{N}}^{-1}$. The map $R_{\omega_{N}}$ is known as Bogoliubov transformation, and it acts as a particle-hole transformation. It allows us to switch to a new representation of the system, where the new vacuum is the Slater determinant associated to the reduced density $\omega_{N}$. The new creation operators, given by (\ref{eq:Bogo}), create excitations around the Slater determinant, which are either particles outside the determinant or holes in it. The proof of the existence of the unitary operator $R_{\omega_{N}}$ with the properties listed above can be found, for example, in \cite{Solovej}.

More generally, for every $t \in \bR$, we can associate a Bogoliubov transformation $R_{\omega_{N,t}}$ to the solution of the time-dependent Hartree equation (\ref{eq:H}). Then $R_{\omega_{N,t}} \Omega$ is the Slater determinant with reduced one-particle density matrix $\omega_{N,t}$ and 
\[ R_{\omega_{N,t}}^* a(g) R_{\omega_{N,t}} = a (u_{N,t} g) + a^* (\bar{v}_{N,t} \bar{g}) \]
with $u_{N,t}, v_{N,t}$ defined similarly as $u_N, v_N$ after (\ref{eq:bog}).

\section{Proof of the Main Result}\label{sec:proofmain}
Here we shall prove our main result, Theorem \ref{thm:main}. It will be a corollary of an estimate for the growth of the number operator evolved with a suitable fluctuation dynamics, Proposition \ref{thm:propcomm}, proven in the next section.
\subsection{Bound on the Growth of Fluctuations}\label{sec:bdfluct}
Let $\omega_{N,t}$ be the solution of the time-dependent Hartree equation. We introduce the {\it fluctuation dynamics} 
\begin{equation}
\label{eq:fluc} 
\mathcal{U}_{N}(t;s) := R_{\omega_{N,t}}^{*} \ee^{-\ii\cH_{N} (t - s)/\veps} R_{\omega_{N,s}}\;.
\end{equation}
Given an $N$-particle state $\psi$, we define the corresponding fluctuation vector $\xi = R^{*}_{\omega_{N}} \psi$. Then, we rewrite the many-body evolution of $\psi$ as 
\begin{equation*}
\psi_{t} = e^{-i\mathcal{H}_{N} t / \varepsilon} R_{\omega_{N}} \xi = R_{\omega_{N,t}} \mathcal{U}_{N}(t;0) \xi\;.
\end{equation*}
To show our main theorem we need to prove that, for $N$-particle initial data $\psi$ close to the Slater determinant with reduced one-particle density matrix $\omega_N$, the evolution $\psi_t$ remains close to the Slater determinant with reduced one-particle density matrix $\omega_{N,t}$. This will follow, if we can control the growth of the expectation of the number of particles 
\begin{equation}\label{eq:contro-N}
\langle \mathcal{U}_{N}(t;0) \xi, \mathcal{N} \mathcal{U}_{N}(t;0) \xi \rangle\;.
\end{equation}
To reach this goal, a key ingredient is the propagation of the semiclassical structure, introduced in Assumption \ref{ass:sc}, along the flow of the Hartree equation. This is the content of the next theorem.
\begin{theorem}[Propagation of the Local Semiclassical Structure]\label{thm:propcomm} Under the same assumptions of Theorem \ref{thm:main}, the following is true. There exists $C >0$ and $T>0$ such that:
\begin{equation}
\label{eq:propagaz2}
\sup _{t \in [0,T]} \sup _{z \in \R^{3}} X_{\Lambda}(z) \big\|\W_{z}^{(n)}  \, \omega_{N,t} \big\|_{\tr}  \leq C \veps^{-3} \;,
\end{equation}
and
\begin{equation}\label{eq:propagazione}
\sup _{t \in [0,T]} \sup_{p: |p|\leq \varepsilon^{-1}} \sup _{z\in \R^{3}}
 \; \frac{X_{\Lambda}(z)}{1+|p|}
\Big \| \W_{z}^{(n)} \,\big[e^{i p \cdot \hat{x}}, \omega_{N,t}\big] \Big \|_{\tr}  \leq  C \varepsilon^{-2} \;,
\end{equation}
\end{theorem}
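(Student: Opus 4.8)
The plan is to prove the two bounds of Theorem~\ref{thm:propcomm} by a Gr\"onwall argument in time, running simultaneously on a whole family of weighted commutator- and trace-norm quantities. Write $\omega_{N,t}$ for the solution of the Hartree equation \eqref{eq:H}, whose propagator $\mathcal{U}_H(t)$ is generated by the (time-dependent) Hartree Hamiltonian $h_t := -\varepsilon^2\Delta + \rho_t * V$; then $\omega_{N,t} = \mathcal{U}_H(t)\,\omega_N\,\mathcal{U}_H(t)^*$. I would differentiate the quantities
\[
Q_W(t) := \sup_z X_\Lambda(z)\,\bigl\|\W_z^{(n)}\,\omega_{N,t}\bigr\|_{\tr}\,,\qquad
Q_{\mathrm{comm}}(t) := \sup_{|p|\leq\varepsilon^{-1}}\sup_z \frac{X_\Lambda(z)}{1+|p|}\bigl\|\W_z^{(n)}\,[e^{ip\cdot\hat x},\omega_{N,t}]\bigr\|_{\tr}
\]
in $t$, and show the time-derivative of each is bounded by a polynomial in $(Q_W, Q_{\mathrm{comm}})$ (and in the analogous quantity for the momentum commutator $[\varepsilon\nabla,\omega_{N,t}]$), times an $\varepsilon$-independent constant coming from the assumption \eqref{eq:assV} on $V$. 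Since Assumption~\ref{ass:sc} gives $Q_W(0)\leq C\varepsilon^{-3}$ and $Q_{\mathrm{comm}}(0)\leq C\varepsilon^{-2}$, a Gr\"onwall/continuity-in-time argument then yields the bounds on $[0,T]$ for some $\varepsilon$-independent $T>0$, which is exactly where the short-time restriction enters (the nonlinearity $\rho_t*V$ is controlled only for small $t$, unlike the relativistic case).

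Concretely, for $Q_{\mathrm{comm}}$ one computes $\partial_t[e^{ip\cdot\hat x},\omega_{N,t}] = \tfrac{1}{i\varepsilon}[e^{ip\cdot\hat x},[h_t,\omega_{N,t}]]$ and expands via the Jacobi identity into $[h_t,[e^{ip\cdot\hat x},\omega_{N,t}]] + [[e^{ip\cdot\hat x},h_t],\omega_{N,t}]$. The first term, once conjugated by the weight $\W_z^{(n)}$, should be handled by moving $\W_z^{(n)}$ past the unitary $e^{-ih_t s/\varepsilon}$ using a Duhamel expansion: the kinetic part $e^{is\varepsilon\Delta}$ turns $\W_z^{(n)}$ into $\W_z^{(n)}(s)$ (the free-evolved weight, cf.~\eqref{free_evolution_x}), for which one needs a comparison estimate between $\W_z^{(n)}$ and $\W_z^{(n)}(s)$ valid for $|s|\lesssim 1$; the potential part contributes commutators $[\rho_s*V, \cdot]$ which are controlled by regularity of $\widehat V$ together with the density bound. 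The second term, $[[e^{ip\cdot\hat x},h_t],\omega_{N,t}] = [[e^{ip\cdot\hat x}, -\varepsilon^2\Delta],\omega_{N,t}] + [[e^{ip\cdot\hat x},\rho_t*V],\omega_{N,t}]$, splits into a kinetic piece that is linear in $\varepsilon\nabla$ applied to $e^{ip\cdot\hat x}$ (hence produces the factor $1+|p|$ and reduces to the momentum-commutator quantity), and a potential piece $[(\rho_t*V)(\hat x) - (\rho_t*V)(\hat x+\varepsilon p\,\cdot), \omega_{N,t}]$ — wait, more precisely $[e^{ip\cdot\hat x},\rho_t*V] = (\text{shifted }\rho_t*V - \rho_t*V)e^{ip\cdot\hat x}$ — which one Taylor-expands in $p$ and estimates using $\|\nabla(\rho_t*V)\|_\infty$, itself bounded by $\|\nabla V\|_1 \|\rho_t\|_\infty$ and hence by the propagated density bound. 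Throughout, the weighted trace-norm bookkeeping — inserting and removing factors of $\W_z^{(n)}$, $X_\Lambda(z)$, and absorbing the $z$-dependence with the Cauchy--Schwarz / almost-orthogonality trick on overlapping spatial cells — is the routine but heavy part.

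For $Q_W(t)$, one similarly differentiates $\W_z^{(n)}\omega_{N,t}$ in $t$, using $\partial_t\omega_{N,t} = \tfrac{1}{i\varepsilon}[h_t,\omega_{N,t}]$, and bounds $\|\W_z^{(n)}[h_t,\omega_{N,t}]\|_{\tr}$; the kinetic commutator $[-\varepsilon^2\Delta,\omega_{N,t}] = -\varepsilon\nabla\cdot(\varepsilon\nabla\omega_{N,t}) - (\varepsilon\nabla\omega_{N,t})\cdot\varepsilon\nabla$ is handled by the momentum-commutator quantity plus integration by parts against the weight (the gradient hitting $\W_z^{(n)}$ is harmless since $\nabla\W_z^{(n)}$ decays like $\W_z^{(n)}$), and the potential commutator is controlled as above by $\|\nabla(\rho_t*V)\|_\infty$. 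This closes the system provided one also tracks $\sup_z X_\Lambda(z)\|\W_z^{(n)}[\varepsilon\nabla,\omega_{N,t}]\|_{\tr}$ via the same mechanism. The single genuinely hard input is the non-concentration estimate on the density $\rho_t(x)=\varepsilon^3\omega_{N,t}(x;x)$ — equivalently the propagation of \eqref{eq:assump_density} — which controls $\|\rho_t\|_\infty$ and therefore $\|\nabla(\rho_t*V)\|_\infty$ that appears as a coefficient in every Gr\"onwall step; this is precisely Proposition~\ref{prp:density} and is what limits the conclusion to short times in the non-relativistic case, and it is the part I expect to be the main obstacle (it presumably requires a separate argument interpolating the trace-norm bound \eqref{eq:H2} with a Hilbert--Schmidt- or Sobolev-type bound on $\W_z^{(n)}\omega_{N,t}$, closed again by continuity in $t$).
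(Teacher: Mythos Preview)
Your strategy has two genuine gaps that the paper's proof is specifically designed to avoid. First, in your Jacobi decomposition the source term $[[e^{ip\cdot\hat x},-\varepsilon^2\Delta],\omega_{N,t}]$ equals $e^{ip\cdot\hat x}(2i\varepsilon^2 p\cdot\nabla-\varepsilon^2|p|^2)$ commuted with $\omega_{N,t}$, and after using the Leibniz rule this produces, besides the good term $2i\varepsilon\, e^{ip\cdot\hat x}p\cdot[\varepsilon\nabla,\omega_{N,t}]$, a term $[e^{ip\cdot\hat x},\omega_{N,t}]\,(2i\varepsilon^2 p\cdot\nabla-\varepsilon^2|p|^2)$. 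After dividing by $\varepsilon$ this carries an \emph{unbounded} factor $\varepsilon p\cdot\nabla$ on the right of the commutator, so its weighted trace norm is not controlled by $Q_{\mathrm{comm}}$ and your Gr\"onwall does not close. The paper removes exactly this term by conjugating with a \emph{modified} Hartree dynamics $U_q(t;s)$ whose generator includes the extra piece $i\varepsilon^2 q\cdot\nabla$; with this choice one gets the clean Duhamel identity \eqref{eq:BPS}--\eqref{eq:BPS2} in which the only source is the anticommutator $\{e^{iq\cdot\hat x},\varepsilon q\cdot[\varepsilon\nabla,\omega_{N,\tau}]\}$. (Incidentally, $[e^{ip\cdot\hat x},\rho_t*V]=0$ since both are multiplication operators, so your potential source term simply vanishes.) The same unboundedness problem appears in your proposed differentiation of $Q_W$: the kinetic piece $\tfrac{1}{i\varepsilon}\W_z^{(n)}[-\varepsilon^2\Delta,\omega_{N,t}]$ contains $\W_z^{(n)}\nabla\cdot[\varepsilon\nabla,\omega_{N,t}]$, and moving $\nabla$ onto $\W_z^{(n)}$ still leaves a term $\nabla\,\W_z^{(n)}[\varepsilon\nabla,\omega_{N,t}]$ whose trace norm is not controlled. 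The paper bypasses this entirely: the bound \eqref{eq:propagaz2} is a one-line consequence of Proposition~\ref{prp:propU} (which propagates the weight through $U(t;0)$) together with Lemma~\ref{lem:mon}.

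Second, your proposed ``comparison estimate between $\W_z^{(n)}$ and $\W_z^{(n)}(s)$ valid for $|s|\lesssim 1$'' does not exist: as operators these differ by powers of the unbounded $\varepsilon\nabla$, so no two-sided bound holds. This is why Assumption~\ref{ass:sc} is formulated with the \emph{free-evolved} weights $\W_z^{(n)}(t)$ for all $t\in[0,T_1]$, and why the paper runs the Gr\"onwall not on $Q_{\mathrm{comm}}(t)$ but on the larger family $\sup_{s\in[t,T]}\sup_{|q|\le\varepsilon^{-1}}\sup_z \frac{X_\Lambda(z)}{1+|q|}\|\W_z^{(n)}(s-t)[e^{iq\cdot\hat x},\omega_{N,t}]\|_{\tr}$: when the weight is pushed through $U_q(\tau;t)^*$ via Proposition~\ref{prp:propU} it becomes $\W_{z'}^{(n)}(s-\tau)$, and only by carrying the free-evolution parameter $s$ can one close the estimate and match the initial-data assumptions at $\tau=0$. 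Your density/non-concentration input (Proposition~\ref{prp:density}) is correctly identified as the short-time bottleneck, but it feeds in through Proposition~\ref{prp:propU} (via Lemma~\ref{lem:conj} and Remark~\ref{rem:F}), not directly as a coefficient in a differentiated norm inequality.
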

\begin{remark}
\begin{itemize}
\item[(i)] The constant $C$ is a priori explicit. It is independent of $|\Lambda|$ and of $\varepsilon$, and it depends on $T$.
\item[(ii)] The requirement that $\omega_{N}$ is a rank-$N$ projection is not needed in this theorem. It could be replaced by $0\leq \omega_{N} \leq 1$, $\tr\, \omega_{N} = N$.
\end{itemize}
\end{remark}
The proof of Theorem \ref{thm:propcomm} is postponed to Section \ref{sec:propsc}. From Theorem \ref{thm:propcomm}, we obtain the following corollary, which will be used to control the growth of the expectation (\ref{eq:contro-N}) and thus to prove our main result, Theorem \ref{thm:main}. 
\begin{corollary}[Bounds for Commutators with Regular Functions]
\label{cor:commF}
Under the same assumptions of Theorem \ref{thm:propcomm}, the following is true. Let:
\begin{equation}\label{eq:Fass}
F(x) = \int_{\mathbb{R}^{3}} dp\, e^{ip \cdot x} \hat F(p)\;,\qquad  \int dp\, (1 + |p|) \big| \partial_{p}^{k} \hat F(p) \big| \leq C \quad \text{for all $k \leq 8 n$.}
\end{equation}
Let $F_{z}(x) = F(x-z)$. Then, the following bound holds true:
\begin{equation}\label{eq:commF}
\sup _{t \in [0,T]} \sup _{z \in \R^{3}} X_{\Lambda}(z) \Big\| \big[ \omega_{N,t}, F_{z}(\hat x) \big]\Big\|_{\tr} \leq C \veps^{-2} \;.
\end{equation}
\end{corollary}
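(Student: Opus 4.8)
The plan is to deduce Corollary~\ref{cor:commF} from Theorem~\ref{thm:propcomm} by writing $F_z(\hat x)$ as a superposition of the exponentials $e^{ip\cdot\hat x}$ for which we already control commutators with $\omega_{N,t}$, and then by handling separately the low-momentum part $|p|\le\varepsilon^{-1}$ (covered directly by \eqref{eq:propagazione}) and the high-momentum tail $|p|>\varepsilon^{-1}$ (controlled by the decay of $\hat F$ together with the trivial bound $\|[\omega_{N,t},e^{ip\cdot\hat x}]\|_{\tr}\le 2\|\omega_{N,t}\|_{\tr}=2N$). Concretely, since $F(x)=\int dp\, e^{ip\cdot x}\hat F(p)$, we have $F_z(\hat x)=\int dp\, e^{ip\cdot z}e^{ip\cdot\hat x}\hat F(p)$, whence
\begin{equation*}
\big\|[\omega_{N,t},F_z(\hat x)]\big\|_{\tr}\le\int_{|p|\le\varepsilon^{-1}} dp\,\big|\hat F(p)\big|\,\big\|[\omega_{N,t},e^{ip\cdot\hat x}]\big\|_{\tr}+\int_{|p|>\varepsilon^{-1}} dp\,\big|\hat F(p)\big|\,\big\|[\omega_{N,t},e^{ip\cdot\hat x}]\big\|_{\tr}\;.
\end{equation*}

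For the first term I would insert a factor $1$ written as $\W_z^{(n)}+(1-\W_z^{(n)})$ into the trace norm, but actually the cleaner route is: the operator $\W_z^{(n)}$ is bounded below away from zero only near $z$, so instead I would use a partition-of-unity/telescoping argument in $z$ — however, a simpler observation suffices here. Note that $(\W_z^{(n)})^{-1}=1+|\hat x-z|^{4n}$ is an (unbounded) multiplication operator, so $[\omega_{N,t},e^{ip\cdot\hat x}]=(\W_z^{(n)})^{-1}\W_z^{(n)}[\omega_{N,t},e^{ip\cdot\hat x}]$ is not directly useful. The intended mechanism must instead be that $F_z$ is already concentrated near $z$: writing $F_z(\hat x)=F_z(\hat x)\,(1+|\hat x-z|^{4n})\,\W_z^{(n)}$ formally, and since $|F(x)|(1+|x|^{4n})$ is bounded (this is exactly what the hypothesis $\int dp\,(1+|p|)|\partial_p^k\hat F(p)|\le C$ for $k\le 8n$ delivers, by integrating by parts $4n$ times in $p$ to gain $|x|^{-4n}$ decay on $F$, with the extra factor $(1+|p|)$ giving absolute convergence — in fact one needs $4n$ derivatives to beat $|x|^{4n}$ and one keeps derivatives up to $8n$ for later use), we have $F_z(\hat x)(1+|\hat x-z|^{4n})=G_z(\hat x)$ with $\|G\|_\infty\le C$. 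Hence $[\omega_{N,t},F_z(\hat x)]=[\omega_{N,t},G_z(\hat x)\W_z^{(n)}]$ and one expands the commutator, using $\|\W_z^{(n)}\omega_{N,t}\|_{\tr}\le C\varepsilon^{-3}X_\Lambda(z)^{-1}$ from \eqref{eq:propagaz2} and a companion bound for $\|\W_z^{(n)}[\omega_{N,t},G_z(\hat x)]\|_{\tr}$ obtained by decomposing $G_z$ back into exponentials and applying \eqref{eq:propagazione}.

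So the key steps, in order, are: (1) record the Fourier representation $F_z(\hat x)=\int dp\, e^{ip\cdot z}\hat F(p)\,e^{ip\cdot\hat x}$ and, via integration by parts in $p$, the pointwise decay $|F(x)|\le C(1+|x|)^{-4n-\sigma}$ for some $\sigma>0$ coming from the $(1+|p|)$ weight, which also yields $\|\W_z^{(n)}{}^{-1}F_z\|_\infty\le C$; (2) split the $p$-integral at $|p|=\varepsilon^{-1}$, bound the tail using $\|[\omega_{N,t},e^{ip\cdot\hat x}]\|_{\tr}\le 2N$ and the fact that $\int_{|p|>\varepsilon^{-1}}|\hat F(p)|\,dp\le C\varepsilon^{8n}$ (again by integration by parts, since $8n$ derivatives of $\hat F$ are integrable against $1+|p|$), which is more than enough to absorb the factor $N\le C\varepsilon^{-3}|\Lambda|$ after we note that $X_\Lambda(z)\|\cdot\|_{\tr}$ must be bounded by $C\varepsilon^{-2}$ — here one uses that the tail contributes $\le CN\varepsilon^{8n}$, which is $\le C\varepsilon^{-2}$ as long as $8n\ge 1$ only if $N$ is under control, so more carefully one must write the tail bound in $\W_z^{(n)}$-weighted form as well, using \eqref{eq:propagaz2} rather than the crude $\|\omega_{N,t}\|_{\tr}=N$; (3) for the main (low-$p$) part, write $F_z(\hat x)=G_z(\hat x)\W_z^{(n)}$ with $\|G\|_\infty\le C$, expand $[\omega_{N,t},G_z(\hat x)\W_z^{(n)}]=[\omega_{N,t},G_z(\hat x)]\W_z^{(n)}+G_z(\hat x)[\omega_{N,t},\W_z^{(n)}]$; (4) bound the second summand by $\|G\|_\infty(\|\W_z^{(n)}\omega_{N,t}\|_{\tr}+\|\omega_{N,t}\W_z^{(n)}\|_{\tr})\le C\varepsilon^{-3}X_\Lambda(z)^{-1}$ via \eqref{eq:propagaz2} and cyclicity/adjoints; (5) bound the first summand by decomposing $G_z$ into exponentials $e^{ip\cdot\hat x}$ with $|p|\le\varepsilon^{-1}$ (legitimate since the relevant high-$p$ part is again a negligible tail by step (2)), applying $\|\W_z^{(n)}[\omega_{N,t},e^{ip\cdot\hat x}]\|_{\tr}\le C(1+|p|)\varepsilon^{-2}X_\Lambda(z)^{-1}$ from \eqref{eq:propagazione}, and noting $\int(1+|p|)|\hat G(p)|\,dp\le C$ from the derivative hypothesis on $\hat F$ (the extra $4n$ derivatives are spent converting $F$ into $G$); (6) combine, observing $\varepsilon^{-3}$ vs.\ $\varepsilon^{-2}$ — this is where one must be a little careful, but the $\varepsilon^{-3}$ term comes only with the commutator $[\omega_{N,t},\W_z^{(n)}]$ which itself carries an extra $\varepsilon$ because $\W_z^{(n)}$ varies on scale $O(1)$ in $\hat x$ while the natural momentum scale is $\varepsilon^{-1}$, i.e.\ $[\veps\nabla,\W_z^{(n)}]=O(\varepsilon)\W_z^{(n-?)}$-type, so that term is in fact also $O(\varepsilon^{-2})X_\Lambda(z)^{-1}$.

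The main obstacle, and the point that needs the most care, is step~(6): reconciling the $\varepsilon^{-3}$ weight in \eqref{eq:propagaz2} with the desired $\varepsilon^{-2}$ bound in \eqref{eq:commF}. The resolution is that $[\omega_{N,t},\W_z^{(n)}]$ is not merely bounded in trace norm by $\|\W_z^{(n)}\omega_{N,t}\|_{\tr}$ — rather, $\W_z^{(n)}$ is itself a smooth function of $\hat x$, so $[\omega_{N,t},\W_z^{(n)}]$ can be reexpanded in exponentials $e^{ip\cdot\hat x}$ (with rapidly decaying Fourier weight concentrated at $|p|=O(1)\ll\varepsilon^{-1}$) and estimated by \eqref{eq:propagazione}, which supplies the extra power of $\varepsilon$ (via the $1+|p|$ in the denominator, evaluated at $|p|=O(1)$) needed to go from $\varepsilon^{-3}$ to $\varepsilon^{-2}$; in other words, the decomposition in step~(3) should really be iterated, or else one should simply write $F_z(\hat x)$ and $\W_z^{(n)}$ uniformly as members of the class \eqref{eq:Fass}/\eqref{eq:propagazione} and never introduce $\|\W_z^{(n)}\omega_{N,t}\|_{\tr}$ at all — then \eqref{eq:propagaz2} is used only to control the size of $\omega_{N,t}$ near $z$ when bounding the high-$p$ tail. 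A secondary technical point is verifying that the hypothesis $\int dp\,(1+|p|)|\partial_p^k\hat F(p)|\le C$ for $k\le 8n$ is exactly the right amount of regularity: $4n$ derivatives are consumed in obtaining the $(1+|x|)^{-4n}$ decay of $F$ (so that $F_z$ lives in the weighted space dual to $\W_z^{(n)}$), the remaining derivatives ensure the resulting weight functions still satisfy a bound of type \eqref{eq:Fass} with $4n$ derivatives so that \eqref{eq:propagazione} applies to them, and the $1+|p|$ factor guarantees the momentum integrals converge and that high-$p$ tails are quantitatively small. None of this is deep, but the bookkeeping of which $\varepsilon$-power sits where is the crux.
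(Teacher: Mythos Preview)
Your plan is essentially the paper's own proof: factor $F_z=\W_z^{(n)}f_z$ with $f=(1+|x|^{4n})F$ bounded (this is where $4n$ of the $8n$ $p$-derivatives of $\hat F$ go), expand $[\omega_{N,t},\W_z^{(n)}f_z]$ as a sum of two commutators, feed the low-momentum exponentials into \eqref{eq:propagazione}, and handle all high-momentum tails via \eqref{eq:propagaz2} together with one extra factor of $\varepsilon$ from the tail integral. The paper even performs the second iteration you describe in step~(6), writing $f_z=\W_z^{(n)}g_z$ with $g=(1+|x|^{4n})^2F$ (this is where the remaining $4n$ derivatives are spent) so that $[\omega_{N,t},\W_z^{(n)}]f_z=[\omega_{N,t},\W_z^{(n)}]\W_z^{(n)}g_z$ carries another factor of $\W_z^{(n)}$ and can be fed back into \eqref{eq:propagazione} after expanding the inner $\W_z^{(n)}$ in exponentials.

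There is one concrete error in your sketch. In step~(2) you claim $\int_{|p|>\varepsilon^{-1}}|\hat F(p)|\,dp\le C\varepsilon^{8n}$ ``by integration by parts, since $8n$ derivatives of $\hat F$ are integrable''. This is wrong: integrability of $\partial_p^k\hat F$ gives $x$-decay of $F$, not $p$-decay of $\hat F$. The only bound available from \eqref{eq:Fass} is
\[
\int_{|p|>\varepsilon^{-1}}|\hat F(p)|\,dp\;\le\;\varepsilon\int(1+|p|)|\hat F(p)|\,dp\;\le\;C\varepsilon,
\]
a single power of $\varepsilon$. This is exactly what the paper uses, and it is exactly enough: combined with $\|\omega_{N,t}\W_z^{(n)}\|_{\tr}\le C\varepsilon^{-3}X_\Lambda(z)^{-1}$ from \eqref{eq:propagaz2} it yields $C\varepsilon^{-2}X_\Lambda(z)^{-1}$ for the high-$p$ tail. (Your crude alternative, pairing any tail decay with $\|\omega_{N,t}\|_{\tr}=N=\varepsilon^{-3}|\Lambda|$, cannot be made uniform in $|\Lambda|$, as you already noticed.) With this correction your proposal is complete and coincides with the paper's argument.
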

\begin{proof} Let $\chi(p)$ be a smooth, non-increasing, compactly supported function, equal to $1$ for $|p| \leq \varepsilon^{-1} - 1$ and equal to $0$ for $|p| > \varepsilon^{-1}$. We write:
\begin{equation}
\label{cut-off_potential_sharp}
\begin{split}
F(x) & =  F^{(\leq)}(x) + F^{(>)}(x)\;,
\\
F^{(\leq)}(x) & = \int_{\R^{3}} d p  \, e^{i p \cdot x} \chi(p) \hat F(p) \;, 
\\
F^{(>)}(x) & = \int_{\R^{3}} d p \, e^{i p \cdot x} (1 - \chi(p)) \hat F(p) \;,
\end{split}
\end{equation}
so that
\begin{equation}
\label{splitting_wVj}
\Big\| \big[ \omega_{N,t},F_{z}(\hat x) \big]\Big\|_{\tr}  \leq \Big\| \big[ \omega_{N,t}, F^{(\leq)}_{z}(\hat x) \big]\Big\|_{\tr} + \Big\| \big[ \omega_{N,t}, F^{(>)}_{z}(\hat x) \big]\Big\|_{\tr} \;.
\end{equation}
Let us define, for $\sharp = \leq, >$:
\begin{equation}
\label{def_fj_gj}
f^{(\sharp)}(x) := (1+|x|^{4n}) F^{(\sharp)}(x)\;,\qquad  g^{(\sharp)}(x):= (1+|x|^{4n})^{2} F^{(\sharp)}(x)\;,
\end{equation}
together with $f_z^{\sharp}(x):=f^{\sharp}(x-z)$ and $g^{\sharp}_z(x):=g^{\sharp}(x-z)$. By the assumptions (\ref{eq:Fass}) and the smoothness of $\chi$, 
\begin{equation}\label{eq:fgsharp}
\| (1 + |\cdot|) \hat f^{(\sharp)} \|_{1} \leq C\;,\qquad \| \hat g^{(\sharp)} \|_{1} \leq C\;.
\end{equation}
Consider the first term on the r.h.s.~of (\ref{splitting_wVj}). We have:
\begin{equation}\label{eq:Wf}
\begin{split}
 \Big\| \big[ \omega_{N,t}, F^{(\leq)}_{z}(\hat{x}) \big]\Big\|_{\tr}&=  \Big\| \big[ \omega_{N,t}, \mathcal{W}_{z}^{(n)} f^{(\leq)}_{z}(\hat{x}) \big]\Big\|_{\tr} \\
 &\leq \Big\| \big[ \omega_{N,t}, \mathcal{W}_{z}^{(n)} \big] f^{(\leq)}_{z} (\hat{x}) \Big\|_{\tr} + \Big\| \mathcal{W}_{z}^{(n)} \big[ \omega_{N,t}, f^{(\leq)}_{z} (\hat{x}) \big]\Big\|_{\tr}\;.
 \end{split}
\end{equation}
Consider the second term on the r.h.s.~of (\ref{eq:Wf}). We estimate it as:
\begin{equation}\label{eq:Wfsec}
\Big\| \mathcal{W}_{z}^{(n)} \big[ \omega_{N,t}, f^{(\leq)}_{z} (\hat{x})\big]\Big\|_{\tr} \leq \int dp\, |\hat f^{(\leq)}(p)| \Big\| \mathcal{W}_{z}^{(n)} \big[ \omega_{N,t}, e^{ip\cdot\hat{x}} \big]\Big\|_{\tr}\;.
\end{equation}
Therefore, using (\ref{eq:propagazione}), the contribution of this term to the final bound (\ref{eq:commF}) is:
\begin{equation}\label{eq:Fcomm1}
\begin{split}
\sup _{t \in [0,T]} \sup _{z \in \R^{3}} &X_{\Lambda}(z) \Big\| \mathcal{W}_{z}^{(n)} \big[ \omega_{N,t}, f^{(\leq)}_{z}(\hat{x}) \big]\Big\|_{\tr} \\&\leq \sup _{t \in [0,T]} \sup _{z \in \R^{3}}\int dp\, |\hat f^{(\leq)}(p)| (1 + |p|) \frac{X_{\Lambda}(z)}{1 + |p|} \Big\| \mathcal{W}_{z}^{(n)} \big[ \omega_{N,t}, e^{ip\cdot \hat{x}} \big]\Big\|_{\tr} \\
&\leq C \sup _{t \in [0,T]} \sup _{z \in \R^{3}} \sup_{p: |p| \leq |\varepsilon|^{-1}}\frac{X_{\Lambda}(z)}{1 + |p|} \Big\| \mathcal{W}_{z}^{(n)} \big[ \omega_{N,t}, e^{ip \cdot \hat{x}} \big]\Big\|_{\tr} \\
&\leq C\varepsilon^{-2}\;,
\end{split}
\end{equation}
where we used (\ref{eq:fgsharp}). Consider now the first term in the r.h.s.~of (\ref{eq:Wf}). We estimate it as:
\begin{equation*}
\Big\| \big[ \omega_{N,t}, \mathcal{W}_{z}^{(n)} \big] f^{(\leq)}_{z} (\hat{x}) \Big\|_{\tr} = \Big\| \big[ \omega_{N,t}, \mathcal{W}_{z}^{(n)} \big] \mathcal{W}_{z}^{(n)} g^{(\leq)}_{z} (\hat{x})\Big\|_{\tr} \leq C\Big\| \big[ \omega_{N,t}, \mathcal{W}_{z}^{(n)} \big] \mathcal{W}_{z}^{(n)}\Big\|_{\tr}\;,
\end{equation*}
where we used that $g^{(\sharp)}$ is bounded, which follows from (\ref{eq:fgsharp}). We then have:
\begin{equation}\label{eq:Wcomm}
\begin{split}
&\Big\| \big[ \omega_{N,t}, \mathcal{W}_{z}^{(n)} \big] \mathcal{W}_{z}^{(n)}\Big\|_{\tr} \\
& \leq \int dp\, |\widehat{\mathcal{W}^{(n)}}(p)|\,  \Big\| \big[ \omega_{N,t}, e^{ip\cdot \hat{x}} \big] \mathcal{W}_{z}^{(n)}\Big\|_{\tr} \\
& \leq \int_{|p|\leq \varepsilon^{-1}} dp\, |\widehat{\mathcal{W}^{(n)}}(p)|\,  \Big\| \big[ \omega_{N,t}, e^{ip\cdot\hat{x}} \big] \mathcal{W}_{z}^{(n)}\Big\|_{\tr} + \int_{|p| > \varepsilon^{-1}} dp\, |\widehat{\mathcal{W}^{(n)}}(p)|\,  \Big\| \big[ \omega_{N,t}, e^{ip\cdot \hat{x}} \big] \mathcal{W}_{z}^{(n)}\Big\|_{\tr}\;,
\end{split}
\end{equation}
where we denoted the Fourier transform of $(1+|\cdot|^{4n})^{-1}$ by $\widehat{\W^{(n)}}$.
The contribution of the first term to the final bound (\ref{eq:commF}) is estimated exactly as before. We get:
\begin{equation}\label{eq:Fcomm2}
\sup _{t \in [0,T]} \sup _{z \in \R^{3}} X_{\Lambda}(z) \int_{|p|\leq \varepsilon^{-1}} dp\, |\widehat{\mathcal{W}^{(n)}}(p)|\,  \Big\| \big[ \omega_{N,t}, e^{ip\hat{x}} \big] \mathcal{W}_{z}^{(n)}\Big\|_{\tr} \leq C\varepsilon^{-2}\;.
\end{equation}
Consider now the second term in (\ref{eq:Wcomm}). We estimate it as:
\begin{equation}\label{eq:bla}
\int_{|p| > \varepsilon^{-1}} dp\, |\widehat{\mathcal{W}^{(n)}}(p)|\,  \Big\| \big[ \omega_{N,t}, e^{ip\cdot \hat{x}} \big] \mathcal{W}_{z}^{(n)}\Big\|_{\tr} \leq 2 \int_{|p| > \varepsilon^{-1}} dp\, |\widehat{\mathcal{W}^{(n)}}(p)| \Big\| \omega_{N,t}\mathcal{W}_{z}^{(n)}\Big\|_{\tr} \;.
\end{equation}
Using that 
\begin{equation*}
\int_{|p| > \varepsilon^{-1}} dp\, |\widehat{\mathcal{W}^{(n)}}(p)| \leq \varepsilon \int dp\, |p| |\widehat{\mathcal{W}^{(n)}}(p)| \leq C\varepsilon\;,
\end{equation*}
and using \eqref{eq:propagaz2} to estimate the last trace norm in (\ref{eq:bla}), we get:
\begin{equation}\label{eq:Fcomm3}
\sup _{t \in [0,T]} \sup _{z \in \R^{3}} X_{\Lambda}(z)\int_{|p| > \varepsilon^{-1}} dp\, |\widehat{\mathcal{W}^{(n)}}(p)|\,  \Big\| \big[ \omega_{N,t}, e^{ip\cdot\hat{x}} \big] \mathcal{W}_{z}^{(n)}\Big\|_{\tr} \leq C\varepsilon^{-2}\;.
\end{equation}
Putting together (\ref{eq:Fcomm1}), (\ref{eq:Fcomm2}), (\ref{eq:Fcomm3}), we find:
\begin{equation}\label{eq:Fcomm4}
\sup _{t \in [0,T]} \sup _{z \in \R^{3}} X_{\Lambda}(z) \Big\| \big[ \omega_{N,t}, \mathcal{W}_{z}^{(n)} \big] f^{(\leq)}_{z}(\hat{x})\Big\|_{\tr} \leq C\varepsilon^{-2}\;.
\end{equation}
Together with (\ref{eq:Wf}), the bounds (\ref{eq:Fcomm1}), (\ref{eq:Fcomm4}) imply:
\begin{equation}\label{eq:Fmin}
\sup _{t \in [0,T]} \sup _{z \in \R^{3}} X_{\Lambda}(z) \Big\| \big[ \omega_{N,t}, F^{(\leq)}_{z}(\hat{x}) \big]\Big\|_{\tr} \leq C\varepsilon^{-2}\;.
\end{equation}
This proves the desired estimate for the first term on the r.h.s.~of (\ref{splitting_wVj}). Consider now the second term on the r.h.s.~of (\ref{splitting_wVj}). By opening the commutator and by using the invariance of the norm under hermitian conjugation, we get:
\begin{equation}\label{eq:wF>comm}
\Big\| \big[ \omega_{N,t}, F^{(>)}_{z}(\hat{x}) \big]\Big\|_{\tr} \leq 2 
\Big\| \omega_{N,t} F^{(>)}_{z}(\hat{x}) \Big\|_{\tr} \leq 2 
\Big\| \omega_{N,t} \W_{z}^{(n)} f^{(>)}_{z}(\hat{x}) \Big\|_{\tr}.
\end{equation}
By (\ref{eq:fgsharp}), together with the fact that $\hat f^{(>)}(p) = 0$ for $|p| < \varepsilon^{-1} - 1$:
\begin{equation*}
\|f^{(>)}(\hat{x}) \|_{\mathrm{op}} \leq 
\int dp\, |\hat f^{(>)}(p)|  \leq C\varepsilon\int dp\, \big| \hat f^{(>)}(p)\big| |p|\leq C\varepsilon\;;
\end{equation*}
using \eqref{eq:propagaz2}, we easily get:
\begin{equation}\label{eq:Fbig1}
\sup _{t \in [0,T]} \sup _{z \in \R^{3}} X_{\Lambda}(z) \Big\|  \omega_{N,t} \mathcal{W}_{z}^{(n)} f^{(>)}_{z} (\hat{x})\Big\|_{\tr} \leq C\varepsilon^{-2}\;.
\end{equation}
Plugging this estimate in \eqref{eq:wF>comm}, we get:
\begin{equation*}
\sup _{t \in [0,T]} \sup _{z \in \R^{3}} X_{\Lambda}(z) \Big\| \big[ \omega_{N,t}, F^{(>)}_{z} (\hat{x})\big]\Big\|_{\tr} \leq C\varepsilon^{-2}\;.
\end{equation*}
Combined with (\ref{eq:Fmin}) and with (\ref{splitting_wVj}), this concludes the proof of Corollary \ref{cor:commF}.
\end{proof}
The next result is the key to control the distance between the many-body and the effective dynamics. It relies on the propagation of the local semiclassical structure, Theorem \ref{thm:propcomm}, and on its Corollary \ref{cor:commF}.
\begin{proposition}[Bound on the Growth of Fluctuations] \label{prp:fluct} Under the same assumptions of Theorem \ref{thm:main}, the following is true. Let $\psi$ be the Fock space vector associated to $\psi_{N}$, and let $\xi = R^{*} \psi$. Then, there exists $C>0$ such that:
\begin{equation}
\sup _{t \in [0,T]}
\langle \xi, \mathcal{U}_{N}(t;0)^{*} \cN \mathcal{U}_{N}(t;0) \xi \rangle \leq C ( N \varepsilon + \langle \xi, \cN \xi \rangle )\;.\label{eq:propN}
\end{equation}
\end{proposition}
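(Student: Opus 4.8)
The plan is to derive the bound \eqref{eq:propN} via a Gronwall argument on the function $t \mapsto \langle \xi_t, \cN \xi_t \rangle$, where $\xi_t := \mathcal{U}_N(t;0)\xi$. First I would compute the time derivative: since $\mathcal{U}_N(t;0)$ is generated by the fluctuation Hamiltonian $\mathcal{L}_N(t) := (i\veps)^{-1}\big( R_{\omega_{N,t}}^* \cH_N R_{\omega_{N,t}} - i\veps R_{\omega_{N,t}}^* \dot R_{\omega_{N,t}}\big)$, we have $\tfrac{d}{dt}\langle \xi_t, \cN \xi_t\rangle = \langle \xi_t, [i\mathcal{L}_N(t), \cN] \xi_t \rangle$. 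The key algebraic input is that $\cN$ commutes with the number-preserving parts of $\mathcal{L}_N(t)$ (in particular with $d\Gamma$ of any one-particle operator and with the Hartree mean-field term), so only the terms of $\mathcal{L}_N(t)$ that create or annihilate pairs of excitations contribute. These are the "$aa$" and "$a^*a^*$" terms coming from the conjugation of the quartic interaction $\veps^3 \int V(x-y)\, a_x^* a_y^* a_y a_x$ by the Bogoliubov transformation, together with the terms generated by $R_{\omega_{N,t}}^* \dot R_{\omega_{N,t}}$; using the Hartree equation \eqref{eq:H}, the latter cancel against the mean-field-type pair terms, leaving only a genuinely cubic-and-quartic remainder built from the kernels $V(x-y)\omega_{N,t}(x;y)$, $V(x-y)\bar v_{N,t}(x;\cdot)$, etc.

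The heart of the matter is then to bound $|\langle \xi_t, [i\mathcal{L}_N(t), \cN]\xi_t\rangle|$ by $C(\langle \xi_t, \cN\xi_t\rangle + N\veps)$. The dangerous terms are of the schematic form $\veps^3 \int dx\,dy\, V(x-y)\, \langle \xi_t, a_x^* a_y^* (\text{something}) \xi_t\rangle$ with suitable $u_{N,t}, v_{N,t}$ insertions. For each such term I would insert the cutoff decomposition $V = V^{(\le)} + V^{(>)}$ in momentum space as in the proof of Corollary \ref{cor:commF}, and then, crucially, rewrite each $a^*a$ or $a^*a^*$ pair so that one factor is paired against a \emph{commutator} $[\omega_{N,t}, F_z(\hat x)]$ or against $\omega_{N,t}\W_z^{(n)}$, exploiting the identity $a(\bar v_{N,t}\bar g) = a(\bar\omega_{N,t}\bar g)$ and writing $V(x-y) = F_{y}(x)$ with $F = V$ (or $F$ a weighted version of $V$ satisfying \eqref{eq:Fass}, which holds by hypothesis \eqref{eq:assV} since $|\alpha|\le 8n$ derivatives and growth $|p|^{\max(4n,7)}$ are assumed). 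Then I apply Lemma \ref{lemma:bounds}: the Hilbert–Schmidt-norm bounds $\|d\Gamma(J)\xi\|\le \|J\|_{\mathrm{HS}}\|\cN^{1/2}\xi\|$ and $\|\int J(x;x')a_x^*a_{x'}^*\xi\|\le 2\|J\|_{\mathrm{HS}}\|(\cN+1)^{1/2}\xi\|$, and likewise the trace-norm bounds, convert the Fock-space estimate into trace/Hilbert–Schmidt norms of $[\omega_{N,t}, F_z(\hat x)]$, $\W_z^{(n)}\omega_{N,t}$, and $\omega_{N,t}$ itself. At this point Corollary \ref{cor:commF} and \eqref{eq:propagaz2} supply the factors $\veps^{-2}$, resp. \eqref{eq:H2}/\eqref{eq:propagaz2} the factor $\veps^{-3}$ (localized by $X_\Lambda(z)^{-1}$ and then integrated — using $\int X_\Lambda(z)^{-1}\,dz \le C|\Lambda|$, which is finite and whose $|\Lambda|$ produces the $N = \varrho|\Lambda|$); combined with the prefactor $\veps^3$ from the interaction, these produce an overall bound of order $\veps \cdot N$ for the inhomogeneous part and $C$ for the coefficient of $\langle\xi_t,\cN\xi_t\rangle$, as desired.

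The main obstacle I anticipate is the bookkeeping of the localization: the commutator and density bounds in Assumption \ref{ass:sc} / Theorem \ref{thm:propcomm} come with the weight $X_\Lambda(z)^{-1}$ pinned to a single point $z$, whereas the interaction term $\veps^3\int V(x-y)(\cdots)$ involves two integration variables $x,y$ and no obvious localization center. Resolving this requires inserting a partition-of-unity / sliding-window argument: one writes $\mathbf 1 = c\int dz\, \W_z^{(n)}(\hat x)^{1/(2n)}$ (or a genuine smooth partition subordinate to unit cubes), uses that $V$ has fast decay so that $x,y$ are forced within $O(1)$ of the window center, and then integrates the local bounds over $z$ against $\int X_\Lambda(z)^{-1}dz\le C|\Lambda|$. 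Ensuring that the Cauchy–Schwarz splittings of the cubic/quartic Fock-space terms distribute the two available weight factors $\W_z^{(n)}$ correctly — one onto a commutator, one onto $\omega_{N,t}$ — without losing powers of $\veps$ is the delicate part; the cutoff $V^{(>)}$ contribution is comparatively easy since $\|F^{(>)}(\hat x)\|_{\mathrm{op}}\le C\veps$ gains an extra $\veps$ directly. A secondary technical point is controlling the cubic terms (one creation/annihilation operator unpaired), which require the stronger second-quantization bounds and a factor $\langle\xi_t,\cN\xi_t\rangle^{1/2}$ rather than $\langle\xi_t,\cN\xi_t\rangle$; these are absorbed by Young's inequality $ab \le \tfrac12(a^2+b^2)$ into the two desired terms.
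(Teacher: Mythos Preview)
Your overall strategy---Gronwall on $\langle\xi_t,\cN\xi_t\rangle$, identification of the non-number-preserving terms, and reduction via Lemma~\ref{lemma:bounds} to the trace-norm commutator bounds of Corollary~\ref{cor:commF} and \eqref{eq:propagaz2}---is exactly the paper's approach. Two points are worth flagging.

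First, a small correction: because the effective equation is Hartree and not Hartree--Fock, the cancellation you invoke removes only the \emph{direct} quadratic pair term; the \emph{exchange} kernel $V(x-y)\omega_{N,t}(y;x)$ survives as a genuine quadratic contribution (term~$\mathrm{IV}$ in the paper), not a cubic one. It is harmless (bounded by $C\veps^3 N$) but must be treated separately.

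Second, and more substantively, the ``main obstacle'' you anticipate---how to introduce a localization center $z$ when the interaction carries two variables $x,y$---is resolved in the paper not by a partition of unity but by a much cleaner device: one factorizes the potential in Fourier space as
\[
V(x-y)=\int_{\R^3} dz\, V^{(1)}(x-z)\,V^{(2)}(z-y),\qquad \widehat{V^{(1)}}(p)=\frac{1}{1+|p|^6},\quad \widehat{V^{(2)}}(p)=(1+|p|^6)\widehat V(p).
\]
This convolution splitting makes the localization center $z$ appear for free, and after Cauchy--Schwarz in $z$ each factor becomes a trace norm $\|[\omega_{N,t},V^{(j)}_z(\hat x)]\|_{\tr}$ to which Corollary~\ref{cor:commF} applies directly; the $z$-integral then produces $\int X_\Lambda(z)^{-2}\,dz\le C|\Lambda|$. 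Your partition-of-unity / sliding-window idea would presumably also work, but it is heavier and requires care in distributing the weights; the convolution trick sidesteps all of that. Note also that for the cubic terms $\mathrm{II}$ and $\mathrm{III}$ no splitting is needed at all: one simply takes $z=x$ (the unpaired variable) and bounds $\|[\omega_{N,t},V_x(\hat x)]\|_{\tr}$ directly, so no Young-inequality juggling is required either.
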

The proof of Theorem \ref{thm:main} is a corollary of Proposition \ref{prp:fluct}, and will be given in Section \ref{sec:proofthm}. Let us now prove Proposition \ref{prp:fluct}
\begin{proof}[Proof of Proposition \ref{prp:fluct}.] The proof is based on a Gronwall-type argument, as in \cite{BPS, BPS2}. For convenience, we shall use the following notations 
\begin{equation*}
u_{t;x}(\cdot):= u_{N,t}(\,\cdot\,;x)\;,\quad v_{t;x}(\cdot):= v_{N,t}(\,\cdot\,;x)\;,\quad \overline{v}_{t;x}(\cdot):= \overline{v_{N,t}(\,\cdot\,;x)}\;.
\end{equation*}
The starting point is the following identity, for any $\xi \in \mathcal{F}$:
\begin{equation}
\label{eq:derN}
\begin{split}
i \veps \partial_{t}  & \Big \langle \xi ,
\mathcal{U}_{N}(t;0)^{*} \cN \mathcal{U}_{N}(t;0) \xi \Big \rangle 
\\
&
 = -4 i \veps^{3} \im \,\int d x d y \, V(x-y) \,
\Big\langle \xi, \mathcal{U}_{N}(t;0)^{*}\Big( 
a(\overline{v}_{t;x}) a(\overline{v}_{t;y}) a(u_{t;y}) a(u_{t;x}) 
\\
&
\quad
+
 a^{\ast}(u_{t;x}) a(\overline{v}_{t;y}) a(u_{t;y}) a(u_{t;x}) 
+
  a^{*}(u_{t;y}) a^{*}(\overline{v}_{t;y}) a^{*}(\overline{v}_{t;x}) a(\overline{v}_{t;x})
 \Big) \,\mathcal{U}_{N}(t;0) \xi \Big\rangle
 \\ 
 &\quad + 4 i \veps^{3} \im \int d x d y \, V(x-y)\, \Big \langle \xi, \mathcal{U}_{N}(t;0)^{*} \Big( \omega_{N,t}(y;x)   a^{*}(u_{t,y})a^{*}(\overline{v}_{t,x})\Big) \mathcal{U}_{N}(t;0) \xi \Big \rangle 
 \\
 & = \mathrm{I}+\mathrm{II} + \mathrm{III} + \mathrm{IV}\;,
\end{split}
\end{equation}
with a natural identification of each term.
The proof of this identity follows  \cite[Proof of Proposition 3.3]{BPS}. The difference with respect to \cite{BPS} is that now the fluctuation dynamics is defined starting from the Hartree equation rather than the Hartree--Fock equation. This is the reason for the extra quadratic term in the right-hand side of (\ref{eq:derN}), which in \cite{BPS} is cancelled by the presence of the exchange term in the time-dependent Hartree--Fock equation. Let us briefly sketch the arguments and refer to \cite[Proof of Proposition
3.3]{BPS} for further details.
By using the definition of the Bogoliubov transformation, one can see that
\begin{equation}
\label{fluctuation_derivative}
\ii\veps \partial_{t} \mathcal{U}_{N}(t;0)^{*} \cN \mathcal{U}_{N}(t;0) = -2 \, \mathcal{U}_{N}(t;0)^{*} R_{\omega_{N,t}}^* \Big(  d\Gamma (\ii\veps \partial_{t} \omega_{N,t}) - [\cH_N , d\Gamma (\omega_{N,t}) ] \Big) R_{\omega_{N,t}} \,\mathcal{U}_{N}(t;0) \;. 
\end{equation}
Plugging the Hartree equation, and using that 
\begin{equation*}
d \Gamma \big([-\veps^{2}\Delta, \omega_{N,t}]\big) = [d\Gamma(-\veps^{2}\Delta), d\Gamma(\omega_{N,t})]\;, 
\end{equation*}
we easily get: 
\begin{equation*}
\begin{split}
\ii\veps \partial_{t} \, &\mathcal{U}_{N}(t;0)^{*} \cN \mathcal{U}_{N}(t;0)
\\
& 
= -2 \, \mathcal{U}_{N}(t;0)^{*} R_{\omega_{N,t}}^{*} \Big(  d\Gamma \big( [V \ast \rho_{\Lambda,t}, \omega_{N,t}]\big) - [\mathcal{V} , d\Gamma (\omega_{N,t}) ] \Big) R_{\omega_{N,t}} \,\mathcal{U}_{N}(t;0)\;,
\end{split}
\end{equation*}
where $\mathcal{V}$ is the second quantization of the many-body interaction. After conjugation with the Bogoliubov transformation, see \eqref{eq:bog}, and normal ordering, we get:
\begin{equation*}
R_{\omega_{N,t}}^{*} 
d\Gamma \big( [V \ast \rho_{\Lambda,t}, \omega_{N,t}]\big) R_{\omega_{N,t}}  = \veps^{3} \int d x d y \, V(x-y) \omega_{N,t}(x,x) a^{*}(u_{t,y})a^{*}(\overline{u}_{t,y}) - \text{h.c.} \;,
\end{equation*}
and
\begin{equation}
\begin{split}\label{eq:V2}
R&_{\omega_{N,t}}^{*}[\mathcal{V} , d\Gamma (\omega_{N,t}) ] R_{\omega_{N,t}} = \veps^{3} \int d x\, d y  \,V(x-y) \Big(a(\overline{v}_{t;x}) a(\overline{v}_{t;y}) a(u_{t;y}) a(u_{t;x}) 
\\
& +
a^{\ast}(u_{t;x}) a(\overline{v}_{t;y}) a(u_{t;y}) a(u_{t;x})
+
a^{*}(u_{t;y}) a^{*}(\overline{v}_{t;y}) a^{*}(\overline{v}_{t;x}) a(\overline{v}_{t;x})  \Big)
\\
& +\veps^{3} \int d x d y \, V(x-y) \Big( \omega_{N,t}(x;x) a^{*}(u_{t,y})a^{*}(\overline{u}_{t,y}) - \omega_{N,t}(y;x) a^{*}(u_{t,y})a^{*}(\overline{v}_{t,x})\Big)\nonumber\\
& - \text{h.c.}
\end{split}
\end{equation}
which gives the claim. 

Let us now bound the terms appearing on the r.h.s.~of (\ref{eq:derN}). For brevity, we set $\xi_{t}:= \mathcal{U}_{N}(t;0) \xi$.

\medskip

\noindent{\underline{Bound for the term $\text{I}$}.} 
We rewrite the interaction potential as:
\begin{equation}\label{eq:Vsplit}
V(x-y) = \int_{\R^{3}} d z \, V^{(1)}(x-z) V^{(2)}(z-y)\;,
\end{equation}
where:
\begin{equation}
\label{def_Vj}
V^{(1)}(x)  := \int_{\R^{3}} d p \,\frac{e^{i p\cdot x}}{1+|p|^{6}} \;, \qquad V^{(2)}(x)  := \int_{\R^{3}} d p \, e^{i p \cdot x}\,  (1+|p|^{6})\widehat{V}(p) \;.
\end{equation}
The function $V^{(2)}$ is bounded, and its regularity can be inferred from the assumption \eqref{eq:assV} on the potential. Accordingly, the term $\mathrm{I}$ is re-written as
\begin{equation*}
\begin{split}
\mathrm{I} & = \varepsilon^{3} \int_{(\R^{3})^{2}} d x d y \, V(x-y) 
\left \langle  \xi_{t},
a(\overline{v}_{t;x}) a(\overline{v}_{t;y}) a(u_{t;y}) a(u_{t;x}) \xi_{t}\right \rangle
 \\
 & = \varepsilon^{3} \int_{\R^{3}} d z \, \left \langle \int_{\R^{3}} d x 
\, V^{(1)}_{z}(x) \,a^{*}(u_{t;x}) a^{*}(\overline{v}_{t;x})\xi _{t}\, , \int_{\R^{3}} d y \,V^{(2)}_{z}(y) a(\overline{v}_{t;y}) a(u_{t;y}) \xi_{t} \right  \rangle \;,
\end{split}
\end{equation*}
where we used the notation $f_{z}(x) = f(x-z)$. Next, we notice that 
\begin{equation}
\label{represent_uVv}
\begin{split}
\int_{\R^{3}} d x 
\, V^{(1)}_{z}(x) \,a^{\ast}(\overline{v}_{t;x})a^{\ast}(u_{t;x}) &= \int_{(\R^{3})^{2}} d r d s \, \Big(\int_{\R^{3}} d x \,  u_{N,t}(s;x) V_{z}^{(1)}(x) \overline{v}_{N,t}(r;x)\Big)a^{\ast}_{r}a^{\ast}_{s}
\\
& =\int_{(\R^{3})^{2}} d r d s \, \big(u_{N,t}V_{z}^{(1)}(\hat{x}) \overline{v}_{N,t} \big)(r;s)a^{\ast}_{r}a^{\ast}_{s}
\end{split}
\end{equation}
and that
\begin{equation}
\label{representation_uVv}
\begin{split}
\int_{\R^{3}} d y
\, V^{(2)}_{z}(y) \,a(\overline{v}_{t;y})a(u_{t;y}) & = \int_{(\R^{3})^{2}} d r d s \, \Big(\int_{\R^{3}} d y \,  \overline{u_{N,t}(s;y)} V_{z}^{(2)}(y) v_{N,t}(r;y)\Big)a_{r}a_{s}
\\
& =\int_{(\R^{3})^{2}} d r d s \, \big( v_{N,t} V_{z}^{(2)}(\hat{x})  u_{N,t} \big)(r;s)a_{r}a_{s} \;.
\end{split}
\end{equation}
By the Cauchy-Schwarz inequality and by Lemma \ref{lemma:bounds} we can bound the term $\mathrm{I}$ as:
\begin{equation*}
\begin{split}
| \mathrm{I}| &\leq  \varepsilon^{3} \bigg( \int_{\R^{3}} d z \, \Big\|\int_{\R^{3}} d x 
\, V^{(1)}_{z}(x) \,a^{*}(\overline{v}_{t;x})a^{*}(u_{t;x}) \xi _{t} \Big\|^{2}\bigg)^{1/2} \, \cdot
\\
& \qquad \qquad \qquad\cdot \;\bigg( \int_{\R^{3}} d z \, \Big\|\int_{\R^{3}} d x 
\, V^{(2)}_{z}(x) \,a(\overline{v}_{t;x})a(u_{t;x}) \xi_{t} \Big\|^{2}\bigg)^{1/2} 
\\
& \leq  \varepsilon^{3} \bigg( \int_{\R^{3}} d z \, \Big\| u_{N,t}V_{z}^{(1)}(\hat{x}) \overline{v}_{N,t} \Big\|^{2}_{\mathrm{tr}}\bigg)^{1/2} \bigg( \int_{\R^{3}} d z \, \Big\| v_{N,t} V_{z}^{(2)}(\hat{x}) u_{N,t} \Big\|^{2}_{\mathrm{tr}}\bigg)^{1/2}
\\
&
 \leq \varepsilon^{3} \bigg( \int_{\R^{3}} d z \, \Big\| \big[ \omega_{N,t},V_{z}^{(1)}(\hat{x}) \big]\Big\|^{2}_{\mathrm{tr}}\bigg)^{1/2}  \bigg( \int_{\R^{3}} d z \, \Big\| \big[ \omega_{N,t},V_{z}^{(2)}(\hat{x}) \big]\Big\|^{2}_{\mathrm{tr}}\bigg)^{1/2}
\\ 
 &
\leq \varepsilon^{3} \bigg(
 \int_{\R^{3}} d z\, X_{\Lambda}(z)^{-2}\bigg)\prod_{j =1,2} \sup _{z \in \R^{3}} X_{\Lambda}(z)
  \Big\| \big[ \omega_{N,t},V_{z}^{(j)}(\hat{x}) \big]\Big\|_{\tr}\;,
\end{split}
\end{equation*}
where in the third step we used that $u_{N,t} =1 - \omega_{N,t}$ together with the orthogonality condition $u_{N,t}\overline{v}_{N,t} = v_{N,t} u_{N,t} = 0$, and that $\| \overline{v}_{N,t}\|_{\mathrm{op}} = 1$. The trace norm of the commutator can be estimated using Corollary \ref{cor:commF}. In fact, the function $V^{(1)}$ satisfies the assumptions of Corollary \ref{cor:commF}, and the same is true for the function $V^{(2)}$, thanks to the assumptions on the potential $V$, Eq.~(\ref{eq:assV}). Therefore, we get:
\begin{equation*}
\sup_{t\in [0,T]}\prod_{j =1,2} \sup _{z \in \R^{3}} X_{\Lambda}(z) \Big\| \big[ \omega_{N,t},V_{z}^{(j)}(\hat{x}) \big]\Big\|_{\tr} \leq C\varepsilon^{-4}\;.
\end{equation*}
Using that $\int_{\R^{3}} d z\, X_{\Lambda}(z)^{-2} \leq C|\Lambda|$, we find:
\begin{equation}\label{eq:estI}
|\text{I}| \leq C|\Lambda| \varepsilon^{-1} \leq C\varepsilon^{2} N\;. 
\end{equation}

\noindent{\underline{Bound for the term $\text{II}$}.} We write:
\begin{equation*}
\begin{split}
\mathrm{II} 
& = \varepsilon^{3} \int_{\R^{3}} d x \, \Big\langle a(u_{t;x})
 \xi_{t}, \Big(\int_{\R^{3}} d y \,V_{x}(y)  a(\overline{v}_{t;y}) a(u_{t;y}) \Big) a(u_{t;x})\xi_{t} \Big\rangle 
\\
& = \varepsilon^{3} \int_{\R^{3}} d x \, \Big\langle a(u_{t;x})
 \xi_{t}, \Big(\int_{(\R^{3})^{2}} d r \, d s
 \Big(v_{N,t} V_{x}(\hat{x})u_{N,t} \Big)(r,s)a_{r} a_{s}  \Big) a(u_{t;x})\xi_{t} \Big\rangle  \;,
\end{split}
\end{equation*}
recall \eqref{representation_uVv}. By the Cauchy-Schwarz inequality and by Lemma \ref{lemma:bounds} we obtain
\begin{equation}
\label{bound_II}
\begin{split}
| \mathrm{II}|&  \leq \varepsilon^{3} \int_{\R^{3}} d x \, \big\| a(u_{t;x}) \xi_{t} \big\| \, \big\|v_{N,t} V_{x}(\hat{x})u_{N,t}\big\|_{\tr} \, \big\| a(u_{t;x})\xi_{t} \big\|
\\
& 
\leq \varepsilon^{3} \sup _{z \in \R^{3}} \big\|[ V_{z}(\hat{x}),\omega_{N,t} ]\big\|_{\tr} \int_{\R^{3}} d x \, \big\| a(u_{t;x}) \xi_{t} \big\|^{2} \;.
\end{split}
\end{equation}
The trace norm of the commutator can be estimated using Corollary \ref{cor:commF}. The last term is estimated in terms of the number operator:
\begin{equation*}
\begin{split}
\int_{\R^{3}} d x \, \big\| a(u_{t;x}) \xi_{t} \big\|^{2} 
& =
\langle \xi_{t}, d\Gamma(u_{N,t}) \xi_{t}\rangle \leq \langle \xi_{t}, \cN \xi_{t} \rangle \;,
\end{split}
\end{equation*}
where we used that $u_{N,t}^{2} = u_{N,t}$. Thus, we get, for $t\in [0,T]$:
\begin{equation}\label{eq:estII}
|\mathrm{II}| \leq C \veps \langle \xi_{t}, \cN \xi_{t} \rangle \;.
\end{equation}

\medskip

\noindent{\underline{Bound for the term $\text{III}$.}} We write
\begin{equation*}
\begin{split}
\mathrm{III} 
 = \varepsilon^{3} \int_{\R^{3}} d x \, \Big\langle a(\overline{v}_{t;x})
 \xi_{t}, \Big(\int_{(\R^{3})^{2}} d r \, d s\,
 \Big( u_{N,t} V_{x}(\hat{x})\overline{v}_{N,t} \Big)(r;s)a_{r} a_{s}  \Big)a(v_{t;x})\xi_{t} \Big\rangle  \;.
\end{split}
\end{equation*}
Proceeding as we did for the term $\text{II}$, we find, for $t \in [0;T]$, 
\begin{equation}
\label{eq:estIII}
\begin{split}
|\mathrm{III}| \leq \varepsilon^{3} \sup _{z \in \R^{3}} \big\|[ V_{z}(\hat{x}),\omega_{N,t} ]\big\|_{\tr} \int_{\R^{3}} d x \, \big\| a(v_{t;x}) \xi_{t} \big\|^{2} 
\leq C \veps   \langle \xi_{t}, \cN \xi_{t} \rangle \;,
\end{split}
\end{equation}
where we used that $\overline{v}_{N,t} v_{N,t} \leq 1$.

\medskip

\noindent{\underline{Bound for the term $\text{IV}$}.} Finally, we consider the term $\mathrm{IV}$, containing the quadratic contributions. We rewrite the potential as in (\ref{eq:Vsplit}). We then get:
\begin{equation*}
\begin{split}
\mathrm{IV} & =\varepsilon^{3}  \int_{\R^{3}} d z\, \Big\langle \xi_{t}, \int_{(\R^{3})^{2}} d x d y\, V^{(1)}_{z}(x) V^{(2)}_{z}(y) \omega_{N,t}(y,x) a^{*}(u_{t,y}) a^{*}(\overline{v}_{t,x})  \xi_{t} \Big\rangle
\\
& = \varepsilon^{3} \int_{\R^{3}} d z\, \Big\langle \xi_{t}, \int_{(\R^{3})^{2}} d r d s \,\Big( u_{N,t} V^{(2)}_{z} \omega_{N,t} V^{(1)}_{z} \overline{v}_{N,t} \Big)(r;s)  a^{*}_{r} a^{*}_{s}  \xi_{t} \Big\rangle
\end{split}
\end{equation*}
where we used that $v_{N,t}(s,x) = v_{N,t}(x,s)$. By the Cauchy-Schwarz inequality and by Lemma \ref{lemma:bounds} we obtain
\begin{equation}
\label{eq:bound-IV}
\begin{split}
|\mathrm{IV}| &\leq \varepsilon^{3} \int_{\R^{3}} d z \, \Big\| \int_{(\R^{3})^{2}} d r d s \,\big( u_{N,t} V^{(2)}_{z} \omega_{N,t} V^{(1)}_{z} \overline{v}_{N,t} \Big)(r,s)  a^{*}_{r} a^{*}_{s} \,  \xi_{t} \Big\|
\\
& \leq 
\varepsilon^{3} \int_{\R^{3}} d z \, \Big\| u_{N,t} V^{(2)}_{z}(\hat{x}) \omega_{N,t} V^{(1)}_{z}(\hat{x}) \overline{v}_{N,t}  \Big\|_{\tr}
\\
&
\leq \varepsilon^{3} \bigg(\int_{\R^{3}} d z\, X_{\Lambda}(z)^{-1}\bigg) \sup _{z \in \R^{3}} X_{\Lambda}(z) \Big\|  V^{(2)}_{z}(\hat{x}) \omega_{N,t} V^{(1)}_{z}(\hat{x}) \Big\|_{\tr}
\end{split}
\end{equation}
where we used that $\| u_{N,t}\|_{\mathrm{op}} = \| \overline{v}_{N,t}\|_{\mathrm{op}} = 1$. Next, we estimate:
\begin{equation*}
\Big\|  V^{(2)}_{z}(\hat{x}) \omega_{N,t} V^{(1)}_{z}(\hat{x}) \Big\|_{\tr} \leq C \Big\| \omega_{N,t} V^{(1)}_{z} \Big\|_{\tr} \leq C \Big\| \omega_{N,t} \mathcal{W}_{z}^{(n)} \Big\|_{\tr}
\end{equation*}
where we used that $\| V^{(2)}_{z} \|_{\infty} \leq C$ and that $\| (\mathcal{W}_{z}^{(n)})^{-1} V^{(1)}_{z} \|_{\infty} \leq C$. Hence, using the bound (\ref{eq:H2}), we get:
\begin{equation}\label{eq:estIV}
|\mathrm{IV}| \leq C |\Lambda| \leq C\varepsilon^{3} N \;.
\end{equation}
Notice that, by using the orthogonality between $u_{N,t}$ and $\omega_{N,t}$, we could have written the bound  in \eqref{eq:bound-IV} with $\Big\|  [V^{(2)}_{z}(\hat{x}), \omega_{N,t}] V^{(1)}_{z}(\hat{x}) \Big\|_{\tr}$ instead and eventually improved the estimate \eqref{eq:estIV} by $\veps$.

\medskip

\noindent{\underline{Conclusion.}} Putting together (\ref{eq:derN}), (\ref{eq:estI}), (\ref{eq:estII}), (\ref{eq:estIII}), (\ref{eq:estIV}), we have:
\begin{equation*}
\partial_{t}\langle \xi_{t}, \cN \xi_{t} \rangle \leq C N\varepsilon + C\langle \xi_{t}, \cN \xi_{t} \rangle\;.
\end{equation*}
By the Gronwall lemma, for all $t\in [0,T]$, for a constant $K$ depending on $T$:
\begin{equation*}
\langle \xi_{t}, \cN \xi_{t} \rangle \leq K ( N\varepsilon  + \langle \xi, \cN \xi \rangle )
\end{equation*}
which concludes the proof.
\end{proof}
\subsection{Proof of Theorem \ref{thm:main}}\label{sec:proofthm}
\begin{proof}[Proof of Theorem \ref{thm:main}.] We now prove our main result. It turns out that the distance between the many-body evolution and the Hartree equation can be quantified by the average number of particles in the fluctuation vector $\xi_{t} = \cU_N (t;0) \xi = R_{\omega_{N,t}}^* \psi_{N,t}$, associated with the solution $\psi_{N,t} = e^{-i H_N t /\veps} \psi_N$ of the Schr\"odinger equation (\ref{eq:schr0}), with initial data $\psi_N = R_{\omega_N} \xi$. In fact 
\begin{equation}
\begin{split}
\langle \xi_{t}, \mathcal{N} \xi_{t} \rangle &= \langle \psi_{N,t}, R_{\omega_{N,t}} \mathcal{N} R^{*}_{\omega_{N,t}} \psi_{N,t}  \rangle\\
&= \langle \psi_{N,t}, ( N - 2d\Gamma(\omega_{N,t}) + \mathcal{N}) \psi_{N,t}\rangle\\
&= 2\tr\, \gamma^{(1)}_{N,t} (1 - \omega_{N,t})\;,\label{eq:Ndist}
\end{split}
\end{equation}
where we used that $\tr\,\gamma^{(1)}_{N,t} = N$. Thus, we find 
\begin{equation}\label{eq:gamma2N}
\begin{split}
\| \gamma^{(1)}_{N,t} - \omega_{N,t} \|_{\text{HS}}^{2} &= \tr\, |\gamma^{(1)}_{N,t} - \omega_{N,t}|^{2} \\
&= \tr ( \gamma^{(1)2}_{N,t} + \omega_{N,t}^{2} - 2\gamma^{(1)}_{N,t} \omega_{N,t})\\
&\leq 2\tr\, \gamma^{(1)}_{N,t} (1 - \omega_{N,t}) \equiv \langle \xi_{t}, \mathcal{N} \xi_{t} \rangle\;.
\end{split}
\end{equation}
In the second step we used the cyclicity of the trace, while in the last step we used that $\gamma^{(1)}_{N,t} \leq 1$, $\omega_{N,t} \leq 1$ and that $\tr\, \gamma^{(1)}_{N,t} = \tr\, \omega_{N,t} = N$. On the other hand, the quantity $\langle \xi_{t}, \mathcal{N} \xi_{t} \rangle$ is controlled by the distance between $\gamma^{(1)}_{N,t}$ and $\omega_{N,t}$, in the trace norm topology. In fact:
\begin{equation*}
\tr\, \gamma^{(1)}_{N,t} (1 - \omega_{N,t}) = \tr\, (\gamma^{(1)}_{N,t} - \omega_{N,t}) (1 - \omega_{N,t}) \leq \| \gamma^{(1)}_{N,t} - \omega_{N,t} \|_{\text{tr}}\;.
\end{equation*}
In particular, the assumption \eqref{eq:trdatum}  implies that 
\begin{equation}\label{eq:estxi0}
\langle \xi, \mathcal{N} \xi \rangle \leq C\varepsilon^{\delta} N\;.
\end{equation}
Hence, thanks to Proposition \ref{prp:fluct}, we have:
\begin{equation*}
\langle \xi_{t}, \mathcal{N} \xi_{t} \rangle \leq C \max\{ \veps^{\delta}, \veps \} N\;;
\end{equation*}
plugging this bound in (\ref{eq:gamma2N}), the final claim (\ref{eq:mainclaim}) follows. This concludes the proof of Theorem \ref{thm:main}.
\end{proof}

\section{Propagation of the Semiclassical Structure: Nonrelativistic case}\label{sec:propsc}
\label{sec:prop_semiclassical}
Here we shall prove Theorem \ref{thm:propcomm}, which is the key technical ingredient to control the growth of the number of fluctuations around the Hartree dynamics, Proposition \ref{prp:fluct}. In what follows, we will denote multi-indices in $\mathbb{N}^{3}$ by Greek letters, and we shall denote the length of a multi-index by $|\alpha| := \sum_{j} \alpha_{j}$. Moreover, unless otherwise specified, we will denote generic constants, possibly depending on $n$ and $T$, by $C$ and $K$, with the understanding that these constants can be different on different lines.

Throughout the section, we will make use of the following elementary lemma. 
\begin{lemma}[Monotonicity Properties of the Trace Norm]\label{lem:mon} Let $A, B, C$ be bounded operators on $L^{2}(\mathbb{R}^{3})$ such that $|A|^{2}\leq |B|^{2}$. Suppose that $AC$ and $BC$ are trace class. Then:
\begin{equation*}
\| AC \|_{\tr} \leq \| BC \|_{\tr}\;.
\end{equation*}
\end{lemma}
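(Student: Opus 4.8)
The plan is to reduce the statement to the operator monotonicity of the square root. First I would observe that the hypothesis $|A|^2 \le |B|^2$, i.e. $A^*A \le B^*B$, is stable under conjugation by the bounded operator $C$: for every $\psi \in L^2(\mathbb{R}^3)$,
\[
\langle \psi, C^*A^*A\,C\psi \rangle = \langle C\psi, A^*A\,C\psi\rangle \le \langle C\psi, B^*B\,C\psi\rangle = \langle \psi, C^*B^*B\,C\psi\rangle\;,
\]
so that $|AC|^2 = C^*A^*A\,C \le C^*B^*B\,C = |BC|^2$ as non-negative bounded operators.

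Next I would invoke the Löwner–Heinz theorem: the function $t \mapsto \sqrt{t}$ is operator monotone on $[0,\infty)$, hence $0 \le |AC|^2 \le |BC|^2$ implies $0 \le |AC| \le |BC|$. Finally, since $BC$ is trace class by assumption, $|BC|$ is a non-negative trace class operator; the inequality $0 \le |AC| \le |BC|$ then gives, after taking traces (or summing over any orthonormal basis of eigenvectors of the involved operators / using $\tr X \le \tr Y$ whenever $0 \le X \le Y$ with $Y$ trace class),
\[
\|AC\|_{\tr} = \tr |AC| \le \tr |BC| = \|BC\|_{\tr}\;,
\]
which is the claim; in particular this also re-derives that $AC$ is trace class without using that hypothesis.

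There is essentially no serious obstacle here: the only nontrivial input is the operator monotonicity of $t^{1/2}$, which is a standard fact (Löwner–Heinz), and the monotonicity of the trace on the positive cone restricted to trace class operators. The one point worth stating carefully is that conjugation $X \mapsto C^*XC$ preserves operator order, which is the elementary computation displayed above. An alternative route, if one prefers to avoid quoting operator monotonicity of the square root, is to use the Douglas factorization lemma: $A^*A \le B^*B$ yields $A = SB$ for some contraction $S$ on $L^2(\mathbb{R}^3)$, whence $\|AC\|_{\tr} = \|SBC\|_{\tr} \le \|S\|_{\op}\,\|BC\|_{\tr} \le \|BC\|_{\tr}$; I would include whichever of the two arguments is shorter in context, most likely the square-root one.
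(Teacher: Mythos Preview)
Your proof is correct and follows exactly the same approach as the paper: conjugate the inequality $|A|^2 \le |B|^2$ by $C$ to obtain $|AC|^2 \le |BC|^2$, apply the operator monotonicity of the square root, and take traces. The paper's argument is just the one-line compressed version of what you wrote; the Douglas factorization alternative you mention is a nice extra but not used there.
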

\begin{proof}
We write:
\begin{equation*}
\| AC \|_{\text{tr}} = \tr\, \sqrt{ (AC)^{*} AC } = \tr\, \sqrt{ C^{*} |A|^{2} C } \leq \tr\, \sqrt{ C^{*} |B|^{2} C } = \| BC \|_{\text{tr}}\;,
\end{equation*}
where the inequality follows from the operator monotonicity of the square root. This concludes the proof of the lemma.
\end{proof}
\subsection{Evolution of the Localization Operators}
With respect to previous work, \cite{BPS,BPS2, BJPSS,PRSS}, one important difference is the presence of the localization operators $\mathcal{W}_{z}^{(n)}$ in the semiclassical structure defined in Assumption \ref{ass:sc}.  
To propagate these bounds, we need to control the behavior of the localization operators under the Hartree dynamics $U(t;s)$, defined by
\begin{equation*}
\label{eq:Hartree-dyn}
i\veps \partial_{t}U(t;s) = (-\veps^{2}\Delta + V \ast \rho_{t}) U(t;s) \;,
\qquad U(s;s) =1 \;,
\end{equation*}
where $\rho_{t}(x) = \veps^{3} \omega_{N,t}(x;x)$, $\omega_{N,t}$ is the solution of the Hartree equation with initial datum $\omega_{N}$. To achieve this, a key role will be played by the following proposition.
\begin{proposition}[No-Concentration Bound for Short Times]\label{prp:density} Under the assumption on the potential $V$ of Theorem \ref{thm:main}, and under the assumption of Eq.~(\ref{eq:assump_density}) on the initial datum $\omega_{N}$,
the following is true. There exists $T_{*}>0$ independent of $\varepsilon$ such that:
\begin{equation}\label{eq:densityt}
\sup_{t\in [0,T_{*}]} \sup_{z\in \mathbb{R}^{3}} \tr\, \mathcal{W}^{(1)}_{z} \omega_{N,t} \leq \varepsilon^{-3} C\;.
\end{equation}
\end{proposition}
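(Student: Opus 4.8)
The plan is to derive a closed differential inequality for the quantity
\[
M(t) := \sup_{z\in\mathbb{R}^3} \tr\, \mathcal{W}^{(1)}_z \,\omega_{N,t}\,,
\]
and to show that $M(t)$ cannot grow faster than a suitable function of itself, which is finite on a time interval $[0,T_*]$ with $T_*$ independent of $\veps$. First I would write, using the Hartree equation and the cyclicity of the trace,
\[
i\veps\,\partial_t \tr\, \mathcal{W}^{(1)}_z\,\omega_{N,t} = \tr\, \mathcal{W}^{(1)}_z\,\big[-\veps^2\Delta + V*\rho_t,\ \omega_{N,t}\big] = \tr\, \big[\mathcal{W}^{(1)}_z,\ -\veps^2\Delta\big]\,\omega_{N,t} + \tr\, \big[\mathcal{W}^{(1)}_z,\ V*\rho_t\big]\,\omega_{N,t}\,.
\]
The potential-energy commutator $[\mathcal{W}^{(1)}_z, V*\rho_t]$ is a multiplication operator (both factors multiply in position space) and hence vanishes, so only the kinetic term contributes. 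Computing $[\mathcal{W}^{(1)}_z, -\veps^2\Delta]$ explicitly gives, schematically, $-\veps^2\big(2\nabla\mathcal{W}^{(1)}_z\cdot\nabla + (\Delta\mathcal{W}^{(1)}_z)\big)$; since $|\nabla\mathcal{W}^{(1)}_z|\lesssim |z-\hat x|^3/(1+|z-\hat x|^4)^2$ decays like $|z-\hat x|^{-5}$ and $\Delta\mathcal{W}^{(1)}_z$ like $|z-\hat x|^{-6}$, both can be bounded pointwise by $C\,\mathcal{W}^{(1)}_z$, up to shifts in $z$. I would then estimate the kinetic contribution by splitting $\nabla\omega_{N,t}$ off via $[\veps\nabla,\omega_{N,t}]$ (controlled, at $t$, by the propagated bound \eqref{eq:propcomm2} of Theorem~\ref{thm:propcomm}) plus $\veps\nabla\,\omega_{N,t}$ acting against the smooth weight; the upshot is that $\partial_t \tr\,\mathcal{W}^{(1)}_z\,\omega_{N,t}$ is bounded, uniformly in $z$, by a constant times $\veps^{-3}$ times a polynomial in $M(t)$ (or more precisely in the quantities appearing in Assumption~\ref{ass:sc} that are already known to propagate). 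This produces an inequality of the form $M'(t)\le C\veps^{-3} + C\,M(t)$, or possibly $M'(t)\le P(M(t))$ for a fixed polynomial $P$ with $\veps$-independent coefficients after normalizing by $\veps^{-3}$, and then a standard comparison argument yields a time $T_*>0$, independent of $\veps$, on which $M(t)\le C\veps^{-3}$.

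A cleaner route, which I would actually prefer, avoids invoking the full strength of Theorem~\ref{thm:propcomm} (whose proof presumably uses Proposition~\ref{prp:density} itself, so one must be careful about circularity): work directly with the free flow. Using the factorization $\mathcal{W}^{(n)}_z(t) = e^{-i\veps t\Delta}\,\mathcal{W}^{(n)}_z\,e^{i\veps t\Delta}$ built into \eqref{eq:defW}, one conjugates the density by the free evolution and compares with the assumed bound \eqref{eq:assump_density} at $t=0$. Concretely, writing $\widetilde\omega_{N,t} := e^{i\veps t\Delta}\omega_{N,t}e^{-i\veps t\Delta}$, the Hartree equation becomes $i\veps\partial_t\widetilde\omega_{N,t} = [\,\widetilde{V*\rho_t}(t),\widetilde\omega_{N,t}\,]$ with $\widetilde{V*\rho_t}(t) = e^{i\veps t\Delta}(V*\rho_t)e^{-i\veps t\Delta}$, so that
\[
\tr\,\mathcal{W}^{(1)}_z\,\omega_{N,t} = \tr\,\mathcal{W}^{(1)}_z(-t)\,\widetilde\omega_{N,t}\,,
\]
and I would differentiate this, obtaining a commutator of $\mathcal{W}^{(1)}_z(-t)$ with the conjugated potential, which is now a genuine pseudodifferential-type term of size controlled by $\|V*\rho_t\|_\infty\lesssim \|V\|_1\,\|\rho_t\|_\infty \lesssim \|V\|_1\,\veps^{3}\sup_z\tr\,\mathcal{W}^{(1)}_z\omega_{N,t}$ (after using that $\omega_{N,t}(x;x)$ is pointwise controlled by the weighted trace), times the Lipschitz constant in $t$ of the weight. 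This again closes to $M'(t)\le C\,\veps^{-3} + C\,M(t)^2\,\veps^{0}$ after normalization, giving a short-time bound by Gronwall/comparison.

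The main obstacle, and the step that requires the most care, is controlling the particle density $\rho_t(x) = \veps^3\omega_{N,t}(x;x)$ pointwise in terms of the weighted trace $\tr\,\mathcal{W}^{(1)}_z\omega_{N,t}$: one needs an inequality of the shape $\omega_{N,t}(x;x)\le C\,\veps^{-?}\,\sup_z\tr\,\mathcal{W}^{(1)}_z\omega_{N,t}$ with the correct power of $\veps$, which is a Sobolev-type / local-smoothing estimate for density matrices and must be proved using the propagated momentum commutator bound \eqref{eq:propcomm2} (so the diagonal of $\omega_{N,t}$ is not merely integrable against a weight but genuinely bounded). The other delicate point is keeping the argument non-circular: if Theorem~\ref{thm:propcomm} is proved after and using Proposition~\ref{prp:density}, then within the proof of the Proposition one may only use the $t=0$ hypotheses of Assumption~\ref{ass:sc} together with whatever a~priori propagation of \eqref{eq:propcomm1}--\eqref{eq:propcomm2} can be established on a small interval by an independent bootstrap, run simultaneously with the bootstrap for $M(t)$. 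I expect the bookkeeping of all these weighted commutator norms, and the extraction of a $T_*$ that genuinely does not degenerate as $\veps\to 0$, to be where most of the real work lies; the algebraic structure (only the kinetic term contributes to leading order, the potential term being a small perturbation of size $O(\veps^3 M(t))$ relative to the natural scale) is what makes the short-time bound possible.
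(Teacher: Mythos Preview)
Your second route (interaction picture, compare with the free flow) is the one the paper takes, and your instinct that it avoids circularity with Theorem~\ref{thm:propcomm} is correct. But two of the obstacles you single out are not the real ones, and the mechanism that actually closes the estimate is different from what you sketch.

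First, you do \emph{not} need pointwise control of $\omega_{N,t}(x;x)$. The potential enters only through $V*\rho_t$ and finitely many of its derivatives, and for these one writes
\[
\|D^{j}(V*\rho_t)\|_{\infty}\;\le\;\big\|(1+|\cdot|^{4})D^{j}V\big\|_{\infty}\,\big\|\W^{(1)}*\rho_t\big\|_{\infty}
\;=\;\big\|(1+|\cdot|^{4})D^{j}V\big\|_{\infty}\,\varepsilon^{3}\sup_{z}\tr\,\W^{(1)}_{z}\omega_{N,t}\,,
\]
using only that $\rho_t\ge 0$ and the assumed decay of $V$. So the ``Sobolev-type'' step you worry about is simply not present; $M(t)$ feeds back into the estimate exactly through $\varepsilon^{3}M(t)$, with no loss.

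Second, the crucial $\varepsilon$-gain does not come from ``the Lipschitz constant in $t$ of the weight''. In the interaction picture one must estimate, for each $\phi$,
\[
\big\langle \phi,\ \big[\,\W^{(1)}_{z}(t_0-\tau),\,V*\rho_\tau\,\big]\,\phi\big\rangle\,,
\]
and this is where the paper's Lemma~\ref{lem:conj} (a direct computation of $[\,(1+|\hat x(s)|^{2k})^{-1},F(\hat x)\,]$ by expanding $[\,F(\hat x),|\hat x(s)|^{2k}\,]$) produces a factor $\varepsilon |t_0-\tau|$ times $\|D^{\le 2k}(V*\rho_\tau)\|_\infty$ times $\langle\phi,\W^{(1)}_{z}(t_0-\tau)\phi\rangle$. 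Combined with the previous paragraph this yields the integral inequality
\[
\alpha(t;t_0)\ \le\ \alpha(0;t_0)\ +\ K T_*\int_0^t\alpha(\tau;\tau)\,\alpha(\tau;t_0)\,d\tau\,,\qquad
\alpha(t;t_0):=\varepsilon^{3}\sup_{z}\tr\,U^{\mathrm I}(t;0)^*\W^{(1)}_{z}(t_0)\,U^{\mathrm I}(t;0)\,\omega_N\,.
\]
Note the appearance of two time arguments: $\alpha(\tau;\tau)=\varepsilon^{3}M(\tau)$ comes from the $\|V*\rho_\tau\|_{C^{k}}$ factor, while the Gronwall variable is $\alpha(\cdot;t_0)$. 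To close, the paper takes $f(t):=\sup_{t_0\in[0,T_*]}\alpha(t;t_0)$ and obtains $f(t)\le f(0)+KT_*\int_0^t f(\tau)^2\,d\tau$, a genuinely quadratic inequality whose solution stays bounded on $[0,T_*]$ for $T_*$ small depending only on $f(0)$ and $K$.

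This last step is exactly why Assumption~\ref{ass:sc} requires \eqref{eq:assump_density} with the \emph{time-evolved} weight $\W^{(1)}_z(t)$ for $t\in[0,T_1]$: one needs $f(0)=\varepsilon^{3}\sup_{t_0\in[0,T_*]}\sup_z\tr\,\W^{(1)}_z(t_0)\,\omega_N\le C$, which is not available from the $t_0=0$ case alone. Your proposal does not mention this, and without it the bootstrap does not start.

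Your first route, differentiating $\tr\,\W^{(1)}_z\omega_{N,t}$ directly, runs into the difficulty you anticipated: the kinetic commutator produces $\varepsilon\,\nabla\W^{(1)}_z\cdot(\varepsilon\nabla)$ acting on $\omega_{N,t}$, and controlling this requires exactly the propagated commutator bounds of Theorem~\ref{thm:propcomm}, which are proved \emph{after} Proposition~\ref{prp:density}. The interaction-picture route avoids this by absorbing the kinetic term into the free evolution from the outset.
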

%
%
We will postpone the proof of Proposition \ref{prp:density} after the proof of the next proposition, where we control the propagation of the localization operators. 
\begin{proposition}[Bounds on the Evolution of the Localization Operator]\label{prp:propU} Under the same assumptions of Theorem \ref{thm:main}, consider the modified Hartree generator $U_{p}(t;s)$, defined by 
\begin{equation*}
i\varepsilon \partial_{t} U_{p}(t;s) = ( -\varepsilon^{2}\Delta + V*\rho_{t} + i\varepsilon^{2} p \cdot \nabla ) U_{p}(t;s)\;,\qquad U_{p}(s;s) = 1\;.
\end{equation*}
Then, for all $z\in \mathbb{R}^{3}$, $t_{0} \in \mathbb{R}$, for any $0\leq s,t \leq T$ and any $1\leq k \leq 2n$ the following is true:
\begin{equation}\label{eq:propU1}
U_{p}(t;s)^{*} \mathcal{W}^{(k)}_{z}(t_{0}) U_{p}(t;s) \leq C\mathcal{W}^{(k)}_{z + \veps p(t-s)}(t_{0} + t - s)\;.
\end{equation}
\end{proposition}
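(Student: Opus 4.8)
The plan is to derive the operator inequality \eqref{eq:propU1} by a differential argument in the single parameter $\tau$ that interpolates between the two sides, exploiting that both the conjugation by $U_p(t;s)$ and the free evolution $t_0 \mapsto t_0 + t - s$ of the localization operator (together with the spatial shift $z \mapsto z + \veps p (t-s)$) can be absorbed into a single "effective" Heisenberg flow. Concretely, I would fix $s$, $z$, $t_0$, $k$, and set
\[
A(t) := U_p(t;s)^* \, \W^{(k)}_{z + \veps p (t-s)}(t_0 + t - s) \, U_p(t;s)\;,
\]
so that $A(s) = \W^{(k)}_z(t_0)$, and aim to show $A(t) \le C A(s)$, or more robustly a Gronwall-type bound $\partial_t A(t) \le (C/\veps)\,(\text{something controlled by }A(t))$ in the sense of quadratic forms. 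The key point is that $\W^{(k)}_{z}(t_0) = (1 + |\hat x(t_0) - z|^{4k})^{-1}$, and $\hat x(t_0) = \hat x - 2i\veps t_0 \nabla$ is the free Heisenberg evolution of position; the shift in $z$ and in $t_0$ is precisely designed so that the "kinematic" part of $\partial_t A$ — the part coming from $-\veps^2\Delta + i\veps^2 p\cdot\nabla$ in $U_p$ and from the explicit $t_0$- and $z$-dependence — cancels exactly. This reduces the problem to controlling the commutator of $\W^{(k)}$ with the potential term $V * \rho_t$.

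The main steps, in order: (i) Compute $\partial_t A(t)$ using the Duhamel/Heisenberg rule; the contribution of $-\veps^2\Delta + i\veps^2 p\cdot\nabla$ to the commutator $[\,\cdot\,, \W^{(k)}]$ should be matched, after a short calculation using $\partial_{t_0} \W^{(k)}_z(t_0) = -2i\veps [\Delta, \cdot]$-type identities and $\partial_z \W^{(k)}_z$, by the explicit $t_0$- and $z$-derivatives so that only the potential commutator $\tfrac{1}{i\veps}U_p^*[\,V*\rho_t\,, \W^{(k)}_{z+\veps p(t-s)}(t_0+t-s)\,]U_p$ survives. (ii) Estimate this surviving term: write $\W^{(k)} = (\W^{(k)})^{1/2}(\W^{(k)})^{1/2}$ and bound $\|[V*\rho_t, \W^{(k)}_w(r)]\,(\W^{(k)}_w(r))^{-1/2}\|$ or rather realize the commutator as $(\W^{(k)})^{1/2} \big(\text{bounded}\big) (\W^{(k)})^{1/2}$ times a factor of size $O(\veps)$, using that $V*\rho_t$ has bounded derivatives (here one needs the $L^\infty$-bound on $\rho_t$, hence on $\|\nabla(V*\rho_t)\|_\infty$, which is where Proposition~\ref{prp:density} / assumption \eqref{eq:assump_density} enters) and that $\|\nabla \W^{(k)}_w(r)\| \le C\veps^{-1}(\W^{(k)}_w(r))$-type localized bounds hold; the net effect is $\partial_t A(t) \le C A(t)$ in form sense. (iii) Apply a Gronwall argument for operator inequalities on $[s,t] \subseteq [0,T]$ to conclude $A(t) \le e^{C(t-s)} A(s) \le C \W^{(k)}_z(t_0)$, which is \eqref{eq:propU1}.

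The main obstacle I anticipate is step (ii): making rigorous the claim that the commutator $[V*\rho_t, \W^{(k)}_w(r)]$ is "small and localized", i.e.\ controlled in form sense by $\veps \cdot \W^{(k)}_w(r)$ times a constant. This requires (a) careful use of the fact that $\W^{(k)}_w(r)$ is a smooth function of $\hat x(r) = \hat x - 2i\veps r \nabla$, so that $[f(\hat x(r)), g(\hat x)]$ is genuinely of order $\veps$ in a suitable sense — one would expand $g = V*\rho_t$ in Fourier modes $e^{ip\cdot\hat x}$ and use $[e^{ip\cdot\hat x}, \W^{(k)}_w(r)]$, which by the semiclassical structure of $\hat x(r)$ picks up a factor $\veps|p||r|$ or similar; (b) controlling $\|\nabla (V*\rho_t)\|_\infty$ uniformly on $[0,T]$, which genuinely relies on the non-concentration estimate \eqref{eq:densityt} and hence restricts $T \le T_*$; and (c) keeping all bounds uniform in $z$, $t_0$, $p$, $N$ and $|\Lambda|$, which is the whole point of the extended-gas setting. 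A subsidiary technical issue is justifying the formal manipulations (differentiability of $A(t)$, density/domain questions for the unbounded operators $\nabla$, $\Delta$), which I would handle by working first with regularized localization operators or on a suitable core and then passing to the limit.
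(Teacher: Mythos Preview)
Your proposal is correct and follows essentially the same route as the paper: the paper implements your ``kinematic cancellation'' by passing to the interaction picture $U_p^{\mathrm I}(t;s)=U_p^0(t;0)^*U_p(t;s)U_p^0(s;0)$ (which is exactly the clean way to make your step (i) rigorous and absorb the $z$- and $t_0$-shifts), and then reduces, as you do, to bounding the commutator of $\W^{(k)}$ with $V*\rho_\tau$ in form sense by $C\veps\,\W^{(k)}$, which the paper isolates as Lemma~\ref{lem:conj} together with the no-concentration estimate of Proposition~\ref{prp:density}. One cosmetic point: with your specific choice of $A(t)$ the free part does not cancel on the nose (the sign of the $t_0$-shift needed for exact cancellation is the opposite one), but since $z$ and $t_0$ are free parameters this is immaterial and is automatically handled by the interaction-picture formulation.
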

\begin{remark}\label{rmk:prop5.1}
Using the inequalities:
\begin{equation}\label{eq:cC}
\begin{split}
c \mathcal{W}^{(k/2)2}_{z}(t_{0}) &\leq \mathcal{W}^{(k)}_{z}(t_{0}) \leq C \mathcal{W}^{(k/2)2}_{z}(t_{0})
\end{split}
\end{equation}
Eq.~(\ref{eq:propU1}) also implies:
\begin{equation*}
U_{p}(t;s)^{*} \mathcal{W}^{(k/2)2}_{z}(t_{0}) U_{p}(t;s) \leq C\mathcal{W}^{(k/2)2}_{z + \veps p(t-s)}(t_{0} + t - s) \;.
\end{equation*}
These inequalities are of the form $|A|^{2} \leq |B|^{2}$, and will be extensively used in combination with Lemma \ref{lem:mon}. 
\end{remark}
The proof of Proposition \ref{prp:propU} relies on the following technical lemma. 
\begin{lemma}\label{lem:conj} Let $k \in\mathbb{N}$ and let $F:\mathbb{R}^{3} \to \mathbb{C}$ be such that:
\begin{equation*}
\| D^{j} F \|_{\infty} := \max_{\alpha: |\alpha| = j} \| \partial^{\alpha} F \|_{\infty} <\infty \qquad \text{for all $j\leq 2 k$.}
\end{equation*}
Then, the following is true:
\begin{equation}\label{eq:conj1}
\begin{split}
\Big\| \frac{1}{1 + |\hat x(t)|^{2k}} F(\hat x) (1 + |\hat x(t)|^{2k}) - F(\hat x)\Big\|_{\op} &\leq C_{k} \Big(\max_{0 \leq j \leq 2k}\| D^{j} F \|_{\infty}  \Big) |t| \varepsilon(1 + t^{2k}\varepsilon^{2k})\;,
\end{split}
\end{equation}
for some constants $C_{k}$.
\end{lemma}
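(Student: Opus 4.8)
The plan is to control the operator $\frac{1}{1+|\hat x(t)|^{2k}} F(\hat x) (1+|\hat x(t)|^{2k}) - F(\hat x)$ by expanding $|\hat x(t)|^{2k}$ in terms of $\hat x$ and the momentum operator, and then tracking the commutators that arise when we conjugate $F(\hat x)$. Recall from \eqref{free_evolution_x} that $\hat x(t) = \hat x - 2i\veps t\nabla$, so $|\hat x(t)|^{2k} = (|\hat x|^2 - 2i\veps t(\hat x\cdot\nabla + \nabla\cdot\hat x) - 4\veps^2 t^2\Delta)^k$. The key point is that the conjugation of $F(\hat x)$ by a polynomial $P(\hat x(t))$ differs from $F(\hat x)$ only through commutators $[F(\hat x), P(\hat x(t))]$, and since $[\hat x, F(\hat x)] = 0$, every such commutator is generated by the momentum part of $\hat x(t)$, hence carries at least one factor of $\veps t$. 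More precisely, $[\nabla_j, F(\hat x)] = (\partial_j F)(\hat x)$, so each commutator both lowers the total degree and produces a derivative of $F$ together with a factor $\veps t$.

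\textbf{Key steps.} First I would write $1 + |\hat x(t)|^{2k} = 1 + |\hat x|^{2k} + R$, where $R$ collects all terms in the binomial-type expansion containing at least one momentum factor; schematically $R = \sum_{\ell=1}^{k} (\veps t)^\ell\, R_\ell$ with $R_\ell$ a (noncommutative) polynomial in $\hat x$ and $\nabla$ of combined degree at most $2k$, and with at least $\ell$ factors of $\nabla$. Since $[\,\cdot\,, F(\hat x)]$ kills the $|\hat x|^{2k}$ part, we have
\begin{equation*}
\frac{1}{1+|\hat x(t)|^{2k}} F(\hat x) (1+|\hat x(t)|^{2k}) - F(\hat x) = \frac{1}{1+|\hat x(t)|^{2k}}\, [F(\hat x), R]\;.
\end{equation*}
The operator $\frac{1}{1+|\hat x(t)|^{2k}}$ is a bounded function of $\hat x(t)$ with norm $\le 1$, so it suffices to bound $\|[F(\hat x), R]\|_{\op}$. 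Next I would expand $[F(\hat x), R]$ using the Leibniz rule for commutators and the identity $[\nabla_j, F(\hat x)] = (\partial_j F)(\hat x)$ repeatedly: every time a $\nabla$ is commuted through $F(\hat x)$ it produces a derivative of $F$ (contributing $\|D^j F\|_\infty$) and reduces the number of surviving derivatives. One checks that in the term $(\veps t)^\ell R_\ell$, at least one commutator must be taken (so we gain at least one further factor, but this is already accounted for by noting $R$ starts at $\ell = 1$), and each remaining factor of $\hat x$ or $\nabla$ after commuting acts as multiplication by a bounded function of $\hat x$ or as $\nabla$ hitting a derivative of $F$; crucially, the full original function $F$ is never differentiated more than $2k$ times, and since $F$ and all derivatives up to order $2k$ are bounded, and the "bare" $\hat x$ and $\nabla$ factors eventually land on compactly-supported-Fourier data via $F$ or its derivatives, every resulting term has finite operator norm controlled by $\max_{0\le j\le 2k}\|D^j F\|_\infty$. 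Accounting for the $\veps$-powers: the term $(\veps t)^\ell R_\ell$, after taking commutators, yields a bound of order $(\veps t)^\ell \cdot C_k \max_j \|D^j F\|_\infty$; since $\ell$ ranges from $1$ to $k$, summing gives $C_k \max_j\|D^j F\|_\infty \cdot \veps|t|(1 + (\veps|t|)^{2k-1}) \le C_k \max_j\|D^j F\|_\infty \cdot \veps|t|(1 + t^{2k}\veps^{2k})$, which is the claimed bound (absorbing the smaller powers into the two displayed ones).

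\textbf{Main obstacle.} The delicate point is the bookkeeping in the commutator expansion: one must verify that when $R_\ell$ contains bare factors of $\hat x$ (from the $|\hat x|^2$-pieces mixed with momentum pieces), those factors, which are \emph{unbounded}, always end up either forming a commutator that eliminates them or being absorbed against a factor of $\frac{1}{1+|\hat x(t)|^{2k}}$ or against $F$ having compactly supported — or at least rapidly decaying — Fourier transform so that $\hat x^m (\partial^\alpha F)(\hat x)$ is bounded. In the application, $F$ is built from $V^{(1)}$, $V^{(2)}$, or $\mathcal{W}^{(n)}$, whose Fourier transforms are integrable with weights, so $\|\hat x^m (\partial^\alpha F)(\hat x)\|_{\op} < \infty$ for the relevant $m, \alpha$; but stating the lemma cleanly requires either restricting to such $F$ or noting that the factors of $\hat x$ in $R_\ell$ are always flanked by at least one uncommuted copy of $\frac{1}{1+|\hat x(t)|^{2k}}$-type damping, which is not quite true termwise. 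I expect the cleanest route is to prove the identity above exactly, then bound $\|[F(\hat x),R]\|_{\op}$ by fully normal-ordering: write every monomial in $R$ as (polynomial in $\hat x$)$\times$(polynomial in $\nabla$) up to lower-order terms, commute all $\nabla$'s to the right through $F(\hat x)$ generating derivatives, and observe that the leftover bare $\hat x$-monomial of degree $m \le 2k$ recombines with $\frac{1}{1+|\hat x(t)|^{2k}}$ only in the prefactor — so in fact one should bound $\big\|\frac{\hat x^m}{1+|\hat x(t)|^{2k}}\big\|_{\op} \le C$ for $m \le 2k$ (true since $|\hat x(t) - \hat x| \le 2\veps|t|\,|\nabla|$ and one compares $|\hat x|$ with $|\hat x(t)|$ plus a momentum correction, which again costs powers of $\veps t$ — a mild circularity resolved by induction on $k$). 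This inductive-on-$k$ structure, handling the extra $(1+t^{2k}\veps^{2k})$ factor, is where the real work lies; the rest is routine multi-index combinatorics.
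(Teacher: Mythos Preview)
Your decomposition $|\hat x(t)|^{2k} = |\hat x|^{2k} + R$ and the identity
\[
\frac{1}{1+|\hat x(t)|^{2k}} F(\hat x)(1+|\hat x(t)|^{2k}) - F(\hat x) = \frac{1}{1+|\hat x(t)|^{2k}}\,[F(\hat x), |\hat x(t)|^{2k}]
\]
are fine, but the claim ``it suffices to bound $\|[F(\hat x), R]\|_{\op}$'' is false: $[F(\hat x), R]$ is \emph{unbounded}. Already from a term like $\hat x_i\,(-2i\varepsilon t\partial_j)\,\hat x_i$ in $R$ you get $[F,\cdot] = \hat x_i\,(-2i\varepsilon t\,\partial_j F)\,\hat x_i$, and $\hat x_i(\partial_j F)(\hat x)\hat x_i$ has infinite operator norm for generic bounded $F$. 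You recognise this in your ``Main obstacle'' paragraph, but the proposed fix --- controlling $\bigl\|\frac{\hat x^m}{1+|\hat x(t)|^{2k}}\bigr\|_{\op}$ by writing $\hat x = \hat x(t) + 2i\varepsilon t\nabla$ and inducting on $k$ --- does not close: the substitution produces unbounded $\nabla$-factors on the right of derivatives of $F$, pushing the derivative count beyond $2k$.

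The paper avoids this entirely by never splitting $\hat x(t)$ into position and momentum parts. The two facts that make the argument work cleanly are: (i) the components $\hat x_i(t)$ are commuting self-adjoint operators, and (ii) $[\hat x_i(t), F(\hat x)] = 2it\varepsilon\,(\partial_i F)(\hat x)$ is again a \emph{bounded multiplication operator}. Iterating the Leibniz rule with $\hat x_i(t)$'s (not with $\hat x$ and $\nabla$ separately) and pushing all surviving $\hat x_i(t)$'s to the left, one writes $[F(\hat x), |\hat x(t)|^{2k}]$ as a finite sum of terms of the form $(\hat x(t))^\beta\cdot (2it\varepsilon)^{|\alpha|}(\partial^\alpha F)(\hat x)$ with $|\alpha|+|\beta|=2k$, $|\alpha|\geq 1$. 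Then $\frac{(\hat x(t))^\beta}{1+|\hat x(t)|^{2k}}$ is a bounded function of the commuting family $\hat x_i(t)$, so its operator norm is at most $\sup_{\xi\in\mathbb{R}^3}\frac{|\xi^\beta|}{1+|\xi|^{2k}}\leq 1$ by joint functional calculus. This immediately gives the stated bound with no induction and no comparison between $\hat x$ and $\hat x(t)$. The moral: keep the prefactor and the leftover polynomial in the \emph{same} variable $\hat x(t)$ so they combine trivially.
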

\begin{proof}[Proof of Lemma \ref{lem:conj}] We have:
\begin{equation}\label{eq:lem1}
\frac{1}{1 + |\hat x(t)|^{2k}} F(\hat x) (1 + |\hat x(t)|^{2k}) - F(\hat x) = \frac{1}{1 + |\hat x(t)|^{2k}} \big[ F(\hat x), |\hat x(t)|^{2k} \big]\;.
\end{equation}
Next, we write:
\begin{equation}\label{eq:Fxn}
\big[ F(\hat x), |\hat x(t)|^{2k} \big] = |\hat{x} (t)|^{2} \big[ F(\hat x), |\hat x(t)|^{2(k-1)} \big] + \big[ F(\hat x), |\hat x(t)|^{2} \big]  |\hat x(t)|^{2(k-1)}\;.
\end{equation}
Consider the last commutator. We have:
\begin{equation*}
\begin{split}
\big[ F(\hat x), |\hat x(t)|^{2} \big] &= \sum_{i=1}^{3} \Big( \hat x_{i}(t) \big[ F(\hat x), \hat x_{i}(t) \big] + \big[ F(\hat x), \hat x_{i}(t) \big]  \hat x_{i}(t)\Big) \\
&= \sum_{i=1}^{3} \Big( 2\hat x_{i}(t) \big[ F(\hat x), \hat x_{i}(t) \big] + \big[ \big[ F(\hat x), \hat x_{i}(t) \big], \hat x_{i}(t)\big]  \Big)\;.
\end{split}
\end{equation*}
Recalling that $\hat x_{i}(t) = \hat x_{i} - i 2 t \varepsilon \partial_{i}$, we get:
\begin{equation*}
\big[ F(\hat x), \hat x_{i}(t) \big] = i 2 t \varepsilon \partial_{i} F(\hat x)\;,\qquad \big[ \big[ F(\hat x), \hat x_{i}(t) \big], \hat x_{i}(t)\big] = -4 t^{2} \varepsilon^{2} \partial_{i}^{2} F(\hat x)\;.
\end{equation*}
Similarly, we can rewrite $\big[ F(\hat x), |\hat x(t)|^{2n} \big]$ as a sum of terms, involving $0\leq j\leq 2k-1$ operators $\hat x_{i}(t)$ on the left times a partial derivative of $F(\hat x)$ of order $2k-j$, multiplied by a factor $(2t)^{2k-j} \varepsilon^{2k-j}$.  It is not difficult to see that:
\begin{equation}\label{eq:notdif}
\begin{split}
\Big\| \frac{1}{1 + |\hat x(t)|^{2k}} \big[ F(\hat x), |\hat x(t)|^{2k} \big] \Big\|_{\op} &\leq \sum_{j=0}^{2k-1} C_{j}  |t|^{2k-j} \varepsilon^{2k-j} \sup _{\alpha: |\alpha| = 2k -j}\Big\| \frac{1}{1 + |\hat x(t)|^{2k}} | \hat x(t) |^{j} \partial^{\alpha} F(\hat x)  \Big\|_{\op} \\ &\leq C_{k} \Big(\max_{0 \leq j \leq 2k}\| D^{j} F \|_{\infty}  \Big)|t| \varepsilon(1 + t^{2k}\varepsilon^{2k})\;.
\end{split}
\end{equation}
This concludes the proof of (\ref{eq:conj1}). 
\end{proof}
\begin{remark}\label{rem:F} As a consequence of the assumption in Eq.~(\ref{eq:assump_density}), the function $V*\rho_{t}$ satisfies the hypotheses of Lemma \ref{lem:conj} for $0\leq t\leq T$, $1\leq k \leq \max (2n,3)$. In fact, for $j\leq \max (4n,7)$ we have:
\begin{equation}\label{eq:bdDkV}
\begin{split}
\| D^{j} V*\rho_{t} \|_{\infty} &\leq \| D^{j} V (1 + |\cdot|^{4}) \|_{\infty} \| \mathcal{W}^{(1)} * \rho_{t}  \|_{\infty} \\
&\leq C_{j} \| \mathcal{W}^{(1)} * \rho_{t}  \|_{\infty}\;,
\end{split}
\end{equation}
where we used the non-negativity of the density $\rho_{t}$ and the assumption (\ref{eq:assV}) on the potential $V$. Next, by Eq.~(\ref{eq:densityt}):
\begin{equation}\label{eq:bnd-W*rho}
\begin{split}
\|\mathcal{W}^{(1)} * \rho_{t}\|_{\infty} &= \sup_{z\in \mathbb{R}^{3}} \varepsilon^{3} \int dy\, \frac{1}{1+| z-y |^{4}} \omega_{N,t}(y,y) \\
&\equiv \sup_{z\in \mathbb{R}^{3}} \varepsilon^{3} \tr\, \mathcal{W}_{z}^{(1)} \omega_{N,t} \\
&\leq C_{T}\;,
\end{split}
\end{equation}
which proves the boundedness of $\| D^{j} V*\rho_{t} \|_{\infty}$ for any $j \leq \max( 4 n,7)$. 
\end{remark}
We are now ready to prove Proposition \ref{prp:propU}.
\begin{proof}[Proof of Proposition \ref{prp:propU}] Let $U_{p}^{0}(t;s) = e^{i( \varepsilon^{2}\Delta - i\varepsilon^{2} p \cdot \nabla ) (t-s)/\varepsilon}$
be the modified free dynamics and notice that
\begin{equation*}
U_{p}^{0}(t;s)^{\ast} \hat{x} U_{p}^{0}(t;s) = \hat{x}(t-s) + \veps p(t-s) \;.
\end{equation*}
We introduce the Hartree dynamics in the interaction picture as:
\begin{equation}\label{eq:intpicdef}
U^{\text{I}}_{p}(t;s) := U_{p}^{0}(t;0)^{*} U_{p}(t;s) U_{p}^{0}(s;0)\;,
\end{equation}
satisfying the following evolution equation:
\begin{equation}\label{eq:genUI}
i \veps \partial_{t} U^{\text{I}}_{p}(t;s) = U^{0}_{p}(t;0)^{*} (V*\rho_{t}) U^{0}_{p}(t;0) U^{\text{I}}_{p}(t;s), \qquad U^{\text{I}}_{p}(s;s) = \mathbf{1}\;.
\end{equation}
Let us start by proving the bound (\ref{eq:propU1}). Recalling \eqref{eq:cC}, we write:
\begin{equation}\label{eq:genUI2}
\begin{split}
U_{p}(t;s)^{*} \mathcal{W}^{(k/2)}_{z}(t_{0})^{2} U_{p}(t;s) &= U_{p}(t;s)^{*} U_{p}^{0}(t;0) U^{0}_{p}(t;0)^{*} \mathcal{W}^{(k/2)}_{z}(t_{0})^{2} U_{p}^{0}(t;0) U_{p}^{0}(t;0)^{*} U_{p}(t;s) \\ 
&\equiv U^{0}_{p}(s;0) U^{\text{I}}_{p}(t;s)^{*} \mathcal{W}^{(k/2)}_{z + \veps  pt}(t_{0} + t)^{2} U^{\text{I}}_{p}(t;s) U^{0}_{p}(s;0)^{*}\;.
\end{split}
\end{equation}
By the arbitrariness of $z$ and $t_{0}$, we will focus on the following operator:
\begin{equation*}
U^{\text{I}}_{p}(t;s)^{*} \mathcal{W}^{(k/2)}_{z}(t_{0})^{2} U^{\text{I}}_{p}(t;s)\;.
\end{equation*}
Let $\phi \in L^{2}(\mathbb{R}^{3})$. We write:
\begin{equation}\label{eq:interp}
\begin{split}
\langle \phi, U^{\text{I}}_{p}(t;s)^{*} \mathcal{W}^{(k/2)}_{z}(t_{0})^{2} U^{\text{I}}_{p}(t;s) \phi \rangle &= \langle \phi,  \mathcal{W}^{(k/2)}_{z}(t_{0})^{2}  \phi \rangle\\
& -\frac{i}{\varepsilon} \int_{s}^{t}d\tau\, i\varepsilon \partial_{\tau} \langle \phi, U^{\text{I}}_{p}(\tau;s)^{*} \mathcal{W}^{(k/2)}_{z}(t_{0})^{2} U^{\text{I}}_{p}(\tau;s) \phi \rangle\;.
\end{split}
\end{equation}
To compute the derivative, recall (\ref{eq:genUI}). We get:
\begin{equation}\label{eq:der}
\begin{split}
i\varepsilon \partial_{\tau} \langle \phi, U^{\text{I}}_{p}(\tau;s)^{*} &\mathcal{W}^{(k/2)}_{z}(t_{0})^{2} U^{\text{I}}_{p}(\tau;s) \phi \rangle \\ &= \langle U^{\text{I}}_{p}(\tau;s)\phi, [ \mathcal{W}^{(k/2)}_{z}(t_{0})^{2}, U^{0}_{p}(\tau;0)^{*} (V*\rho_{\tau}) U^{0}_{p}(\tau;0)] U^{\text{I}}_{p}(\tau;s) \phi  \rangle\\
& = \langle U^{0}_{p}(\tau;0) U^{\text{I}}_{p}(\tau;s)\phi, [ \mathcal{W}^{(k/2)}_{z - \veps p\tau}(t_{0} - \tau)^{2}, V*\rho_{\tau} ] U^{0}_{p}(\tau;0) U^{\text{I}}_{p}(\tau;s) \phi  \rangle\;.
\end{split}
\end{equation}
Let us now focus on the commutator appearing in the scalar product. We introduce the short-hand notations:
\begin{equation}
\label{notation_weight_dynamics}
\widetilde{x}:= \hat{x}(t_{0}-\tau) - z + \veps p \tau \;, \qquad \W^{(k/2)}:= \W_{z - \veps p \tau}^{(k/2)}(t_{0}-\tau) \;.
\end{equation}
We write:
\begin{equation}\label{eq:aa}
[ \big(\mathcal{W}^{(k/2)}\big)^{2}, V*\rho_{\tau} ] = \mathcal{W}^{(k/2)} [ \mathcal{W}^{(k/2)}, V*\rho_{\tau} ] + [ \mathcal{W}^{(k/2)}, V*\rho_{\tau} ]  \mathcal{W}^{(k/2)}\;.
\end{equation}
Consider the first term. We rewrite:
\begin{equation*}
\begin{split}
\big| \langle \phi, \mathcal{W}^{(k/2)} [ \mathcal{W}^{(k/2)}, V*\rho_{\tau} ] \phi\rangle \big| &\leq \big\| \mathcal{W}^{(k/2)} \phi  \big\|  \big\| [ \mathcal{W}^{(k/2)}, V*\rho_{\tau} ] \phi \big\|   \\
&= \big\| \mathcal{W}^{(k/2)} \phi  \big\| \big\| \big( \mathcal{W}^{(k/2)} (V*\rho_{\tau}) (\mathcal{W}^{(k/2)})^{-1} - V*\rho_{\tau} \big) \mathcal{W}^{(k/2)} \phi \big\| \\
&\leq \big\| \big( \mathcal{W}^{(k/2)} (V*\rho_{\tau}) (\mathcal{W}^{(k/2)})^{-1}  - V*\rho_{\tau} \big) \big\|_{\text{op}} \big\| \mathcal{W}^{(k/2)} \phi  \big\|^{2}\;.
\end{split}
\end{equation*}
The function $V*\rho_{\tau}$ satisfies the assumptions of Lemma \ref{lem:conj}, recall Remark \ref{rem:F}. Hence, recalling also (\ref{eq:cC}):
\begin{equation}
\label{eq:use-bnd-W*rho}
\begin{split}
\big| \langle \phi, \mathcal{W}^{(k/2)} [ \mathcal{W}^{(k/2)}, V*\rho_{\tau} ] \phi\rangle \big| &\leq C \varepsilon \| \mathcal{W}^{(1)} * \rho_{\tau}\|_{\infty} |t_{0} - \tau|  \langle \phi, \mathcal{W}^{(k)} \phi  \rangle \\
&\leq K \varepsilon |t_{0} - \tau|  \langle \phi, \mathcal{W}^{(k)} \phi  \rangle\;.
\end{split}
\end{equation}
The same bound holds for the second term in (\ref{eq:aa}).
Therefore:
\begin{equation}\label{eq:estWgro}
\begin{split}
\Big|\langle \phi, \big[ \big(\W^{(k/2)}\big)^{2}, V \ast \rho_{\tau}\big] \phi \rangle \Big|  
&\leq C  \veps |t_{0} - \tau| \langle \phi, \mathcal{W}^{(k)} \phi \rangle\;.
\end{split}
\end{equation}
Let us now come back to (\ref{eq:interp}). The identity (\ref{eq:der}) implies:
\begin{equation*}
\begin{split}
\langle \phi, & U^{\text{I}}_{p}(t;s)^{*} \mathcal{W}^{(k/2)}_{z}(t_{0})^{2} U^{\text{I}}_{p}(t;s) \phi \rangle \leq  \langle \phi,  \mathcal{W}^{(k/2)}_{z}(t_{0})^{2}  \phi \rangle \nonumber\\
& + \frac{C}{\varepsilon}\int_{s}^{t} d\tau\, \Big|\langle U^{0}_{p}(\tau;0) U^{\text{I}}_{p}(\tau;s)\phi, [ \mathcal{W}^{(k/2)}_{z - \veps p\tau}(t_{0} - \tau)^2, V*\rho_{\tau} ] U^{0}_{p}(\tau;0) U^{\text{I}}_{p}(\tau;s) \phi  \rangle\Big|\;,
\end{split}
\end{equation*}
which we estimate as, using the bound (\ref{eq:estWgro}) with $\phi$ replaced by $U^{0}_{p}(\tau;0) U^{\text{I}}_{p}(\tau;s) \phi$:
\begin{equation*}
\begin{split}
\langle \phi, &U^{\text{I}}_{p}(t;s)^{*} \mathcal{W}^{(k/2)}_{z}(t_{0})^2 U^{\text{I}}_{p}(t;s) \phi \rangle  \leq  \langle \phi,  \mathcal{W}^{(k/2)}_{z}(t_{0})^2  \phi \rangle \\ &+ C\int_{s}^{t} d\tau\, |t_{0} - \tau| \langle U^{0}_{p}(\tau;0) U^{\text{I}}_{p}(\tau;s)\phi , \mathcal{W}^{(k)}_{z - \veps p\tau}(t_{0} - \tau) U^{0}_{p}(\tau;0) U^{\text{I}}_{p}(\tau;s) \phi\rangle\;.
\end{split}
\end{equation*}
Using that
\begin{equation*}
\begin{split}
\langle U^{0}_{p}(\tau;0) &U^{\text{I}}_{p}(\tau;s) \phi, \mathcal{W}^{(k)}_{z - \veps p\tau}(t_{0} - \tau) U^{0}_{p}(\tau;0) U^{\text{I}}_{p}(\tau;s)\phi \rangle  = \langle U^{\text{I}}_{p}(\tau;s)\phi, \mathcal{W}^{(k)}_{z}(t_{0}) U^{\text{I}}_{p}(\tau;s) \phi\rangle\;,
\end{split}
\end{equation*}
and recalling (\ref{eq:cC}), we get, for all $0\leq s,t\leq T$, by the Gronwall lemma:
\begin{equation}\label{eq:groW}
\langle \phi, U^{\text{I}}_{p}(t;s)^{*} \mathcal{W}^{(k)}_{z}(t_{0}) U^{\text{I}}_{p}(t;s) \phi \rangle \leq C \langle \phi,  \mathcal{W}^{(k)}_{z}(t_{0})  \phi \rangle\;,
\end{equation}
where the constant $C$ depends on $t_0$, $T$ and $n$.
Going back to (\ref{eq:genUI2}), we have, for all $\phi \in L^{2}(\mathbb{R}^{3})$:
\begin{equation*}
\begin{split}
\langle \phi, U_{p}(t;s)^{*} \mathcal{W}^{(k)}_{z}(t_{0}) U_{p}(t;s) \phi \rangle & = \langle \phi, U^{0}_{p}(s;0) U^{\text{I}}_{p}(t;s)^{*} \mathcal{W}^{(k)}_{z + \veps pt}(t_{0} + t) U^{\text{I}}_{p}(t;s) U^{0}_{p}(s;0)^{*} \phi \rangle \\
& \leq C \langle \phi, U^{0}_{p}(s;0) \mathcal{W}^{(k)}_{z +\veps  pt}(t_{0} + t) U^{0}_{p}(s;0)^{*} \phi \rangle \\
& = C \langle \phi, \mathcal{W}^{(k)}_{z + \veps p(t-s)}(t_{0} + t - s)  \phi \rangle\;,
\end{split}
\end{equation*}
where the inequality follows from (\ref{eq:groW}) with $z$ replaced by $z + \veps pt$ and $t_{0}$ replaced by $t_{0} + t$. This proves the bound in (\ref{eq:propU1}). 
\end{proof}
To conclude this section, we prove Proposition \ref{prp:density}. The proof is a simple adaptation of the one of Proposition \ref{prp:propU}.
\begin{proof}[Proof of Proposition \ref{prp:density}.] Let $\omega_{N,t} = \sum_{j=1}^{N} |\phi_{j,t}\rangle \langle \phi_{j,t}|$, and take $0\leq t \leq T_{*}$, with $T_{\ast}\leq T_{1}$ to be suitably chosen. We start by writing:
\begin{equation}
\label{eq:trWw-Unit}
\begin{split}
\tr\, \mathcal{W}^{(1)}_{z} \omega_{N,t} &= \sum_{j=1}^{N} \langle \phi_{j,t}, \mathcal{W}^{(1)}_{z}  \phi_{j,t}\rangle \\
&\equiv \sum_{j=1}^{N} \langle \phi_{j}, U(t;0)^{*} \mathcal{W}^{(1)}_{z} U(t;0) \phi_{j} \rangle\;,
\end{split}
\end{equation}
where $U(t;0) = U_{p=0}(t;0)$ is the Hartree dynamics, see \eqref{eq:Hartree-dyn}.
Let $t_{0}\in [0, T_{\ast}]$. Consider the following quantity:
\begin{equation*}
\langle \phi_{j}, U^{\text{I}}(t;0)^{*} \mathcal{W}^{(1)}_{z}(t_{0}) U^{\text{I}}(t;0) \phi_{j} \rangle\;.
\end{equation*}
Recall that, by definition of the evolution in the interaction picture, Eq.~(\ref{eq:intpicdef}), 
\begin{equation}
\label{eq:manip-UWU}
\langle \phi_{j}, U^{\text{I}}(t;0)^{*} \mathcal{W}^{(1)}_{z}(t) U^{\text{I}}(t;0) \phi_{j} \rangle \equiv \langle \phi_{j}, U(t;0)^{*} \mathcal{W}^{(1)}_{z} U(t;0) \phi_{j} \rangle\;.
\end{equation}
Next, by the proof of Proposition \ref{prp:propU}, it is not difficult to see that, for a suitable constant $K>0$:
\begin{equation}\label{eq:phij}
\begin{split}
&\langle \phi_{j}, U^{\text{I}}(t;0)^{*} \mathcal{W}^{(1)}_{z}(t_{0}) U^{\text{I}}(t;0) \phi_{j} \rangle \leq \langle \phi_{j},  \mathcal{W}^{(1)}_{z}(t_{0}) \phi_{j} \rangle \\
&\qquad + \int_{0}^{t} d\tau\, K T_{\ast} \varepsilon^{3} \Big( \sup_{z\in \mathbb{R}^{3}} \tr\, \mathcal{W}^{(1)}_{z} \omega_{N,\tau} \Big) \langle \phi_{j}, U^{\text{I}}(\tau;0)^{*} \mathcal{W}^{(1)}_{z}(t_{0}) U^{\text{I}}(\tau;0) \phi_{j} \rangle\;.
\end{split}
\end{equation}
Defining
\begin{equation*}
\alpha(t;t_{0}) := \varepsilon^{3} \sup_{z\in \mathbb{R}^{3}}\sum_{j=1}^{N} \langle \phi_{j}, U^{\text{I}}(t;0)^{*} \mathcal{W}^{(1)}_{z}(t_{0}) U^{\text{I}}(t;0) \phi_{j} \rangle\;,
\end{equation*}
Eq.~(\ref{eq:phij}) implies that:
\begin{equation}\label{eq:alpha}
\begin{split}
\alpha(t;t_{0}) &\leq \alpha(0;t_{0}) + K T_{\ast}\int_{0}^{t}d\tau\, \Big(  \varepsilon^{3} \sup_{z\in \mathbb{R}^{3}} \tr \, \mathcal{W}^{(1)}_{z} \omega_{N,\tau}\Big) \alpha(\tau;t_{0}) \\
&= \alpha(0;t_{0}) + K T_{\ast}\int_{0}^{t} d\tau\,  \alpha(\tau; \tau) \alpha(\tau; t_{0})\;.
\end{split}
\end{equation}
Let:
\begin{equation*}
f(t) := \sup_{t_{0} \in [0,T_{\ast}]} \alpha(t;t_{0})\;.
\end{equation*}
Notice that
\begin{equation}\label{eq:notice}
\tr\, \mathcal{W}^{(1)}_{z} \omega_{N,t} \leq f(t)\;,
\end{equation}
see Eqs.~\eqref{eq:trWw-Unit} and \eqref{eq:manip-UWU},
hence our goal will be to derive an estimate for $f(t)$. Eq.~(\ref{eq:alpha}) implies:
\begin{equation}\label{eq:integraleq}
f(t) \leq f(0) + K T_{\ast} \int_{0}^{t} d\tau\, f(\tau)^{2}\;.
\end{equation}
The quantity $f(0)$ is bounded as follows:
\begin{equation*}
\begin{split}
f(0) &= \varepsilon^{3} \sup_{t_{0}\in [0,T_{\ast}]} \sup_{z\in \mathbb{R}^{3}} \tr\, \mathcal{W}^{(1)}_{z}(t_{0}) \omega_{N}  \leq C\;,
\end{split}
\end{equation*}
where the last bound follows from the assumption (\ref{eq:assump_density}) on the initial datum, since $T_{\ast}\leq T_{1}$. Next, let us denote by $g(t)$ the r.h.s.~of (\ref{eq:integraleq}), so that (\ref{eq:integraleq}) reads $f(t) \leq g(t)$. Also,
\begin{equation*}
g'(t) = K T_{\ast} f(t)^{2} \leq K T_{\ast} g(t)^{2}\;.
\end{equation*}
Equivalently,
\begin{equation*}
\frac{d}{dt} \Big( -\frac{1}{g(t)} - K T_{\ast} t \Big) \leq 0\;.
\end{equation*}
Since $g(0) = f(0)$, we have:
\begin{equation*}
g(t) \leq \frac{f(0)}{1 - f(0) K T_{\ast} t} \leq \frac{f(0)}{1 - f(0) K T_{\ast}^{2}} 
\end{equation*}
for $0\leq t\leq T_{\ast}$, choosing $T_{\ast}$ so that the denominator is positive. This bound, together with (\ref{eq:notice}), proves the final claim with, e.g., $T_{*} = \min\{\frac{1}{2}(f(0) K)^{-1/2}, T_{1}\}$. Notice that the constant $K$ depends on the potential $V$, and can be made arbitrarily small by taking $V$ small enough.
\end{proof}

\subsection{Proof of Theorem \ref{thm:propcomm}}\label{sec:propcomm}

In this section we shall consider initial data satisfying Assumptions \ref{ass:sc}, and we shall prove
the stability of the local bounds in the assumptions \eqref{eq:propcomm1} and \eqref{eq:H2} under the Hartree flow. 
The proof of \eqref{eq:propagaz2} follows straightforwardly by application of Proposition \ref{prp:propU} and of Lemma \ref{lem:mon}, in fact for $0\leq t\leq T $:
\begin{equation}\label{eq:proof41first}
\begin{split}
 \big\|\W_{z}^{(n)}  \, \omega_{N,t} \big\|_{\tr}  &\leq  \big\|\W_{z}^{(n)} U(t;0)  \, \omega_{N} \big\|_{\tr} 
\\
&\leq C \big\|\W_{z}^{(n)}(t)   \, \omega_{N} \big\|_{\tr} \leq C \veps^{-3} \;,
\end{split}
\end{equation}
where we used the invariance of the trace norm under unitary conjugation and \eqref{eq:H2}.

Our strategy for proving \eqref{eq:propagazione} also relies on controlling the regularized Hartree evolution of the localization operators by their free evolution. The proof will be divided in a few steps.
\medskip

\noindent{\underline{Part 1: Setting up the Gronwall estimate.}} Our goal is to control the following quantity 
\begin{equation}\label{eq:weipsw-start}
\sup _{s \in [t,T]}\sup_{q: |q|\leq \varepsilon^{-1}} \sup_{z\in \mathbb{R}^{3}} \frac{X_\Lambda(z)}{1+|q|} \Big \|\W_{z}^{(n)}(s-t) \,\big[e^{i q \cdot \hat{x}}, \omega_{N,t}\big] \Big \|_{\tr}
\end{equation}
using a Gronwall-type strategy. The claim in the theorem corresponds to the special case $s = t$. Let us introduce the modified Hartree dynamics $U_{q}(t;s)$ as in Proposition \ref{prp:propU}:
\begin{equation*}
i\varepsilon \partial_{t} U_{ q}(t;s) = (-\varepsilon^{2}\Delta + V * \rho_{t} + i\varepsilon^{2} q\cdot\nabla) U_{ q}(t;s)\;,\qquad U_{q}(s;s) = 1\;.
\end{equation*}
Following \cite[Section 5]{BPS}, we have:
\begin{equation}\label{eq:BPS}
i\varepsilon \partial_{t} U_{q}(t;s)^{*} \big[e^{iq \cdot \hat{x}}, \omega_{N,t}\big]U_{-q}(t;s) =i U_{ q}(t;s)^{*}\big\{ e^{iq \cdot \hat{x}}, \varepsilon q [ \varepsilon \nabla, \omega_{N,t} ]\big\} U_{-q}(t;s)\;.
\end{equation}
Therefore, taking $s=0$, we get:
\begin{equation*}
\begin{split}
U_{q}(t;0)^{*}\big[e^{iq \cdot \hat{x}}, \omega_{N,t}\big] U_{ -q}(t;0) &= \big[e^{iq \cdot \hat{x}}, \omega_{N,0}\big] \\&\quad - \frac{i}{\varepsilon} \int_{0}^{t}d\tau\, i \varepsilon \partial_{\tau} \Big( U_{ q}(\tau;0)^{*}\big[e^{iq\cdot \hat{ x}}, \omega_{N,\tau}\big] U_{ -q}(\tau;0)\Big)\;,
\end{split}
\end{equation*}
which gives, using (\ref{eq:BPS}):
\begin{equation}\label{eq:BPS2}
\begin{split}
\big[e^{iq \cdot \hat{x}}, \omega_{N,t}\big] &= U_{ q}(t;0) \big[e^{iq \cdot \hat{x}}, \omega_{N}\big] U_{-q}(t;0)^{*} \\
&\quad +\frac{1}{\varepsilon}\int_{0}^{t}d\tau\, U_{ q}(\tau;t)^{*}\big\{ e^{iq \cdot \hat{x}}, \varepsilon q \,[ \varepsilon \nabla, \omega_{N,\tau} ]\big\} U_{ -q}(\tau;t)\;.
\end{split}
\end{equation}
We shall now plug this identity into \eqref{eq:weipsw-start}, and estimate the various terms. Consider the one due to the first term on the r.h.s.~of (\ref{eq:BPS2}).
Using Proposition \ref{prp:propU}, Eq.~(\ref{eq:propU1}), Lemma \ref{lem:mon} and the invariance of the trace under unitary conjugation, we have:
\begin{equation}
\label{eq:bnd-hWU[]UWh}
\begin{split}
\Big \| \W_{z}^{(n)}(s-t) &U_{q}(t;0) \big[e^{iq\cdot \hat{x}}, \omega_{N}\big] U_{ -q}(t;0)^{*} \Big \|_{\tr} 
\leq C \Big \|  \W_{z + \veps q t}^{(n)}(s) \big[e^{iq\cdot \hat{x}}, \omega_{N}\big] \Big \|_{\tr}\;.
\end{split}
\end{equation}
We notice that for all $t\leq T$
\begin{equation*}
X_{\Lambda}(z) \leq C (1+\veps ^{4}|q|^{4})X_{\Lambda}(z+ \veps qt) \;.
\end{equation*}
Accordingly, we get:
\begin{equation}\label{eq:Pbd}
\begin{split}
& \sup_{z \in \mathbb{R}^{3}}\,  X_{\Lambda}(z) \Big \|  \W_{z + \veps q t}^{(n)}(s) \big[e^{iq\cdot \hat{x}}, \omega_{N}\big] \Big \|_{\tr} 
\\
& \quad  \leq C (1+\veps ^{4}|q|^{4}) \sup_{z \in \mathbb{R}^{3}}  X_{\Lambda}(z)\Big \|   \W_{z}^{(n)}(s) \big[e^{iq\cdot \hat{x}}, \omega_{N}\big] \Big \|_{\tr}
\\
& \quad  \leq C (1+\veps^{4} |q|^4)(1+|q|) \veps^{-2} \;,
\end{split}
\end{equation}
where the last step follows from assumption \eqref{eq:propcomm1}, since $t\leq s\leq T$. 
This bound, combined with \eqref{eq:bnd-hWU[]UWh}, shows that:
\begin{equation}\label{eq:firstend}
\begin{split}
\sup_{z \in \mathbb{R}^{3}}\, &X_{\Lambda}(z) \Big \| \W_{z}^{(n)} (s-t)U_{q}(t;0) \big[e^{iq\cdot \hat{x}}, \omega_{N}\big] U_{-q}(t;0)^{*} \Big \|_{\tr} \\
& \leq C(1 + |q|) \varepsilon^{-2}\;.
\end{split}
\end{equation}
Next, let us consider the contribution due to the second term in (\ref{eq:BPS2}), namely:
\begin{equation*}
\int_{0}^{t}d\tau\, \Big \|  \W_{z}^{(n)}(s-t) U_{ q}(\tau;t)^{*}\big\{ e^{iq\cdot \hat{x}}, q\cdot [ \varepsilon \nabla, \omega_{N,\tau} ]\big\} U_{ -q}(\tau;t) \Big \|_{\tr}\;.
\end{equation*}
By Proposition \ref{prp:propU} and Lemma \ref{lem:mon}, we have:
\begin{equation}\label{eq:anticomm}
\begin{split}
&\Big \|   \W_{z}^{(n)}(s-t) U_{q}(\tau;t)^{*}\big\{ e^{iq\cdot \hat{x}}, q\cdot [ \varepsilon \nabla, \omega_{N,\tau} ]\big\} U_{-q}(\tau;t) \Big \|_{\tr} 
\\
&
\leq C \Big \| \W_{z + \veps q(t- \tau)}^{(n)}(s- \tau) \big\{ e^{iq\cdot \hat{x}}, q\cdot [ \varepsilon \nabla, \omega_{N,\tau} ]\big\} \Big \|_{\tr}
\end{split}
\end{equation}
The two contributions to the anticommutator are estimated in the same way. For instance, consider:
\begin{equation*}
\begin{split}
&\Big \|   \W_{z +\veps q(t- \tau)}^{(n)}(s - \tau) e^{iq\cdot \hat{x}} q\cdot [ \varepsilon \nabla, \omega_{N,\tau} ] \Big \|_{\tr} \\
&\quad 
\leq \Big \|  \W_{z + \veps q(t - \tau) - 2q\varepsilon(s-\tau)}^{(n)}(s - \tau) \, q\cdot [ \varepsilon \nabla, \omega_{N,\tau} ] \Big \|_{\tr}\;,
\end{split}
\end{equation*}
where in the localization operator we used that $e^{- i q \cdot \hat{x}}\hat{x}(s-\tau) e^{ i q \cdot \hat{x}} = \hat{x}(s-\tau)+2 \veps q (s-\tau)$.
Since for $0\leq \tau \leq t\leq s\leq T$ 
\begin{equation}\label{eq:change-Xloc}
X_{\Lambda}(z) \leq C (1 + \veps^{4}|q|^{4}) X_{\Lambda}(z + \veps q(t - \tau) - 2q\varepsilon(s-\tau))\;,
\end{equation}
we have:
\begin{equation}
\label{eq:secondend}
\begin{split}
&\sup_{z\in \mathbb{R}^{3}} X_{\Lambda}(z) \Big \| \W_{z + \veps q(t- \tau)}^{(n)}(s - \tau) e^{iq\cdot \hat{x}} q\cdot [ \varepsilon \nabla, \omega_{N,\tau} ]\Big \|_{\tr} \\
&\quad \leq C (1 + \veps^{4}|q|^{4}) 
\sup_{z \in \mathbb{R}^{3}} X_{\Lambda}(z) \Big \|   \W_{z}^{(n)}(s - \tau) q\cdot [ \varepsilon \nabla, \omega_{N,\tau} ] \Big \|_{\tr}\;.
\end{split}
\end{equation}
A similar bound holds for the second contribution to the anticommutator in (\ref{eq:anticomm}). Hence, combining (\ref{eq:BPS2}), (\ref{eq:firstend}), (\ref{eq:secondend}), we obtain:
\begin{equation}\label{eq:obtained}
\begin{split}
\sup_{z \in \mathbb{R}^{3}}\, & X_{\Lambda}(z) \Big \| \W_{z}^{(n)}(s-t) \,\big[e^{i q \cdot \hat{x}}, \omega_{N,t}\big] \Big \|_{\tr} \\
&\leq C (1 + |q|) \varepsilon^{-2}+C(1 + \veps^{4}|q|^{4 })|q| \,
\int_{0}^{t}d\tau\, \sup_{z\in \mathbb{R}^{3}} X_{\Lambda}(z)\;
\Big \| \W_{z}^{(n)}(s-\tau)  [ \varepsilon \nabla, \omega_{N,\tau} ]  \Big \|_{\tr} 
\;.
\end{split}
\end{equation}
\noindent{\underline{Part 2: Control of the convolution.}} Our last task is to estimate the argument of the integral in Eq.~(\ref{eq:obtained}). As in \cite{BPS}, we start by writing:
\begin{equation*}
i\varepsilon \partial_{\tau} U(\tau;s)^{*} [\varepsilon \nabla, \omega_{N,\tau} ] U(\tau;s) = U_{\chi}(\tau;s)^{*} \big[ \omega_{N,\tau}, [V*\rho_{\tau}, \varepsilon \nabla]  \big] U(\tau;s)\;,
\end{equation*}
which gives the identity:
\begin{equation}\label{eq:inteq}
[\varepsilon \nabla, \omega_{N,\tau} ] = U(\tau;0) [\varepsilon \nabla, \omega_{N} ] U(\tau;0)^{*} - \frac{i}{\varepsilon} \int_{0}^{\tau}d\eta\, U(\eta;\tau)^{*} \big[ \omega_{N,\eta}, [V_{\chi}*\rho_{\eta}, \varepsilon \nabla ]\big] U(\eta;\tau)\;.
\end{equation}
We plug this identity in the integrand in (\ref{eq:obtained}). We get:
\begin{equation}\label{eq:commgrad}
\begin{split}
\Big \|  \W_{z}^{(n)}(s-\tau)  &[ \varepsilon \nabla,\omega_{N,\tau} ] \Big \|_{\tr}
\\
&
 \leq \Big \| \W_{z}^{(n)}(s-\tau)  U(\tau;0) [\varepsilon \nabla, \omega_{N} ] U(\tau;0)^{*} \Big \|_{\tr} \\
& 
\quad
+ \frac{1}{\varepsilon} \int_{0}^{\tau}d\eta\, \Big \|  \W_{z}^{(n)}(s-\tau)  U(\eta;\tau)^{*} \big[ \omega_{N,\eta}, [V* \rho_{\eta}, \varepsilon \nabla]  \big] U(\eta;\tau) \Big \|_{\tr}\;.
\end{split}
\end{equation}
Consider the first term on the r.h.s.~of (\ref{eq:commgrad}). We have, by Proposition \ref{prp:propU}, Lemma \ref{lem:mon} and the invariance of the trace under unitary conjugation:
\begin{equation*}
\begin{split}
\Big \| \W_{z}^{(n)}(s-\tau)  & U(\tau;0) [\varepsilon \nabla, \omega_{N} ] U(\tau;0)^{*} \Big \|_{\tr} \leq C\Big \|  \W_{z}^{(n)}(s) [\varepsilon \nabla,\omega_{N} ] \Big \|_{\tr}\;.
\end{split}
\end{equation*}
Hence, the contribution to the integrand in \eqref{eq:obtained} due to the first term on the r.h.s.~of (\ref{eq:commgrad}) is bounded by, for $t\leq s\leq T$:
\begin{equation*}
\begin{split}
\sup_{z \in \mathbb{R}^{3}} X_{\Lambda}(z) & \Big \| \W_{z}^{(n)}(s-\tau)  U(\tau;0) [\varepsilon \nabla, \omega_{N} ] U(\tau;0)^{*} \Big \|_{\tr} \\
&
\leq C\sup_{z \in \mathbb{R}^{3}} X_{\Lambda}(z) \Big\| \W_{z}^{(n)}(s) [\varepsilon \nabla, \omega_{N} ] \Big \|_{\tr} \\
&\leq K \varepsilon^{-2}\;,
\end{split}
\end{equation*}
where in the last step we used the assumption \eqref{eq:propcomm2}. Let us consider now the second term in (\ref{eq:commgrad}). Again using Proposition \ref{prp:propU} and Lemma \ref{lem:mon}:
\begin{equation*}
\begin{split}
\Big \| & \W_{z}^{(n)}(s-\tau)  U(\eta;\tau)^{*} [ \omega_{N,\eta}, [V*\rho_{\eta}, \varepsilon \nabla]  ] U(\eta;\tau)\Big \|_{\tr} 
\\
&\leq C \Big \| \W_{z}^{(n)}(s - \eta)   [ \omega_{N,\eta}, [V*\rho_{\eta}, \varepsilon \nabla]  ] \Big \|_{\tr} \\
&
\equiv C\varepsilon \Big \|  \W_{z}^{(n)}(s - \eta)   [ \omega_{N,\eta}, \nabla V*\rho_{\eta}] \Big \|_{\tr}\;,
\end{split}
\end{equation*}
Therefore, the integrand in (\ref{eq:obtained}) is bounded as:
\begin{equation}\label{eq:hh}
\begin{split}
&\sup_{z \in \mathbb{R}^{3}} X_{\Lambda}(z) \;
\Big \|  \W_{z}^{(n)}(s-\tau)  [ \varepsilon \nabla, \omega_{N,\tau} ]\Big \|_{\tr}
 \\
 &
\qquad \leq  
  K \veps^{-2} + C \int_{0} ^{\tau} d \eta \sup _{z \in \mathbb{R}^{3}} X_{\Lambda}(z)  \Big \| \W_{z}^{(n)}(s-\eta)   [ \omega_{N,\eta}, \nabla V*\rho_{\eta} ] \Big \|_{\tr} \;.
\end{split}
\end{equation}
To estimate the latter integrand, we write:
\begin{equation}\label{eq:W[w,DVr]}
\begin{split}
\Big \| \W_{z}^{(n)}&(s-\eta)   [ \omega_{N,\eta}, \nabla V*\rho_{\eta} ] \Big \|_{\tr} 
\\
& = \Big \|\int d y \rho_{\eta}(y) \W_{z}^{(n)}(s-\eta) [ \omega_{N,\eta}, \nabla V(\hat{x}-y)] \Big \|_{\tr}
\\
&\leq
\Big \|\int d y \rho_{\eta}(y) \W_{z}^{(n)}(s-\eta) \W_{y}^{(1)}  [ \omega_{N,\eta}, F_{y}(\hat{x})] \Big \|_{\tr}
\\
& \qquad\qquad
+ \Big \|\int d y \rho_{\eta}(y) \W_{z}^{(n)}(s-\eta) [ \omega_{N,\eta},\W_{y}^{(1)} ]F_{y}(\hat{x}) \Big \|_{\tr} \;.
\end{split}
\end{equation}
where we let $F_{y}(\hat{x}):=\big(\W_{y}^{(1)}\big)^{-1} \nabla V(\hat{x}-y) $. We estimate the first term on the right-hand side as:
\begin{equation}\label{eq:rWW[w,F]}
\begin{split}
&\Big \|\int d y \rho_{\eta}(y) \W_{z}^{(n)}(s-\eta) \W_{y}^{(1)}  [ \omega_{N,\eta}, F_{y}(\hat{x})] \Big \|_{\tr}
\\
& \qquad
 \leq
 \int d p  \big| \hat{F}(p)\big|
  \Big \|\int d y \rho_{\eta}(y) e^{-i p \cdot y}  \W_{z}^{(n)}(s-\eta)\W_{y}^{(1)}    [ \omega_{N,\eta}, e^{i p \cdot \hat{x}}] \Big \|_{\tr}
  \\
  &\qquad
  =\int d p  \big| \hat{F}(p)\big|
  \Big \| \W_{z}^{(n)}(s-\eta)\big(\W^{(1)} *\rho_\eta^{p}\big)    [ \omega_{N,\eta}, e^{i p \cdot \hat{x}}] \Big \|_{\tr}
  \\
  & \qquad \leq 
  \int d p  \big| \hat{F}(p)\big|
  \Big \| \W_{z}^{(n)}(s-\eta)\big(\W^{(1)} *\rho_\eta^{p}\big) \big[\W_{z}^{(n)}(s-\eta)\big]^{-1} \Big\|_{\op}  \Big\| \W_{z}^{(n)}(s-\eta) [ \omega_{N,\eta}, e^{i p \cdot \hat{x}}] \Big \|_{\tr}
  \\
  & \qquad \leq \bigg(\sup _{p \in \mathbb{R}^{3}}\Big \| \W_{z}^{(n)}(s-\eta)\big(\W^{(1)} *\rho_\eta^{p}\big) \big[\W_{z}^{(n)}(s-\eta)\big]^{-1} \Big\|_{\op}  \bigg) \,\cdot
\\
&\hspace{16em} \cdot  \,
  \int d p  \big| \hat{F}(p)\big|
  \Big\|  \W_{z}^{(n)}(s-\eta)   [ \omega_{N,\eta}, e^{i p \cdot \hat{x}}] \Big \|_{\tr}
\end{split}
\end{equation}
where we set $\rho_{\eta}^{p}(y):=\rho_{\eta}(y) e^{-i p \cdot y}$. To control the operator norm, we write
\begin{equation}\label{eq:wwrhow-1:start}
\begin{split}
& \Big \| \W_{z}^{(n)}(s-\eta)\big(\W^{(1)} *\rho_\eta^{p}\big) \big[\W_{z}^{(n)}(s-\eta)\big]^{-1} \Big\|_{\op}
\\
&\quad \leq \Big \| \W^{(1)} *\rho_\eta^{p} \Big\|_{\op} + \Big \| \W_{z}^{(n)}(s-\eta) \Big[\big(\W^{(1)} *\rho_\eta^{p}\big),|\hat{x}_{z}(s-\eta)|^{4n}  \Big]\Big\|_{\op} \;.
\end{split}
\end{equation}
where we set $\hat{x}_{z}(s-\eta):=\hat{x}(s-\eta)-z $. We write the commutator by moving the powers of $\hat{x}_{z}(s-\eta)$ to the left:
\begin{equation}\label{eq:wwrhow-1:comm}
\begin{split}
\Big \| &\W_{z}^{(n)}(s-\eta) \Big[\big(\W^{(1)} *\rho_\eta^{p}\big),|\hat{x}_{z}(s-\eta)|^{4n}  \Big]\Big\|_{\op}
\\
& \leq \sum_{\alpha:|\alpha|=4n} \sum _{\substack{\alpha ': |\alpha'| \geq 1, \\ \alpha' \leq \alpha }} {\alpha \choose \alpha'}
\Big \| \W_{z}^{(n)}(s-\eta) (\hat{x}_{z}(s-\eta))^{\alpha-\alpha'} \mathrm{ad}_{\hat{x}_{z}(s-\eta)}^{(\alpha')}\big(\W^{(1)} *\rho_\eta^{p}\big)\Big\|_{\op}  
\\
& \leq  \sum_{\alpha:|\alpha|=4n} \sum _{\substack{\alpha ': |\alpha'| \geq 1, \\ \alpha' \leq \alpha }} {\alpha \choose \alpha'}
\Big \| \mathrm{ad}_{\hat{x}_{z}(s-\eta)}^{(\alpha')}\big(\W^{(1)} *\rho_\eta^{p}\big)\Big\|_{\op}  
\;.
\end{split}
\end{equation}
We then estimate the latter operator norm as follows, for $0 \leq |\alpha' | \leq 4n$:
\begin{equation}\label{eq:wwrhow-1:op}
\begin{split}
\Big \| \mathrm{ad}_{\hat{x}_{z}(s-\eta)}^{(\alpha')}\big(\W^{(1)} *\rho_\eta^{p}\big)\Big\|_{\op} &=(2|s-\eta|)^{|\alpha'|} \sup _{z \in \mathbb{R}^{3}} \bigg | \int d y \, \partial _{z}^{\alpha'}\W^{(1)}(z-y) \,\rho _{\eta}(y) e^{-i p \cdot y} \bigg| 
\\
& \leq 
C \| (1+|\cdot|^{4} ) \partial^{\alpha'}\W^{(1)}\|_{\infty} \, \|\W^{(1)} *\rho_{\eta} \|_{\infty}
\\
& \leq C \;,
\end{split}
\end{equation}
where we used Proposition \ref{prp:density} and that $\W^{(1)}$ and $ \rho_{\eta} $ are positive. All in all, putting together the bounds \eqref{eq:rWW[w,F]}, \eqref{eq:wwrhow-1:start}, \eqref{eq:wwrhow-1:comm} and \eqref{eq:wwrhow-1:op}, we have:
\begin{equation}\label{eq:intrWWwF-partial}
\begin{split}
\sup _{z \in \mathbb{R}^{3}}X_{\Lambda}(z)& \Big \|\int d y \rho_{\eta}(y) \W_{z}^{(n)}(s-\eta) \W_{y}^{(1)}  [ \omega_{N,\eta}, F_{y}(\hat{x})] \Big \|_{\tr} 
\\
& \leq C
\int d p  \big| \hat{F}(p)\big|
  \sup _{z \in \mathbb{R}^{3}} X_{\Lambda}(z)\Big\|  \W_{z}^{(n)}(s-\eta)   [ \omega_{N,\eta}, e^{i p \cdot \hat{x}}] \Big \|_{\tr} \;.
\end{split}
\end{equation}
To control this integral, we split it into small and large momenta:
\begin{equation}\label{eq:FW[w,eipx]final}
\begin{split}
 &\int d p  \big| \hat{F}(p)\big|\sup _{z \in \mathbb{R}^{3}} X_{\Lambda}(z)
  \Big\|  \W_{z}^{(n)}(s-\eta)   [ \omega_{N,\eta}, e^{i p \cdot \hat{x}}] \Big \|_{\tr}
  \\
  & =  \int_{|p|\leq \veps^{-1}} d p  \big| \hat{F}(p)\big|\sup _{z \in \mathbb{R}^{3}} X_{\Lambda}(z)
  \Big\|  \W_{z}^{(n)}(s-\eta)   [ \omega_{N,\eta}, e^{i p \cdot \hat{x}}] \Big \|_{\tr} 
\\  
  & \qquad +  \int_{|p|> \veps^{-1}} d p  \big| \hat{F}(p)\big| \sup _{z \in \mathbb{R}^{3}} X_{\Lambda}(z) \Big(  \Big\|  \W_{z}^{(n)}(s-\eta)   \omega_{N,\eta}  \Big \|_{\tr} +
    \Big\|  \W_{z}^{(n)}(s-\eta) e^{i p \cdot \hat{x}}\omega_{N,\eta} \Big \|_{\tr}
\Big) \;.
\end{split}
\end{equation}
The first term on the right-hand side is estimated as follows:
\begin{equation}\label{eq:final-small}
\begin{split}
\int_{|p|\leq \veps^{-1}} d p  \big| \hat{F}(p)\big|& \sup _{z \in \mathbb{R}^{3}} X_{\Lambda}(z)
  \Big\|  \W_{z}^{(n)}(s-\eta)   [ \omega_{N,\eta}, e^{i p \cdot \hat{x}}] \Big \|_{\tr} 
 \\
  &\leq \|(1+|\cdot|)\hat{F} \|_{1} \sup _{p: |p|\leq \veps^{-1}} \frac{X_{\Lambda}(z)}{1+|p|}  \Big\|  \W_{z}^{(n)}(s-\eta)   [ \omega_{N,\eta}, e^{i p \cdot \hat{x}}] \Big \|_{\tr}
  \\
  &\leq C \sup _{p: |p|\leq \veps^{-1}} \frac{X_{\Lambda}(z)}{1+|p|}  \Big\|  \W_{z}^{(n)}(s-\eta)   [ \omega_{N,\eta}, e^{i p \cdot \hat{x}}] \Big \|_{\tr}  \;,
  \end{split}
\end{equation}
where we used the assumptions on the interaction potential \eqref{eq:assV}. 
To estimate the second term on the r.h.s.~of \eqref{eq:FW[w,eipx]final}, we follow the computations in \eqref{eq:change-Xloc} and \eqref{eq:secondend}, to write
\begin{equation}
\sup _{z \in \mathbb{R}^{3}} X_{\Lambda}(z) \Big\|  \W_{z}^{(n)}(s-\eta) e^{i p \cdot \hat{x}}\omega_{N,\eta} \Big \|_{\tr} \leq  C (1+\veps^{4}|p|^4)
\sup _{z \in \mathbb{R}^{3}} X_{\Lambda}(z) \Big\|  \W_{z}^{(n)}(s-\eta) \omega_{N,\eta} \Big \|_{\tr} \;.
\end{equation}
Then, by \eqref{eq:proof41first} and by the assumption on the potential \eqref{eq:assV}, we obtain: 
\begin{equation}\label{eq:final-large}
\begin{split}
&\int_{|p|> \veps^{-1}} d p\,  \big| \hat{F}(p)\big| \sup _{z \in \mathbb{R}^{3}} X_{\Lambda}(z) \Big(  \Big\|  \W_{z}^{(n)}(s-\eta)   \omega_{N,\eta} \Big \|_{\tr} +
    \Big\|  \W_{z}^{(n)}(s-\eta) e^{i p \cdot \hat{x}}\omega_{N,\eta} \Big \|_{\tr}
\Big)
\\
&\quad \leq C
\int_{|p|> \veps^{-1}} d p  \, \veps |p|(1+\veps^{4}|p|^4) \big| \hat{F}(p)\big|  \sup _{z \in \mathbb{R}^{3}} X_{\Lambda}(z) \Big\|  \W_{z}^{(n)}(s-\eta)   \omega_{N,\eta} \Big \|_{\tr} 
\\
& \quad \leq C \veps  \|(1+|\cdot|^{5})\hat{F} \|_{1} \sup _{z \in \mathbb{R}^{3}} X_{\Lambda}(z) \Big\|  \W_{z}^{(n)}(s-\eta)   \omega_{N,\eta} \Big \|_{\tr}
\\
& \quad \leq C \veps^{-2} \;.
\end{split}
\end{equation}
Therefore, putting together \eqref{eq:rWW[w,F]}, \eqref{eq:intrWWwF-partial}, \eqref{eq:FW[w,eipx]final}, \eqref{eq:final-small}, \eqref{eq:final-large}, we have:
\begin{equation}\label{eq:FW[w,eipx]final-est}
\begin{split}
\sup _{z \in \mathbb{R}^{3}}X_{\Lambda}(z)&
\Big \|\int d y \rho_{\eta}(y) \W_{z}^{(n)}(s-\eta) \W_{y}^{(1)}  [ \omega_{N,\eta}, F_{y}(\hat{x})] \Big \|_{\tr}
\\
&
\leq C \veps^{-2} + C \sup _{p: |p|\leq \veps^{-1}} \frac{X_{\Lambda}(z)}{1+|p|}  \Big\|  \W_{z}^{(n)}(s-\eta)   [ \omega_{N,\eta}, e^{i p \cdot \hat{x}}] \Big \|_{\tr} \;.
\end{split}
\end{equation}
To estimate the second term on the r.h.s.~of \eqref{eq:W[w,DVr]}, we proceed in a similar way:
\begin{equation*}
\begin{split}
&\Big \|\int d y \rho_{\eta}(y) \W_{z}^{(n)}(s-\eta) [ \omega_{N,\eta},\W_{y}^{(1)} ]F_{y}(\hat{x}) \Big \|_{\tr}
\\
& \qquad
 \leq
 \int d p  \big| \widehat{\W^{(1)}}(p)\big|
  \Big \|\int d y \rho_{\eta}(y) e^{-i p \cdot y}  \W_{z}^{(n)}(s-\eta)    [ \omega_{N,\eta}, e^{i p \cdot \hat{x}}]
  F_{y}(\hat{x}) \Big \|_{\tr}
  \\
  & \qquad = 
  \int d p  \big| \widehat{\W^{(1)}}(p)\big|
  \Big \| \W_{z}^{(n)}(s-\eta) [ \omega_{N,\eta}, e^{i p \cdot \hat{x}}] \big(F *\rho_\eta^{p}\big) \Big \|_{\tr}
  \\
  & \qquad \leq \Big(\sup _{p \in \mathbb{R}^{3}} \| F *\rho_\eta^{p}\|_{\op}  \Big)
  \int d p  \big| \widehat{\W}^{(1)}(p)\big|
  \Big\|  \W_{z}^{(n)}(s-\eta)   [ \omega_{N,\eta}, e^{i p \cdot \hat{x}}] \Big \|_{\tr} \;,
\end{split}
\end{equation*}
then estimate the operator norm as in \eqref{eq:wwrhow-1:op} and the integral as in \eqref{eq:FW[w,eipx]final}, using that $\widehat{\W^{(1)}}$ decays fast enough.

\medskip

\noindent{\underline{Part 3: Conclusion.}} The estimate \eqref{eq:W[w,DVr]} together with the bounds \eqref{eq:rWW[w,F]},  \eqref{eq:FW[w,eipx]final-est} imply:
\begin{equation}\label{eq:conclu}
\begin{split}
\sup _{z \in \mathbb{R}^{3}} X_{\Lambda}(z) & \Big \| \W_{z}^{(n)}(s-\eta)   [ \omega_{N,\eta}, \nabla V*\rho_{\eta} ] \Big \|_{\tr}
\\
&
 \leq C \veps^{-2} + C \sup _{p: |p|\leq \veps^{-1}} \frac{X_{\Lambda}(z)}{1+|p|}  \Big\|  \W_{z}^{(n)}(s-\eta)   [ \omega_{N,\eta}, e^{i p \cdot \hat{x}}] \Big \|_{\tr}
\end{split}
\end{equation}
Combining (\ref{eq:hh}) with the estimate (\ref{eq:conclu}), we have:
\begin{equation*}
\begin{split}
&\sup_{z \in \mathbb{R}^{3}} X_{\Lambda}(z) \;
\Big \|  \W_{z}^{(n)}(s-\tau)  [ \varepsilon \nabla, \omega_{N,\tau} ] \Big \|_{\tr} 
 \\
 &\quad
\leq  
  K \veps^{-2} + C \int_{0} ^{\tau} d \eta  \sup_{p:|p|\leq \varepsilon^{-1}} \sup_{z \in \mathbb{R}^{3}} \frac{X_{\Lambda}(z)}{1 + |p|} \Big \| \W_{z}^{(n)}(s-\eta)  [ \omega_{N,\eta},  e^{ip\cdot \hat x}] \Big \|_{\tr}\;.
 \end{split}
\end{equation*}
Plugging this estimate into (\ref{eq:obtained}) we find:
\begin{equation*}
\begin{split}
&\sup _{s\in [t,T]}\sup_{q:|q|\leq \varepsilon^{-1}}\sup_{z \in \mathbb{R}^{3}}\, \frac{X_{\Lambda}(z)}{1 + |q|} \Big \|  \W_{z}^{(n)}(s-t) \,\big[e^{i q \cdot \hat{x}}, \omega_{N,t}\big] \Big \|_{\tr} \\
&\quad \leq C  \varepsilon^{-2} + C \int_{0}^{t}d\tau\, \int_{0}^{\tau} d\eta\, \sup _{s\in [\eta,T]}\sup_{p:|p|\leq \varepsilon^{-1}} \sup_{z \in \mathbb{R}^{3}} \frac{X_{\Lambda}(z)}{1 + |p|} \Big \| \W_{z}^{(n)}(s-\eta)  [ \omega_{N,\eta},  e^{ip\cdot \hat x} ]  \Big \|_{\tr}\;,
\end{split}
\end{equation*}
where we used that $0\leq \eta\leq \tau  \leq t$. Hence, by the Gronwall lemma we finally get, for $0\leq t\leq T$:
\begin{equation*}
\sup _{s \in [t,T]}
\sup_{p:|p|\leq \varepsilon^{-1}}\sup_{z \in \mathbb{R}^{3}}\, \frac{X_{\Lambda}(z)}{1 + |p|} \Big \|  \W_{z}^{(n)}(s-t) \,\big[e^{i p \cdot \hat{x}}, \omega_{N,t}\big] \Big \|_{\tr} \leq C\varepsilon^{-2}\;.
\end{equation*}
This concludes the proof of Theorem \ref{thm:propcomm}.
\qed

\section{Propagation of the Semiclassical Structure: Pseudo-Relativistic case}\label{sec:proprel}

In this section, we show how to propagate the local semiclassical structure along the flow of the 
pseudo-relativistic Hartree equation 
\begin{equation}
i \veps \partial_t \omega_{N,t} = \big[h_{\mathrm{rel}}(t), \omega_{N,t} \big]
\end{equation}
with $h_{\mathrm{rel}}(t):=\sqrt{1-\veps^2 \Delta }+ V * \rho_t$, and $\rho_t(x):= \varepsilon^{3} \omega_{N,t}(x;x)$. With respect to the non-relativistic case, here we will be able to propagate the local semiclassical structure for all times. This allows us to prove the convergence of the many-body pseudo-relativistic dynamics to the pseudo-relativistic Hartree dynamics, Theorem \ref{thm:mainrel}. 
The following theorem is the analogue of Theorem \ref{thm:propcomm}.
\begin{theorem}[Propagation of the Local Semiclassical Structure.]\label{thm:proprel} Under the same assumptions of Theorem \ref{thm:mainrel}, we have 
\begin{equation}\label{eq:propscrel}
\begin{split} 
\sup _{z \in \R^{3}} X_{\Lambda}(z) \big\|\W_{z}^{(n)}  \, \omega_{N,t} \big\|_{\tr}  &\leq C \exp(Ct)  \veps^{-3} \\
\sup_{q: |q|\leq \varepsilon^{-1}} \sup_{z\in \mathbb{R}^{3}} \frac{X_\Lambda(z)}{1+|q|} \Big \|\W_{z}^{(n)}\,\big[e^{i q \cdot \hat{x}}, \omega_{N,t}\big] \Big \|_{\tr} &\leq C \exp(C\exp Ct) \varepsilon^{-3}\;.
\end{split}
\end{equation}
\end{theorem}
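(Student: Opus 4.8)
The plan is to transcribe Section~\ref{sec:propsc}, replacing the free non-relativistic flow by the free pseudo-relativistic one. The decisive structural simplification is that the pseudo-relativistic dispersion has \emph{bounded group velocity}: since $[\sqrt{1-\veps^2\Delta},\hat x_j]=-i\veps^2\hat p_j/\sqrt{1+\veps^2|\hat p|^2}$ has operator norm $\le\veps$, and since the interaction $V*\rho_t$ commutes with $\hat x$, the Heisenberg evolution of $\hat x$ under any of the (boosted, modified) pseudo-relativistic Hartree propagators moves $\hat x$ by a \emph{bounded} amount, uniformly in $\veps$: writing $U^{\mathrm{rel}}(t;0)$ for the Hartree propagator and $\hat x(t):=U^{\mathrm{rel}}(t;0)^*\hat x\,U^{\mathrm{rel}}(t;0)$, the identity $\partial_t\hat x(t)=\tfrac{i}{\veps}U^{\mathrm{rel}}(t;0)^*[\sqrt{1-\veps^2\Delta},\hat x]U^{\mathrm{rel}}(t;0)$ has integrand of norm $\le1$, hence $\|\hat x(t)-\hat x\|_{\op}\le C|t|$ with $C$ independent of $\veps$ and $|\Lambda|$ (for the boosted modified flow one gets the same up to the explicit shift $\veps q(t-s)$). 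This replaces Proposition~\ref{prp:propU}: from the elementary operator inequality $|\hat x-z|^2\le 2|\hat x(t)-z|^2+2\|\hat x(t)-\hat x\|_{\op}^2$, functional calculus, and the analogue of Lemma~\ref{lem:conj} for conjugation by $(1+|\hat x(t)|^{2k})$ — whose commutators $[F(\hat x),\hat x_i(t)-\hat x_i]$ are again $O(\veps|t|\,\|\nabla F\|_\infty)$, because the velocity symbol $\veps^2p/\sqrt{1+\veps^2|p|^2}$ has $O(\veps)$ derivatives — one obtains $U^{\mathrm{rel}}_q(t;s)^*\W_z^{(k)}U^{\mathrm{rel}}_q(t;s)\le C e^{C|t-s|}\W^{(k)}_{z+\veps q(t-s)}$ \emph{for all} $t,s$, with constants polynomial in $|t-s|$ and $\veps$-independent.

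\textbf{First line of \eqref{eq:propscrel}, no-concentration, and the Corollary~\ref{cor:commF} analogue.} Given this, the first line is immediate and holds for all $t$: since $\W^{(n)}_z\omega_{N,t}=\W^{(n)}_zU^{\mathrm{rel}}(t;0)\omega_NU^{\mathrm{rel}}(t;0)^*$, unitary invariance of the trace norm and Lemma~\ref{lem:mon} give
\[
X_\Lambda(z)\,\big\|\W^{(n)}_z\omega_{N,t}\big\|_{\tr}=X_\Lambda(z)\,\Big\|\tfrac{1}{1+|\hat x(t)-z|^{4n}}\,\omega_N\Big\|_{\tr}\le C_t\,X_\Lambda(z)\,\big\|\W^{(n)}_z\omega_N\big\|_{\tr}\le C_t\,\veps^{-3},
\]
using \eqref{eq:H2} at $t=0$ (Assumption~\ref{ass:sc} is imposed with $T_1=0$), with $C_t$ polynomial, hence $\le Ce^{Ct}$. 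The same computation with $n=1$, without the weight $X_\Lambda$ and after taking the trace, yields $\sup_z\tr\,\W^{(1)}_z\omega_{N,t}\le Ce^{Ct}\veps^{-3}$ for all $t\ge0$, which replaces Proposition~\ref{prp:density}: here, unlike in the non-relativistic case, no Gronwall/nonlinear-blow-up issue arises, precisely because finite speed of propagation reduces the bound directly to the $t=0$ hypothesis. With this no-concentration bound, $\|D^j(V*\rho_t)\|_\infty$ is bounded uniformly in $\veps,|\Lambda|$ for all $t$ by the argument of Remark~\ref{rem:F}, so the relativistic analogue of Corollary~\ref{cor:commF} follows verbatim.

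\textbf{Second line of \eqref{eq:propscrel}: the Gronwall loop.} The plan is to run the coupled Gronwall of the proof of Theorem~\ref{thm:propcomm}, with the same two unknowns $X_\Lambda(z)\|\W^{(n)}_z[e^{iq\cdot\hat x},\omega_{N,t}]\|_{\tr}$ and $X_\Lambda(z)\|\W^{(n)}_z[\veps\nabla,\omega_{N,t}]\|_{\tr}$, but with the pseudo-relativistic free propagators and with the weight-propagation input from the bound above. The Duhamel identity \eqref{eq:BPS}--\eqref{eq:BPS2} must be redone: differentiating $U^{\mathrm{rel}}_q(t;0)^*[e^{iq\cdot\hat x},\omega_{N,t}]U^{\mathrm{rel}}_{-q}(t;0)$, using $e^{-iq\cdot\hat x}\sqrt{1-\veps^2\Delta}\,e^{iq\cdot\hat x}=\sqrt{1+\veps^2|\hat p+q|^2}$ and that $V*\rho_t$ commutes with $e^{iq\cdot\hat x}$, the inhomogeneity becomes $\tfrac1\veps\,[\,e^{iq\cdot\hat x}\big(\sqrt{1-\veps^2\Delta}-\sqrt{1+\veps^2|\hat p+q|^2}\big),\omega_{N,t}\,]$; and since $\sqrt{1+\veps^2|p|^2}-\sqrt{1+\veps^2|p+q|^2}=-\veps\,q\cdot\!\int_0^1\tfrac{\veps(p+sq)}{\sqrt{1+\veps^2|p+sq|^2}}\,ds$, the $1/\veps$ is absorbed and this is, up to the bounded symbol $\int_0^1\veps(\hat p+sq)/\sqrt{1+\veps^2|\hat p+sq|^2}\,ds$, of the same type as the term $\veps q[\veps\nabla,\omega_{N,t}]$ occurring in \eqref{eq:BPS}. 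Expressing this symbol through its momentum-space representation and $[\veps\nabla,\omega_{N,t}]$, with the small/large-momentum splitting of Corollary~\ref{cor:commF}, closes the loop; iterating the two Duhamel identities and applying Lemma~\ref{lem:mon} together with the relativistic weight-propagation bounds produces, for the relevant supremum $A(t)$, an inequality $A(t)\le Ce^{Ct}\veps^{-3}+Ce^{Ct}\int_0^t\!\int_0^\tau A(\eta)\,d\eta\,d\tau$. Because the coefficient now carries the factor $e^{Ct}$ from the all-times weight propagation (rather than a plain constant, as in the non-relativistic short-time case), the Gronwall lemma yields the double-exponential bound of \eqref{eq:propscrel}, and it holds for all $t\in\bR$.

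\textbf{Main obstacle.} The delicate point is the relativistic Duhamel step: unlike $-\veps^2\Delta$, the symbol $\sqrt{1-\veps^2\Delta}$ is not a polynomial, so $\sqrt{1-\veps^2\Delta}-\sqrt{1+\veps^2|\hat p+q|^2}$ is a genuine pseudodifferential operator, and converting the commutator $[\,\mathrm{symbol}(\hat p),\omega_{N,t}\,]$ back into the quantities controlled by the loop — namely $[\veps\nabla,\omega_{N,t}]$ and $\|\W^{(n)}\omega_{N,t}\|_{\tr}$ — requires care, in particular suitable momentum cutoffs (as in Corollary~\ref{cor:commF}) to handle the non-decay of the velocity symbol at large momenta and the non-commutation of the weights $\W^{(n)}_z$ with functions of $\hat p$. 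This is the pseudo-relativistic pseudodifferential bookkeeping of \cite{BPS2}, here adapted to the space-localized setting; once the bounded-group-velocity lemma of the first paragraph is in place, the remainder is a routine transcription of Section~\ref{sec:propsc}.
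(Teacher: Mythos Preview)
Your overall plan is right: bounded group velocity gives all-times weight propagation, which in turn gives all-times no-concentration, and then the Gronwall loop of Section~\ref{sec:propsc} closes with a double exponential because the coefficients now carry $e^{Ct}$. But the paper's execution differs from yours in two concrete places, one of which is a genuine gap in your sketch.

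\medskip
\noindent\textbf{Weight propagation.} You argue via the Heisenberg picture (bound $\|\hat x(t)-\hat x\|_{\op}$, then push through an operator inequality for $|\hat x-z|^2$ and invoke an analogue of Lemma~\ref{lem:conj}). The paper instead proves Proposition~\ref{prop:W-rel-prop} by the most direct route: differentiate $U_{\mathrm{rel}}^{*}\W_z^{(k)}U_{\mathrm{rel}}$ in $t$; the potential drops out, and one is left with $\big(\W_z^{(k/2)}\big)^{-1}[\W_z^{(k)},\sqrt{1-\veps^2\Delta}]\big(\W_z^{(k/2)}\big)^{-1}$, which is bounded by $C\veps$ because $\|\mathrm{ad}_{\hat x}^{(\alpha)}(\sqrt{1-\veps^2\Delta})\|_{\op}\le C\veps^{|\alpha|}$ for $|\alpha|\ge1$. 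Gronwall then gives $U_{\mathrm{rel}}^*\W_z^{(k)}U_{\mathrm{rel}}\le e^{C(t-s)}\W_z^{(k)}$ with \emph{no} shift and \emph{no} time-dependent weight. This is both simpler than your route and sidesteps a real issue in yours: the quadratic inequality $|\hat x-z|^2\le 2|\hat x(t)-z|^2+2\|\hat x(t)-\hat x\|_{\op}^2$ does not lift to $1+|\hat x-z|^{4n}$ by operator monotonicity, since $x\mapsto x^{2n}$ is not operator monotone. A further payoff is structural: because the paper never needs $\W_z^{(n)}(t)$ in the relativistic argument, the whole proof runs with the static weight $\W_z^{(n)}$, and only the un-boosted propagator needs weight propagation (the boosted $U_{\mathrm{rel};q}$ appears only on the right of the Duhamel identity and is removed by unitary invariance of the trace norm).

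\medskip
\noindent\textbf{The pseudodifferential commutator.} Your identification of the inhomogeneity as $e^{iq\cdot\hat x}[\omega_{N,\tau},\veps A(q)]$ with $A(q)=\int_0^1\frac{\veps(-i\nabla+sq)\cdot q}{\sqrt{1+\veps^2(-i\nabla+sq)^2}}\,ds$ is exactly what the paper does. But then you write that one reduces $[\omega_{N,\tau},A(q)]$ to $[\omega_{N,\tau},\veps\nabla]$ ``through its momentum-space representation\ldots with the small/large-momentum splitting of Corollary~\ref{cor:commF}''. This is the gap: $A(q)$ is a function of $\hat p$, and neither a Fourier expansion in $\hat p$ nor the $|p|\lessgtr\veps^{-1}$ splitting (which concerns commutators with $e^{ip\cdot\hat x}$) does the job. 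The paper's device is the resolvent integral
\[
\frac{1}{\sqrt{B}}=\frac{1}{\pi}\int_0^\infty\frac{d\lambda}{\sqrt\lambda}\,\frac{1}{B+\lambda},\qquad B=1+\veps^2(-i\nabla+sq)^2,
\]
which turns $[\omega_{N,\tau},A(q)]$ into $\lambda$-integrals of terms of the form (bounded function of $\hat p$)$\cdot[\omega_{N,\tau},\veps\nabla_j]\cdot$(bounded function of $\hat p$), with explicit $\lambda$-decay ensuring convergence. One then commutes the weight $\W_z^{(n)}$ past these resolvent factors using $\|\mathrm{ad}_{\hat x}^{(\alpha)}(\cdot)\|_{\op}\le C(1+\lambda)^{-(1+|\alpha|)/2}$, arriving at $\|\W_z^{(n)}[\omega_{N,\tau},A(q)]\|_{\tr}\le C|q|\,\|\W_z^{(n)}[\omega_{N,\tau},\veps\nabla]\|_{\tr}$. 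This is precisely the ``pseudodifferential bookkeeping'' you flag as the main obstacle; naming the integral representation is what makes it a proof rather than a plan.
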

The main improvement with respect to the non-relativistic case is that in the pseudo-relativistic case we are able to rule out excessive concentration of the density globally in time, thanks to the boundedness of the velocity of the particles. We start by adapting the propagation estimate for the localization operators, Proposition \ref{prp:propU}.
\begin{proposition}[Bounds for the Evolution of the Localization Operator]\label{prop:W-rel-prop} Under the same assumptions of Theorem \ref{thm:mainrel}, consider the pseudo-relativistic Hartree evolution generator $U_{\mathrm{rel}}(t;s)$ defined by
\begin{equation}
i\veps \partial_t U_{\mathrm{rel}}(t;s)
= h_{\mathrm{rel}}(t)U_{\mathrm{rel}}(t;s)\;,\qquad U_{\mathrm{rel}}(t;t) = 1\;.
\end{equation}
Then, for all $k \geq 1$, there exists a constant $C$ such that for all $z \in \mathbb{R}^{3}$, $0\leq s\leq t $
\begin{equation}
U_{\mathrm{rel}}(t;s)^{*}\W_{z}^{(k)}(\hat x)U_{\mathrm{rel}}(t;s) \leq e^{C (t-s)} \W_{z}^{(k)}(\hat x) \;.
\end{equation}
\end{proposition}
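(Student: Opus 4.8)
The plan is to run the argument of Proposition \ref{prp:propU}, which becomes considerably simpler in the pseudo-relativistic setting for two structural reasons. First, throughout the relativistic part we only use the \emph{time-independent} localization operators $\W_z^{(k)}=\W_z^{(k)}(0)=(1+|\hat x-z|^{4k})^{-1}$ (recall that Theorem \ref{thm:mainrel} only requires Assumption \ref{ass:sc} with $T_1=0$), so there is no free-evolved position $\hat x(t_0)$ to keep track of and no center shift to produce. Second, and more importantly, $\W_z^{(k)}(\hat x)$ and the Hartree potential $V*\rho_\tau$ are both multiplication operators in $x$, so they commute exactly: $[\W_z^{(k)}(\hat x),V*\rho_\tau]=0$. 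Consequently, when we differentiate the Hartree-conjugated weight, only the commutator with $\sqrt{1-\veps^2\Delta}$ survives, and no control whatsoever on the density $\rho_\tau$ is needed; in fact the global-in-time no-concentration bound for $\omega_{N,t}$ then follows as a corollary of this proposition, by taking $k=1$ and tracing against $\omega_N$ via Assumption \ref{ass:sc}.

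Concretely, I would work with $\W_z^{(k/2)}=(1+|\hat x-z|^{2k})^{-1}$ together with the elementary two-sided bound $c\,(\W_z^{(k/2)})^2\le\W_z^{(k)}\le C\,(\W_z^{(k/2)})^2$ (the relativistic counterpart of \eqref{eq:cC}; here one checks $c=1$, $C=2$), and set, for an arbitrary $\phi\in L^2(\R^3)$ and $\psi_\tau:=U_{\mathrm{rel}}(\tau;s)\phi$,
\[
g(\tau):=\big\langle\psi_\tau,\,(\W_z^{(k/2)})^2\,\psi_\tau\big\rangle .
\]
Differentiating and using that $U_{\mathrm{rel}}$ is generated by $h_{\mathrm{rel}}(\tau)=\sqrt{1-\veps^2\Delta}+V*\rho_\tau$, the potential term drops out and one obtains $i\veps\,\partial_\tau g(\tau)=\big\langle\psi_\tau,[\,(\W_z^{(k/2)})^2,\sqrt{1-\veps^2\Delta}\,]\,\psi_\tau\big\rangle$. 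Since this commutator is anti-self-adjoint, writing it as $\W_z^{(k/2)}[\W_z^{(k/2)},\sqrt{1-\veps^2\Delta}]+[\W_z^{(k/2)},\sqrt{1-\veps^2\Delta}]\W_z^{(k/2)}$ and introducing $E:=\W_z^{(k/2)}\sqrt{1-\veps^2\Delta}(\W_z^{(k/2)})^{-1}-\sqrt{1-\veps^2\Delta}=\W_z^{(k/2)}\,[\sqrt{1-\veps^2\Delta},|\hat x-z|^{2k}]$, a short computation gives $|\,i\veps\,\partial_\tau g(\tau)\,|\le 2\|E\|_{\op}\,\|\W_z^{(k/2)}\psi_\tau\|^2 = 2\|E\|_{\op}\,g(\tau)$.

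The core estimate is then $\|E\|_{\op}\le C_k\,\veps$. I would prove it by expanding the commutator with the polynomial $|\hat x-z|^{2k}$ via the exact (finite) Moyal-type formula, which represents $[\sqrt{1-\veps^2\Delta},|\hat x-z|^{2k}]$ as a sum, over multi-indices $1\le|\beta|\le 2k$, of $(\partial_x^\beta|x-z|^{2k})|_{x=\hat x}$ times the Fourier multiplier $(\partial_p^\beta\sqrt{1+\veps^2|p|^2})(-i\nabla)$. Multiplying on the left by $\W_z^{(k/2)}=(1+|\hat x-z|^{2k})^{-1}$ turns each polynomial factor, of degree $2k-|\beta|<2k$, into a bounded function of $\hat x$, while $\|\partial_p^\beta\sqrt{1+\veps^2|p|^2}\|_\infty=\veps^{|\beta|}\,\|\partial_q^\beta\sqrt{1+|q|^2}\|_\infty\le C_\beta\,\veps$ for $|\beta|\ge1$ (rescaling $q=\veps p$ and using that $\sqrt{1+|q|^2}$ has all derivatives of order $\ge1$ bounded). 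This last inequality is precisely the quantitative form of the boundedness of the group velocity of the relativistic dispersion; it makes $\|E\|_{\op}$ small of order $\veps$, and the $\veps^{-1}$ from the Schr\"odinger equation then cancels it, yielding $|\partial_\tau g(\tau)|\le C\,g(\tau)$. Gronwall's lemma gives $g(t)\le e^{C(t-s)}g(s)$ for $0\le s\le t$, and translating back through $c\,(\W_z^{(k/2)})^2\le\W_z^{(k)}\le C\,(\W_z^{(k/2)})^2$ yields $U_{\mathrm{rel}}(t;s)^*\W_z^{(k)}(\hat x)U_{\mathrm{rel}}(t;s)\le C\,e^{C(t-s)}\W_z^{(k)}(\hat x)$, i.e.\ the claim, with the harmless prefactor from \eqref{eq:cC} absorbed into $C$.

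The only genuinely technical point, and the step I would be most careful with, is the operator-ordering bookkeeping in the expansion of $[\sqrt{1-\veps^2\Delta},|\hat x-z|^{2k}]$ together with the (routine but essential) verification that all $p$-derivatives of $\sqrt{1+\veps^2|p|^2}$ of order $\ge1$ are $O(\veps)$ uniformly. This is exactly what replaces, and improves upon, the combination of Lemma \ref{lem:conj} and a no-concentration estimate in the non-relativistic argument, and it is the single ingredient responsible for the estimate being global in time rather than only short-time; everything else — the Gronwall scheme, the comparison between the two weights, and the vanishing of the potential commutator — is immediate.
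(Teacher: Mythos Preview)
Your proposal is correct and follows essentially the same route as the paper: differentiate the Hartree-conjugated weight, observe that $[\W_z^{(k)},V*\rho_\tau]=0$ so only the kinetic term contributes, reduce to bounding $\W_z^{(k/2)}[\sqrt{1-\veps^2\Delta},|\hat x-z|^{2k}]$ in operator norm via a commutator expansion, use that iterated commutators of $\hat x$ with $\sqrt{1-\veps^2\Delta}$ are $O(\veps^{|\alpha'|})$ (bounded group velocity), and conclude by Gronwall. The only cosmetic difference is that the paper runs Gronwall directly on $\langle\psi_\tau,\W_z^{(k)}\psi_\tau\rangle$ (inserting $(\W_z^{(k/2)})^{-1}\cdots(\W_z^{(k/2)})^{-1}$ and using $(\W_z^{(k/2)})^2\le C\W_z^{(k)}$), which yields the bound without the extra prefactor; your detour through $g(\tau)=\langle\psi_\tau,(\W_z^{(k/2)})^2\psi_\tau\rangle$ produces a harmless constant in front, as you note.
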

The reader should compare the result with Eq.~(\ref{eq:propU1}). The reason why we are able to control $U_{\mathrm{rel}}(t)^*\W_z(\hat x)U_{\mathrm{rel}}(t) $ with $\W_z(\hat x)$, for all times, is the fact that the velocity operator $[\hat{x},\sqrt{1-\veps^2 \Delta}]$ is bounded.
\begin{proof}
We compute the derivative
\begin{equation}
\begin{split}
i \veps \partial_{t} U_{\mathrm{rel}}(t;s)^{*}\W_{z}^{(k)}U_{\mathrm{rel}}(t;s) & = U_{\mathrm{rel}}(t;s)^{*}\big[\W_{z}^{(k)},h_{\mathrm{rel}}(t) \big]U_{\mathrm{rel}}(t;s)
\\
& =
U_{\mathrm{rel}}(t;s)^{*}\big[\W_{z}^{(k)},\sqrt{1-\veps^{2}\Delta} \big]U_{\mathrm{rel}}(t;s) \;,
\end{split}
\end{equation}
so that, for any $\phi \in L^{2}(\mathbb{R}^{3})$ we have
\begin{equation}
\begin{split}
&\langle \phi,U_{\mathrm{rel}}(t;s)^{*}\W_{z}^{(k)}U_{\mathrm{rel}}(t;s) \phi  \rangle \leq 
\langle \phi, \W_{z}^{(k)} \phi  \rangle
\\
& \quad
+ \frac{C}{\veps} \Big\| \big(\W_{z}^{(k/2)}\big)^{-1} \big[\W_{z}^{(k)},\sqrt{1-\veps^{2}\Delta} \big]\big(\W_{z}^{(k/2)}\big)^{-1}\Big\|_{\op}\int_{s}^{t}d \tau  
 \langle \phi,U_{\mathrm{rel}}(\tau;s)^{*}\W_{z}^{(k)}U_{\mathrm{rel}}(\tau;s) \phi  \rangle
\end{split}
\end{equation}
where we used that $\big(\W_{z}^{(k/2)}\big)^{2} \leq C\W_{z}^{(k)}$. To control the operator norm, we write
\begin{equation}
\begin{split}
\Big\| &\big(\W_{z}^{(k/2)}\big)^{-1}\big[\W_{z}^{(k)},\sqrt{1-\veps^{2}\Delta} \big]\big(\W_{z}^{(k/2)}\big)^{-1}\Big\|_{\op} \\
&\leq \big\| \big(\W_{z}^{(k/2)}\big)^{-1} \W_{z}^{(k)} \big[ |\hat x - z|^{2k}, \sqrt{1-\veps^{2}\Delta} \big] \W_{z}^{(k)} \big(\W_{z}^{(k/2)}\big)^{-1} \big\|_{\text{op}} \\
&\leq C \Big\| \W_{z}^{(k/2)} \big[|\hat{x}-z|^{2k}, \sqrt{1-\veps^{2}\Delta}\big]\Big\|_{\op}
\\
&\leq C\sum_{\alpha: |\alpha | = 2k} \sum_{\substack{\alpha': |\alpha'|\geq 1 \\ \alpha' \leq \alpha}} \Big\| \W_{z}^{(k/2)} (\hat{x}-z)^{\alpha-\alpha'}\mathrm{ad}^{(\alpha')}_{\hat{x}}\Big(\sqrt{1-\veps^{2}\Delta}\Big)\Big\|_{\op} \;.
\end{split}
\end{equation}
Since $\mathrm{ad}^{(\alpha')}_{\hat{x}}\Big(\sqrt{1-\veps^{2}\Delta}\Big) \leq C\veps^{|\alpha'|}$ for $1 \leq|\alpha'|\leq 2k$, we obtain
\begin{equation*}
\langle \phi,U_{\mathrm{rel}}(t;s)^{*}\W_{z}^{(k)}U_{\mathrm{rel}}(t;s) \phi  \rangle \leq 
\langle \phi, \W_{z}^{(k)} \phi  \rangle + C\int_{s}^{t} d \tau 
\langle \phi,U_{\mathrm{rel}}(\tau;s)^{*}\W_{z}^{(k)}U_{\mathrm{rel}}(\tau;s) \phi  \rangle 
\end{equation*}
which implies the claim by the Gronwall lemma.
\end{proof}
As a corollary, Proposition \ref{prop:W-rel-prop} immediately implies absence of excessive concentration for the density, for all times.
\begin{corollary}[No-Concentration Bound] \label{crl-no-conc-rel} Under the same assumption of Theorem \ref{thm:mainrel}, we have:
\begin{equation}
\sup _{z \in \mathbb{R}^{3}} \tr\, \W^{(1)}_{z} \omega_{N,t} \leq C e^{Ct} \veps^{-3}\;.
\end{equation}
\end{corollary}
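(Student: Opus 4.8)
The plan is to obtain the estimate as a direct consequence of Proposition \ref{prop:W-rel-prop}. Since $\omega_{N,t}$ solves the pseudo-relativistic Hartree equation $i\veps\partial_t\omega_{N,t}=[h_{\mathrm{rel}}(t),\omega_{N,t}]$, it is given by conjugation with the Hartree propagator, $\omega_{N,t}=U_{\mathrm{rel}}(t;0)\,\omega_N\,U_{\mathrm{rel}}(t;0)^{*}$. Writing the spectral decomposition $\omega_N=\sum_{j=1}^{N}|\phi_j\rangle\langle\phi_j|$ of the initial projector, I would first rewrite, for any fixed $z\in\mathbb{R}^{3}$,
\begin{equation*}
\tr\,\W^{(1)}_{z}\,\omega_{N,t}=\sum_{j=1}^{N}\big\langle\phi_j,\,U_{\mathrm{rel}}(t;0)^{*}\,\W^{(1)}_{z}\,U_{\mathrm{rel}}(t;0)\,\phi_j\big\rangle\;.
\end{equation*}

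Next I would apply Proposition \ref{prop:W-rel-prop} with $k=1$ and $s=0$, which gives the operator inequality $U_{\mathrm{rel}}(t;0)^{*}\W^{(1)}_{z}U_{\mathrm{rel}}(t;0)\le e^{Ct}\W^{(1)}_{z}$ on $L^{2}(\mathbb{R}^{3})$. Taking expectations in each $\phi_j$ and summing over $j$ — equivalently, pairing this operator inequality against the non-negative trace-class operator $\omega_N$ and using cyclicity of the trace — yields
\begin{equation*}
\tr\,\W^{(1)}_{z}\,\omega_{N,t}\le e^{Ct}\,\tr\,\W^{(1)}_{z}\,\omega_N\;.
\end{equation*}
It then remains only to bound the right-hand side uniformly in $z$. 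Since Theorem \ref{thm:mainrel} assumes Assumption \ref{ass:sc} with $T_1=0$, the estimate \eqref{eq:assump_density} at $t=0$ (where $\W^{(1)}_z(0)=\W^{(1)}_z$) gives exactly $\sup_{z}\tr\,\W^{(1)}_{z}\,\omega_N\le C\veps^{-3}$. Combining the two displays and taking the supremum over $z$ produces the claimed bound $\sup_z\tr\,\W^{(1)}_z\omega_{N,t}\le Ce^{Ct}\veps^{-3}$.

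There is no genuine obstacle here: all the work has already been carried out in Proposition \ref{prop:W-rel-prop}, whose proof exploits the boundedness of the pseudo-relativistic velocity operator $[\hat x,\sqrt{1-\veps^{2}\Delta}]$ to control the Hartree evolution of the localization operator by a purely exponential factor in $t$ (rather than the polynomially time-dependent shift appearing in the non-relativistic Proposition \ref{prp:propU}), with a constant independent of $N$, $|\Lambda|$ and $\veps$. The only minor points to keep in mind are that the operator inequality of Proposition \ref{prop:W-rel-prop} is tested against the positive operator $\omega_N$, and that the $\veps$-, $N$- and $|\Lambda|$-independence of the final constant is inherited verbatim from that proposition; no additional concentration argument is needed, in contrast with the non-relativistic Proposition \ref{prp:density}.
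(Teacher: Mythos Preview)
Your proof is correct and essentially identical to the paper's: both write $\tr\,\W^{(1)}_{z}\omega_{N,t}=\tr\,U_{\mathrm{rel}}(t;0)^{*}\W^{(1)}_{z}U_{\mathrm{rel}}(t;0)\,\omega_N$, apply the operator inequality of Proposition~\ref{prop:W-rel-prop} against the positive operator $\omega_N$, and then invoke assumption~\eqref{eq:assump_density} at $t=0$. The only cosmetic difference is that you spell out the spectral decomposition of $\omega_N$, whereas the paper appeals directly to positivity and cyclicity of the trace.
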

\begin{proof}
We have:
\begin{equation}
\begin{split}
\tr\, \W^{(1)}_{z} \omega_{N,t} = \tr\, U_{\mathrm{rel}}(t;0)^{*}\W_{z}^{(1)}U_{\mathrm{rel}}(t;0)\omega_{N} &\leq e^{Ct} \tr\, \W_{z}^{(1)} \omega_{N} \leq C e^{Ct} \varepsilon^{-3}\;,
\end{split}
\end{equation}
where the first inequality follows from Proposition \ref{prop:W-rel-prop} and from the positivity of $\omega_{N}$, while the last inequality follows from the assumption of Eq. (\ref{eq:assump_density}).
\end{proof}

\subsection{Proof of Theorem \ref{thm:proprel}}
The first bound is an immediate consequence of Proposition \ref{prop:W-rel-prop} and of the assumption (\ref{eq:H2}) on the initial datum. Let us now prove the second inequality. By using the Jacobi identity, we write:
\begin{equation}
i\varepsilon \partial_{t} \big[e^{iq\cdot \hat{x}}, \omega_{N,t}\big] = \big[h_{\mathrm{rel}}(t),\big[e^{iq\cdot \hat{x}}, \omega_{N,t}\big]\big] + \big[\omega_{N,t}, \big[ h_{\mathrm{rel}}(t),e^{iq\cdot \hat{x}} \big]\big]\;.
\end{equation}
Consider the second term on the right-hand side. It can be rewritten as
\begin{equation}
\begin{split}
\big[\omega_{N,t}, \big[ h_{\mathrm{rel}}(t),e^{iq\cdot \hat{x}} \big]\big] & =
\big[\omega_{N,t}, \big[\sqrt{1-\veps^2 \Delta } ,e^{iq\cdot \hat{x}} \big]\big]
\\
& =
\big[\omega_{N,t},e^{iq\cdot \hat{x}}  \big(\sqrt{1+\veps^{2}(-i\nabla + q)^{2} } -\sqrt{1-\veps^2 \Delta }\big)\big]
\\
& = 
\big[\omega_{N,t},e^{iq\cdot \hat{x}}  \big] \veps  A(q)
+ e^{iq\cdot \hat{x}} 
\big[\omega_{N,t}, \veps A(q)\big] 
\end{split}
\end{equation}
where we have introduced the operator 
\begin{equation}
A(q):=\int_{0}^{1} d s \frac{\veps(-i \nabla + s q) \cdot q }{\sqrt{1+\veps^{2}(-i\nabla + s q)^{2}}}\;.
\end{equation}
Let us introduce the modified dynamics
\begin{equation*}
i\varepsilon \partial_{t} U_{ \mathrm{rel};q}(t;s) = (h_{\mathrm{rel}}(t)+ \veps A(q)) U_{\mathrm{rel}; q}(t;s)\;,\qquad U_{\mathrm{rel};q}(s;s) = 1\;.
\end{equation*}
This allows us to write 
\begin{equation}
i\varepsilon \partial_{t} U_{\mathrm{rel}}(t;s)^{*} \big[e^{iq \cdot \hat{x}}, \omega_{N,t}\big] U_{\mathrm{rel};q}(t;s)= U_{\mathrm{rel}}(t;s)^{*}e^{iq\cdot \hat{x}} 
\big[\omega_{N,t}, \veps A(q)\big] U_{\mathrm{rel};q}(t;s)\;.
\end{equation}
Writing this equation in integral form we get:
\begin{equation}\label{eq:BPS2-rel}
\begin{split}
\big[e^{iq \cdot \hat{x}}, \omega_{N,t}\big] &= U_{\mathrm{rel}}(t;0) \big[e^{iq \cdot \hat{x}}, \omega_{N}\big] U_{\mathrm{rel};q}(t;0)^{*} -i\int_{0}^{t}d\tau\, U_{\mathrm{rel}}(\tau;t)^{*}e^{iq\cdot \hat{x}}\big[ \omega_{N,\tau},A(q)\big] U_{\mathrm{rel};q}(\tau;t)\;.
\end{split}
\end{equation}
We shall now plug this identity into $\big\|\W_{z}^{(n)}\,\big[e^{i q \cdot \hat{x}}, \omega_{N,t}\big] \big\|_{\tr}$, and estimate the various terms. The first term gives the contribution,
\begin{equation}
\Big\| \W_{z}^{(n)} U_{\mathrm{rel}}(t;0) \big[e^{iq \cdot \hat{x}}, \omega_{N}\big] U_{\mathrm{rel};q}(t;0)^{*}\Big\|_{\tr} 
\leq C_{t} \Big\| \W_{z}^{(n)}  \big[e^{iq \cdot \hat{x}}, \omega_{N}\big] \Big\|_{\tr} \;,
\end{equation}
where we used Proposition \ref{prop:W-rel-prop}, Lemma \ref{lem:mon} and the invariance of the trace under unitary conjugation. We then bound the term due to the integrand in \eqref{eq:BPS2-rel} as follows:
\begin{equation}\label{int-A-rel}
\begin{split}
\Big\|\W_{z}^{(n)}&U_{\mathrm{rel}}(\tau;t)^{*}e^{iq\cdot \hat{x}}\big[ \omega_{N,\tau},A(q)\big] U_{\mathrm{rel};q}(\tau;t) \Big\|_{\tr}
\\
&
\leq C_{t-\tau}
\Big\|\W_{z}^{(n)} \big[ \omega_{N,\tau},A(q)\big]\Big\|_{\tr}
\\
& \leq C_{t-\tau} \int_{0}^{1} d s\, \Big\|\W_{z}^{(n)} \big[ \omega_{N,\tau},\frac{\veps(-i \nabla + sq) \cdot q }{\sqrt{1+\veps^{2}(-i\nabla + s q)^{2}}}\big]\Big\|_{\tr}
\\
& \leq 
C_{t-\tau} \int_{0}^{1} d s
\Big\|\W_{z}^{(n)} \big[ \omega_{N,\tau}, \veps(-i \nabla) \cdot q \big] \frac{1}{\sqrt{1+\veps^{2}(-i\nabla + s q)^{2}}}\Big\|_{\tr}
\\
& 
\qquad \qquad \qquad
+C_{t-\tau} \int_{0}^{1} d s\, \Big\|\W_{z}^{(n)} \veps(-i \nabla + sq) \cdot q \Big[ \omega_{N,\tau}, \frac{1}{\sqrt{1+\veps^{2}(-i\nabla + s q)^{2}}} \Big] \Big\|_{\tr}\;,
\end{split}
\end{equation}
where $C_{\tau} = C\exp(C \tau)$. Since $\| (1+\veps^{2}(-i\nabla + s q)^{2})^{-1/2}\|_{\op}\leq 1$, the first term on the last line of \eqref{int-A-rel} is estimated by 
\begin{equation}\label{first-term-WA-rel}
\int_{0}^{1} d s\, \Big\|\W_{z}^{(n)} \big[ \omega_{N,\tau}, \veps(-i \nabla) \cdot q \big] \frac{1}{\sqrt{1+\veps^{2}(-i\nabla + s q)^{2}}}\Big\|_{\tr} \leq |q| \Big\|\W_{z}^{(n)} \big[ \omega_{N,\tau}, \veps \nabla \big] \Big\|_{\tr} \;.
\end{equation}
To bound the second term, we use the integral representation 
\begin{equation}
\frac{1}{\sqrt{B}} = \frac{1}{\pi} \int_{0}^{\infty} \frac{d \lambda}{\sqrt{\lambda}} \frac{1}{B+\lambda} \, ,
\end{equation}
valid for any self-adjoint $B>0$. Accordingly:
\begin{equation}\label{partialWA-rel}
\begin{split}
&
\Big\|\W_{z}^{(n)} \veps(-i \nabla + sq) \cdot q \Big[ \omega_{N,\tau}, \frac{1}{\sqrt{1+\veps^{2}(-i\nabla + s q)^{2}}} \Big] \Big\|_{\tr}
\\
& \leq 
 \frac{1}{\pi} \int_{0}^{\infty} \frac{d \lambda}{\sqrt{\lambda}}
\Big\|\W_{z}^{(n)} \frac{\veps(-i \nabla + sq) \cdot q}{1+\veps^{2}(-i\nabla + s q)^{2}+\lambda} \big[ \omega_{N,\tau}, \veps^{2}(-i\nabla + s q)^{2} \big] \frac{1}{1+\veps^{2}(-i\nabla + s q)^{2}+\lambda}\Big\|_{\tr} 
\\
&
\leq 
\sum_{j} \frac{1}{\pi} \int_{0}^{\infty} \frac{d \lambda}{\sqrt{\lambda}}
\Big\|\W_{z}^{(n)} \frac{\veps(-i \nabla +s q) \cdot q \; \veps(-i\nabla + s q)_{j}}{1+\veps^{2}(-i\nabla + s q)^{2}+\lambda} \big[ \omega_{N,\tau}, \veps \nabla_{j} \big] \frac{1}{1+\veps^{2}(-i\nabla + s q)^{2}+\lambda}\Big\|_{\tr} 
\\
&
\quad+ \sum_{j} \frac{1}{\pi} \int_{0}^{\infty} \frac{d \lambda}{\sqrt{\lambda}}
\Big\|\W_{z}^{(n)} \frac{\veps(-i \nabla +s q) \cdot q  }{1+\veps^{2}(-i\nabla + s q)^{2}+\lambda} \big[ \omega_{N,\tau}, \veps \nabla_{j} \big] \frac{\veps(-i\nabla + s q)_{j}}{1+\veps^{2}(-i\nabla + s q)^{2}+\lambda}\Big\|_{\tr} \;.
\end{split}
\end{equation}
Using the following bounds, for $|\alpha| \leq 4n$:
\begin{equation*}
\begin{split}
\Big\|\mathrm{ad}^{(\alpha)}_{\hat{x}}\bigg(\frac{\veps(-i \nabla +s q)_{j}}{1+\veps^{2}(-i\nabla + s q)^{2}+\lambda} \bigg) \Big\|_{\op} &\leq C (1+\lambda)^{-\frac{1+|\alpha|}{2}} \;,
\\
\Big\|\mathrm{ad}^{(\alpha)}_{\hat{x}}\bigg(\frac{\veps(-i\nabla + s q)_{j}\veps(-i\nabla + s q)_{k}}{1+\veps^{2}(-i\nabla + s q)^{2}+\lambda} \bigg)\Big\|_{\op} &\leq C (1+\lambda)^{-\frac{|\alpha|}{2}}\;,
\end{split}
\end{equation*}
we obtain
\begin{equation}\label{ad-W-D}
\begin{split}
\Big\|\W_{z}^{(n)} &\frac{\veps(-i \nabla +s q)_{j} }{1+\veps^{2}(-i\nabla + s q)^{2}+\lambda}\big(\W_{z}^{(n)}\big)^{-1}\Big\|_{\op}
\\
& 
\leq 1 + \sum_{\alpha: |\alpha| = 4n} \sum_{\alpha'\leq \alpha}\Big\|\W_{z}^{(n)} (\hat{x}-z)^{\alpha - \alpha'}\mathrm{ad}^{(\alpha')}_{\hat{x}}\bigg(\frac{\veps(-i \nabla +s q)_{j}}{1+\veps^{2}(-i\nabla + s q)^{2}+\lambda} \bigg)\Big\|_{\op}
\\
&
\leq (1+\lambda)^{-1/2}
\end{split}
\end{equation}
and also
\begin{equation}\label{ad-W-DD}
\begin{split}
\Big\|\W_{z}^{(n)} &\frac{\veps(-i \nabla +s q)_{j} \veps(-i\nabla + s q)_{k} }{1+\veps^{2}(-i\nabla + s q)^{2}+\lambda}\big(\W_{z}^{(n)}\big)^{-1}\Big\|_{\op} \leq C
\end{split}
\end{equation}
Putting together \eqref{partialWA-rel}, \eqref{ad-W-D} and \eqref{ad-W-DD} and using $\|(1+\veps^{2}(-i\nabla + s q)^{2}+\lambda)^{-1} \|_{\op}\leq (1+\lambda)^{-1}$, 
we obtain the estimate
\begin{equation}
\begin{split}
\int_{0}^{1} d s
\Big\|\W_{z}^{(n)} \veps(-i \nabla + sq) \cdot q \Big[ \omega_{N,\tau}, \frac{1}{\sqrt{1+\veps^{2}(-i\nabla + s q)^{2}}} \Big] \Big\|_{\tr}
\leq C |q| \Big\| \W_{z}^{(n)}\big[\omega_{N,\tau}, \veps \nabla\big] \Big\|_{\tr} \;,
\end{split}
\end{equation}
which, combined with \eqref{first-term-WA-rel} implies
\begin{equation}
\Big\|\W_{z}^{(n)} U_{\mathrm{rel}}(\tau;t)^{*}e^{iq\cdot \hat{x}}\big[ \omega_{N,\tau},A(q)\big] U_{\mathrm{rel};q}(\tau;t) \Big\|_{\tr} \leq C_{t-\tau} |q| \Big\| \W_{z}^{(n)}\big[\omega_{N,\tau}, \veps \nabla\big] \Big\|_{\tr} \;.
\end{equation}
All in all, we have thus proven that
\begin{equation} \label{XW[eiqx,w]-rel}
\begin{split}
\sup _{z \in \mathbb{R}^{3}} X_{\Lambda}(z) \Big\|\W_{z}^{(n)} \big[ e^{iq \cdot \hat{x}},\omega_{N,t}\big] \Big\|_{\tr} 
& \leq C_{t} \veps^{-2} +  |q| \int_{0}^{t} d \tau \,C_{t-\tau} \sup _{z \in \mathbb{R}^{3}} X_{\Lambda}(z)\Big\| \W_{z}^{(n)}[\omega_{N,\tau}, \veps \nabla] \Big\|_{\tr}\;.
\\
\end{split}
\end{equation}
Proceeding as in the proof of Theorem \ref{thm:propcomm}, we find
\begin{equation}\label{W[w,D]-rel}
\begin{split}
\Big\| \W_{z}^{(n)}[\omega_{N,\tau}, \veps \nabla] \Big\|_{\tr} & \leq C_{\tau}
\Big\| \W_{z}^{(n)}[\omega_{N}, \veps \nabla] \Big\|_{\tr}
 + \int_{0}^{\tau} d\eta  \,C_{\tau-\eta} \Big\|\W_{z}^{(n)}[\omega_{N,\eta}, \nabla V * \rho_{\eta}] \Big\|_{\tr} \;,
\end{split}
\end{equation}
where we used Proposition \ref{prop:W-rel-prop}, Lemma \ref{lem:mon} and the invariance of the trace under unitary conjugation. With respect to the non-relativistic case, notice the simplification introduced by the fact that localization operator is not time-evolved. Letting $F_{y}(\hat{x})$ be as below \eqref{eq:W[w,DVr]}, we have:
\begin{equation}\label{partialW[w,DVrho]rel}
\begin{split}
\Big\|&\W_{z}^{(n)}[\omega_{N,\eta}, \nabla V * \rho_{\eta}] \Big\|_{\tr} 
\\
&\leq 
\Big\|\int d y \rho_{\eta}(y) \W_{z}^{(n)}\W_{y}^{(1)} [\omega_{N,\eta}, F_{y}(\hat{x})] \Big\|_{\tr} 
+
\Big\|\int d y\rho_{\eta}(y) \W_{z}^{(n)} [\omega_{N,\eta},\W_{y}^{(1)}] F_{y}(\hat{x}) \Big\|_{\tr}
\\
&
\leq \sup _{p \in \mathbb{R}^{3}}\Big(\big\| \W^{(1)}* \rho_{\eta}^{p}\big\|_{\op} +\big\| F* \rho_{\eta}^{p}\big\|_{\op}\Big)
\int dp \, \Big(\big|\hat{F}(p)\big|+ \big|\widehat{\W^{(1)}}(p)\big|\Big) \Big\| \W_{z}^{(n)}[\omega_{N,\eta},e^{ip\cdot \hat{x}}]\Big\|_{\tr} \;,
\end{split}
\end{equation}
compare with \eqref{eq:W[w,DVr]} and \eqref{eq:rWW[w,F]}. By Corollary \ref{crl-no-conc-rel} we have that $\big\| \W^{(1)}* \rho_{\eta}^{p}\big\|_{\op} ,\big\| F* \rho_{\eta}^{p}\big\|_{\op}\leq C_{\eta}$ uniformly in $p$,
whereas the integral on the last line in \eqref{partialW[w,DVrho]rel} is controlled by splitting in $|p|\leq \veps^{-1}$ and $|p| >\veps^{-1}$ as was done in \eqref{eq:FW[w,eipx]final}. Accordingly, by using the assumptions on the potential we obtain
\begin{equation}\label{XW[w,DV-r]-rel}
\sup _{z \in \mathbb{R}^{3}} X_{\Lambda}(z)\Big\| \W_{z}^{(n)}[\omega_{N,\eta}, \nabla V * \rho_{\eta}] \Big\|_{\tr} \leq C_{\eta} \veps^{-2}+ C_{\eta} \sup _{p: |p| \leq \veps^{-1}} \frac{X_{\Lambda}(z)}{1+|p|} \Big\|\W_{z}^{(n)}[\omega_{N,\eta} , e^{ip\cdot \hat{x}}] \Big\|_{\tr} \;.
\end{equation}
Putting together the bounds \eqref{XW[eiqx,w]-rel}, \eqref{W[w,D]-rel} and \eqref{XW[w,DV-r]-rel}, we finally get:
\begin{equation}
\begin{split}
\sup _{q: |q| \leq \veps^{-1}}\frac{X_{\Lambda}(z)}{1+|q|}&
 \Big\|\W_{z}^{(n)}[\omega_{N,\eta} , e^{iq\cdot \hat{x}}] \Big\|_{\tr} 
\\
&
\leq C_{t}\veps^{-2} + C_{t} \int_{0}^{t} d \tau \int_{0}^{\tau} d \eta 
 \sup _{p: |p| \leq \veps^{-1}} \frac{X_{\Lambda}(z)}{1+|p|} \Big\|\W_{z}^{(n)}[\omega_{N,\eta} , e^{ip\cdot \hat{x}}] \Big\|_{\tr} \;.
 \end{split}
\end{equation}
The final claim, Eq.~(\ref{eq:propscrel}), follows by the application of Gronwall lemma.
\qed

\appendix

\section{Check of Assumption \ref{ass:sc}}\label{app:localsc}

Here we shall discuss examples of fermionic states satisfying Assumptions \ref{ass:sc}. Specifically, we shall consider the case of the free Fermi gas on $\mathbb{R}^{3}$ at positive density, and of coherent states. We expect these assumptions to hold true for a wide class of fermionic equilibrium states.

\subsection{The Free Fermi Gas}

For the sake of simplicity, here we prove the assumptions for the free Fermi on $\mathbb{R}^{3}$, at positive density. We expect similar estimates to hold true for a free Fermi gas in a large enough, finite periodic box $\Lambda \subset \mathbb{R}^{3}$. Let $\omega$ be the operator on $L^{2}(\mathbb{R}^{3})$ with integral kernel:
\begin{equation*}
\omega(x;y) = \int_{\R^{3}} \frac{d q}{(2 \pi)^{3}} \,\mathbf{1}_{|q| \leq  \veps^{-1}} e^{i q \cdot (x - y)}\;,
\end{equation*}
This operator describes the homogeneous free Fermi gas in $\mathbb{R}^{3}$ at density $\omega(x;x) =  (1/6 \pi^{2})\varepsilon^{-3}$. Clearly, $\omega = \omega^{2} = \omega^{*}$, and $\tr\, \omega = \infty$.

\medskip

\noindent{\underline{Check of (\ref{eq:propcomm1})}.} Being the state defined in $\Lambda = \mathbb{R}^{3}$, here we shall replace $X_{\Lambda}(z)$ with $1$. Simple computations show that the kernel of the operator $[e^{i p \cdot \hat{x}}, \omega]$ is given by
\begin{equation*}
[e^{i p \cdot \hat{x}}, \omega](x,y) =
e^{i p \cdot \frac{x+y}{2}} \int_{\R^{3}} \frac{d q}{(2 \pi)^{3}} \,\Big( \mathbf{1}_{|q- p/2| \leq \veps^{-1} } - \mathbf{1}_{|q+ p/2| \leq \veps^{-1}}\Big) e^{i q \cdot  (x-y)}\;.
\end{equation*}
Furthermore, $\big| [e^{i p \cdot \hat{x}}, \omega]\big|^{2} = \big| [e^{i p \cdot \hat{x}}, \omega]\big|$ and
\begin{equation*}
\big| [e^{i p \cdot \hat{x}}, \omega]\big|^{2}(x;y)  =  \int_{\R^{3}} \frac{d q}{(2 \pi)^{3}} \, \mathbf{1}_{S_{p}} e^{i q \cdot (x-y)}\;,
\end{equation*}
with $S_{p}$ the following set:
\begin{equation*}
S_{p}:= \big\{q \in \R^{3} \,\big| \,|q -  p| \leq \veps^{-1} \big\}\ominus\big\{q \in \R^{3} \, \big| \,|q | \leq \veps^{-1} \big\}\;.
\end{equation*}
Clearly,
\begin{equation*}
|S_{p}| \leq C |p| \varepsilon^{-2}\;.
\end{equation*}
Let $F_{p}(q)$ be a $C^{\infty}$ smoothing of the characteristic function $\chi_{S_{p}}(q)$, such that:
\begin{equation}\label{eq:smoothSp-}
\begin{split}
\chi_{S_{p}}(q) \leq F_{p}(q)\;,\qquad
F_{p}(q)\upharpoonright_{S_{p}} = 1\;, \qquad F_{p}(q) = 0 \quad \mathrm{if}\quad \mathrm{dist}(q,S_{p})\geq 1\;.
\end{split}
\end{equation}
One can check that the following holds:
\begin{equation}\label{eq:smoothSp}
\| F_{p} \|_{1} \leq C (1+|p|) \varepsilon^{-2}\;,
\qquad \quad
\| D^{k} F_{p}\|_{1} \leq C_{\alpha}  \varepsilon^{-2}\;, \quad \forall k > 0\;.
\end{equation}
Let $O_{p}$ be the operator with integral kernel:
\begin{equation*}
O_{p}(x;y) = \int_{\R^{3}} \frac{d q}{(2 \pi)^{3}}\, F_{p}(q) e^{i q \cdot  (x-y)}\;.
\end{equation*}
Then, $\big| [e^{i p \cdot \hat{x}}, \omega]\big|^{2}\leq O_{p}$. In particular, by Lemma \ref{lem:mon}
\begin{equation}\label{eq:WO0}
\big\| \mathcal{W}^{(n)}_{z}(t) [e^{i p \cdot \hat{x}}, \omega] \big\|_{\text{tr}} = \big\| \mathcal{W}^{(n)}_{z}(t) \big| [e^{i p \cdot \hat{x}}, \omega]\big|^{2} \big\|_{\text{tr}} \leq \big\| \mathcal{W}^{(n)}_{z}(t) O_{p} \big\|_{\text{tr}}\;.
\end{equation}
Since $O_{p}$ is invariant under free time evolution we then have, by the invariance under conjugation with unitary transformations:
\begin{equation*}
\big\| \mathcal{W}^{(n)}_{z}(t) O_{p} \big\|_{\text{tr}} = \big\| \mathcal{W}^{(n)}_{z} O_{p} \big\|_{\text{tr}} \;.
\end{equation*}
To bound the right-hand side, it is enough to consider $n=1$. From the inequality $\mathcal{W}_{z}^{(1)}(x)\leq C(1 + |x-z|^{2})^{-2}$, we have by Lemma \ref{lem:mon}
\begin{equation}\label{eq:WO}
\begin{split}
\big\| \mathcal{W}^{(1)}_{z} O_{p} \big\|_{\text{tr}} &\leq C\Big\| \frac{1}{(1 + |\hat x - z|^{2})^{2}} O_{p}  \Big\|_{\text{tr}} \\
&\leq C\Big\| \frac{1}{1 + |\hat x - z|^{2}} O_{p} \frac{1}{1 + |\hat x - z|^{2}}  \Big\|_{\text{tr}} + C\Big\| \frac{1}{1 + |\hat x - z|^{2}} \Big[ \frac{1}{1 + |\hat x - z|^{2}}, O_{p} \Big] \Big\|_{\tr}\\
&\equiv \text{I} + \text{II}\;.
\end{split}
\end{equation}
Consider the first term. Let:
\begin{equation*}
g_{z,p}(x) = \frac{e^{ip \cdot x}}{(1 + |x - z|^{2})^{2}}\;.
\end{equation*}
Clearly, $g_{z,p} \in L^{2}(\mathbb{R}^{3})$. Moreover, by the first bound in (\ref{eq:smoothSp}):
\begin{equation}\label{eq:Ismooth}
|\text{I}| \leq C\int dq\, F(q) \big\| |g_{z,p}\rangle \langle g_{z,p}| \big\|_{\mathrm{tr}} \leq K |p| \varepsilon^{-2}\;.
\end{equation}
Consider now the term $\text{II}$. We have:
\begin{equation*}
\begin{split}
\Big[ \frac{1}{1 + |\hat x - z|^{2}}, O_{p} \Big]  &= - \frac{1}{1 + |\hat x - z|^{2}} \Big[ |\hat x - z|^{2}, O_{p} \Big] \frac{1}{1 + |\hat x - z|^{2}} \\
& = - \sum_{i=1}^{3} \frac{1}{1 + |\hat x - z|^{2}} \Big[ (\hat x_{i} - z_{i})^{2}, O_{p} \Big] \frac{1}{1 + |\hat x - z|^{2}}\;.
\end{split}
\end{equation*}
We then have:
\begin{equation*}
\begin{split}
&\Big\| \frac{1}{1 + |\hat x - z|^{2}} \Big[ \frac{1}{1 + |\hat x - z|^{2}}, O_{p} \Big] \Big\|_{\text{tr}} \\&\leq \sum_{i=1}^{3} \Big\| \frac{1}{(1 + |\hat x - z|^{2})^{2}} \Big[ (\hat x_{i} - z_{i})^{2}, O_{p} \Big] \frac{1}{1 + |\hat x - z|^{2}} \Big\|_{\text{tr}} \\
& \leq \sum_{i=1}^{3} \Big\| \frac{(\hat x_{i} - z_{i})}{(1 + |\hat x - z|^{2})^{2}} \Big[ \hat x_{i}, O_{p} \Big] \frac{1}{1 + |\hat x - z|^{2}} \Big\|_{\text{tr}} + \sum_{i=1}^{3}\Big\| \frac{1}{(1 + |\hat x - z|^{2})^{2}} \Big[ \hat x_{i}, O_{p} \Big] \frac{(\hat x_{i} - z_{i})}{1 + |\hat x - z|^{2}} \Big\|_{\text{tr}} \\
&\equiv \text{II}_{1} + \text{II}_{2}\;.
\end{split}
\end{equation*}
Consider $\text{II}_{1}$. We have:
\begin{equation*}
|\text{II}_{1}| \leq C \sum_{i =1}^{3}\Big\| \frac{1}{1 + |\hat x - z|^{2}} \Big[ \hat x_{i}, O_{p} \Big] \frac{1}{1 + |\hat x - z|^{2}} \Big\|_{\text{tr}}\;.
\end{equation*}
Using that:
\begin{equation*}
\Big[ \hat x_{i}, O_{p} \Big] = i \int_{\R^{3}} \frac{d q}{(2 \pi)^{3}} \, \partial_{q_{i}} F_{p}(q) |e^{i q \cdot x} \rangle \langle e^{iq \cdot x}|\;,
\end{equation*}
and recalling the second bound in (\ref{eq:smoothSp}), we have:
\begin{equation}\label{eq:II1}
|\text{II}_{1}| \leq C \sup _{i =1,2,3}\int dq\, | \partial_{q_{i}} F_{p}(q) | \leq K \varepsilon^{-2}\;.
\end{equation}
Consider now the term $\text{II}_{2}$. We have:
\begin{equation*}
\begin{split}
&\sum_{i=1}^{3}\Big\| \frac{1}{(1 + |\hat x - z|^{2})^{2}} \Big[ \hat x_{i}, O_{p} \Big] \frac{(\hat x_{i} - z_{i})}{1 + |\hat x - z|^{2}} \Big\|_{\text{tr}} \\
&\leq \sum_{i=1}^{3}\Big\| \frac{1}{1 + |\hat x - z|^{2}} \Big[ \hat x_{i}, O_{p} \Big] \frac{1}{1 + |\hat x - z|^{2}} \frac{(\hat x_{i} - z_{i})}{1 + |\hat x - z|^{2}} \Big\|_{\text{tr}} \\
& \quad + \sum_{i=1}^{3}\Big\| \frac{1}{1 + |\hat x - z|^{2}} \Big[ \frac{1}{1 + |\hat x - z|^{2}}, \Big[ \hat x_{i}, O_{p} \Big]\Big] \frac{(\hat x_{i} - z_{i})}{1 + |\hat x - z|^{2}}\Big] \Big\|_{\text{tr}} \equiv \text{II}_{2;1} + \text{II}_{2;2}\;.
\end{split}
\end{equation*}
The first term is bounded as $\text{II}_{1}$:
\begin{equation}\label{eq:II21}
|\text{II}_{2;1}| \leq \sum_{i=1}^{3}\Big\| \frac{1}{1 + |\hat x - z|^{2}} \Big[ \hat x_{i}, O_{p} \Big] \frac{1}{1 + |\hat x - z|^{2}} \Big\|_{\text{tr}} \leq C |p|\varepsilon^{-2}\;.
\end{equation}
Finally, consider $\text{II}_{2;2}$. Writing:
\begin{equation*}
\begin{split}
-\Big[ \frac{1}{1 + |\hat x - z|^{2}}, \Big[ \hat x_{i}, O_{p} \Big]\Big]  &= \frac{1}{1 + |\hat x - z|^{2}} \Big[ |\hat x - z|^{2} , \Big[ \hat x_{i}, O_{p} \Big]\Big] \frac{1}{1 + |\hat x - z|^{2}} \\
&= \sum_{i=1}^{3} \frac{(\hat x_{i} - z_{i})}{1 + |\hat x - z|^{2}} \Big[ \hat x_{i} , \Big[ \hat x_{i}, O_{p} \Big]\Big] \frac{1}{1 + |\hat x - z|^{2}} \\
&\quad + \sum_{i=1}^{3} \frac{1}{1 + |\hat x - z|^{2}} \Big[ \hat x_{i} , \Big[ \hat x_{i}, O_{p} \Big]\Big] \frac{(\hat x_{i} - z_{i})}{1 + |\hat x - z|^{2}}\;,
\end{split}
\end{equation*}
it is clear that $\text{II}_{2;2}$ can be estimated in terms of a sum of terms bounded by:
\begin{equation}\label{eq:II22}
\Big\|\frac{1}{1 + |\hat x - z|^{2}} \Big[ \hat x_{i} , \Big[ \hat x_{i}, O_{p} \Big]\Big] \frac{1}{1 + |\hat x - z|^{2}} \Big\|_{\text{tr}} \leq C\int dq\, | \partial_{q_{i}}^{2} F_{p}(q) | \leq K|p| \varepsilon^{-2}\;,
\end{equation}
where the last inequality follows from (\ref{eq:smoothSp}). Hence, (\ref{eq:II1}), (\ref{eq:II21}), (\ref{eq:II22}) imply:
\begin{equation*}
|\text{II}| \leq C |p| \varepsilon^{-2}\;.
\end{equation*}
Combined with (\ref{eq:Ismooth}) and with (\ref{eq:WO}), we have:
\begin{equation*}
\big\| \mathcal{W}^{(1)}_{z} O_{p} \big\|_{\text{tr}} \leq C|p| \varepsilon^{-2}\;.
\end{equation*}
Recalling (\ref{eq:WO0}), this concludes the check of the assumption (\ref{eq:propcomm1}) for the free Fermi gas.

\medskip

\noindent{\underline{Check of (\ref{eq:propcomm2})}.} This assumption is trivially true for the free Fermi gas, since $[ \omega, \nabla ] = 0$.

\medskip

\noindent{\underline{Check of (\ref{eq:H2})}.} By stationarity of the free Fermi gas:
\begin{equation}\label{eq:pa2}
\big\| \omega\, \W_{z}^{(n)}(t) \big\|_{\tr} =  \big\| \omega\, \W_{z}^{(n)} \big\|_{\tr} \leq C\varepsilon^{-3}\;,
\end{equation}
where the last bound is proven as we did with the assumption (\ref{eq:propcomm1}), replacing $[e^{ip \cdot \hat x}, \omega]$ with $\omega$.
Finally, since we allow for the value $n=1$ in the localizer, assumption \eqref{eq:assump_density} in Proposition \ref{prp:density} immediately follows.

\subsection{Coherent States}

Let $\rho \in L^{1}(\mathbb{R}^{3}) \cap L^{\infty}(\mathbb{R}^{3})$, $\rho(r)\geq 0$, such that
\begin{equation}\label{eq:rhoass}
\int_{\mathbb{R}^{3}} dr\, \rho(r) = N\;,\qquad | \rho(r) | \leq C\varepsilon^{-3}\;,\qquad X_{\Lambda}(r) \rho(r)^{2/3} \leq C\varepsilon^{-2}\;,\qquad | \nabla_{r} \rho(r)^{1/3} | \leq C \varepsilon^{-1}\;,
\end{equation}
where recall that $X_{\Lambda}(r):= 1+\mathrm{dist}(r,\Lambda)^{4}$. 
The function $\rho(r)$ plays the role of density for the fermionic state that we are going to introduce. The second inequality in (\ref{eq:rhoass}) introduces a form of localization of $\rho(r)$ in the domain $\Lambda$, while the last one allows to bound derivatives of the local Fermi momentum, to be defined below. Here we shall consider coherent states, corresponding to the following reduced one-particle density matrix:
\begin{equation}\label{eq:coherentdef}
\omega_{N} = \frac{1}{(2\pi)^{3}}\int_{\mathbb{R}^{3} \times \mathbb{R}^{3}} dqdr M(q,r) \pi_{q,r}\;,\qquad \pi_{q,r} = | f_{q,r} \rangle \langle f_{q,r} |\;, 
\end{equation}
with $f_{q,r}(x) = e^{iq \cdot x} g(x-r)$, where the function $g$ is even, $\|g\|_{2} = 1$, smooth and fast decaying; for definiteness, we choose $g(x) = \frac{1}{(2\pi \delta^{2})^{3/4}} e^{-\frac{|x|^{2}}{2 \delta^{2}}}$ with $\delta > 0$ to be chosen later. We also set:
\begin{equation*}
M(q,r) = \mathbf{1}_{|q| \leq k_{F}(r)}\;,
\end{equation*}
where we choose the local Fermi momentum $k_{F}(r) = \kappa \rho(r)^{1/3}$, with $\kappa = (6 \pi^{2})^{1/3}$. With this choice:
\begin{equation*}
\tr\, \omega_{N} = \frac{1}{(2\pi)^{3}}\int dqdr M(q,r) = \int dr \rho(r) = N\;.
\end{equation*}
\medskip

\noindent{\underline{Closeness to a projection.}} The state $\omega_{N}$ is not an orthogonal projection. However, it can be viewed as an approximate projection, in the following sense. Consider the quantity $\tr\, \omega_{N} (1 - \omega_{N})$. Since $0\leq \omega_{N} \leq 1$, it satisfies the trivial bound $\tr\, \omega_{N} (1 - \omega_{N}) \leq N$. We claim that, for $\omega_{N}$ given by (\ref{eq:coherentdef}), for $\delta =\sqrt{\varepsilon}$,
\begin{equation}
\tr\, \omega_{N} (1 - \omega_{N}) \leq C \sqrt{\varepsilon}N\;.
\end{equation}
To prove this estimate, we proceed as follows. We write:
\begin{equation}
\tr\, ( \omega_{N} - \omega_{N}^{2} ) = \frac{1}{(2\pi)^{6}}\int dqdq' drdr' M(q,r) \big( M(q,r) - M(q',r') \big) |\langle f_{q,r}, f_{q',r'} \rangle|^{2}
\end{equation}
where we used the completeness of coherent states, and the fact that $M(q,r) = M(q,r)^{2}$. Next, notice that:
\begin{equation}
M(q,r) \big( M(q,r) - M(q',r') \big) = \chi(|q| \leq k_{F}(r)) \big( \chi(|q| \leq k_{F}(r)) - \chi(|q'| \leq k_{F}(r')) \big)\;,
\end{equation}
which implies:
\begin{equation}
\begin{split}
&\int dqdq' drdr' M(q,r) \big( M(q,r) - M(q',r') \big) |\langle f_{q,r}, f_{q',r'} \rangle|^{2} \\
&\qquad = \int dqdq' drdr' \chi(|q| \leq k_{F}(r)) \chi(|q'| > k_{F}(r')) |\langle f_{q,r}, f_{q',r'} \rangle|^{2}\;.
\end{split}
\end{equation}
We compute:
\begin{equation}
|\langle f_{q,r}, f_{q',r'} \rangle|^{2} = e^{-(r-r')^{2} / 2\delta^{2} - (q-q')^{2} \delta^{2} / 2}\;,
\end{equation}
and we consider the integral:
\begin{equation}\label{eq:qq'}
\int dqdq' \chi(|q| \leq k_{F}(r)) \chi(|q'| > k_{F}(r')) e^{- (q-q')^{2} \delta^{2} / 2}\;.
\end{equation}
By the regularity properties of the Fermi momentum (\ref{eq:rhoass}), we have:
\begin{equation}
k_{F}(r') = k_{F}(r) + k_{F}(r') - k_{F}(r) \geq k_{F}(r) - C \varepsilon^{-1} |r - r'|\;.
\end{equation}
Therefore, the expression in (\ref{eq:qq'}) is bounded above by:
\begin{equation}
\int dqdq' \chi(|q| \leq k_{F}(r)) \chi\big(|q'| > k_{F}(r) - C \varepsilon^{-1} |r - r'|\big) e^{- (q-q')^{2} \delta^{2} / 2}\;,
\end{equation}
which we further decompose as:
\begin{equation}\label{eq:further}
\begin{split}
&\int dqdq' \chi(|q| \leq k_{F}(r)) \chi\big(k_{F}(r) + \delta^{-1} > |q'| > k_{F}(r) - C \varepsilon^{-1} |r - r'|\big) e^{- (q-q')^{2} \delta^{2} / 2} \\
&\qquad  + \int dqdq' \chi(|q| \leq k_{F}(r)) \chi\big(k_{F}(r) + \delta^{-1} \leq |q'|\big) e^{- (q-q')^{2} \delta^{2} / 2}\;.
\end{split}
\end{equation}
The second term is easily estimated as:
\begin{equation}
\int dqdq' \chi(|q| \leq k_{F}(r)) \chi\big(k_{F}(r) + \delta^{-1} \leq |q'|\big) e^{- (q-q')^{2} \delta^{2} / 2} \leq C k_{F}(r)^{3}\;,
\end{equation}
which contributes to $\tr\,\omega_{N} (1 - \omega_{N})$ with a term bounded by:
\begin{equation}
C\int drdr'\, k_{F}(r)^{3} e^{-(r-r')^{2} / 2\delta^{2}} \leq C N \delta^{3}\;.
\end{equation}
Consider now the first term in (\ref{eq:further}). We estimate it as:
\begin{equation}
\begin{split}
\int dqdq' \chi(|q| \leq k_{F}(r))\chi\big(k_{F}(r) + \delta^{-1} > &|q'| > k_{F}(r) - C \varepsilon^{-1} |r - r'|\big) e^{- (q-q')^{2} \delta^{2} / 2} \\ &\leq C\delta^{-3} k_{F}(r)^{2} ( \delta^{-1} + C \varepsilon^{-1} |r - r'| )\;;
\end{split}
\end{equation}
this contributes to $\tr\,\omega_{N} (1 - \omega_{N})$ with a term bounded by:
\begin{equation}
\begin{split}
C\int drdr' \delta^{-3} k_{F}(r)^{2} &( \delta^{-1} + C \varepsilon^{-1} |r - r'| ) e^{-(r-r')^{2} / 2\delta^{2}} \\
& \leq K \int dr k_{F}(r)^{2} ( \delta^{-1} + C \varepsilon^{-1} \delta ) \\
&= K \int dr \frac{1}{X_{\Lambda}(r)} X_{\Lambda}(r) k_{F}(r)^{2} ( \delta^{-1} + C \varepsilon^{-1} \delta ) \\
&\leq C |\Lambda| \varepsilon^{-2/3} ( \delta^{-1} + C \varepsilon^{-1} \delta )\;,
\end{split}
\end{equation}
where we used the assumptions (\ref{eq:rhoass}). Putting everything together, and choosing $\delta = \sqrt{\varepsilon}$ we find:
\begin{equation}
\tr\,\omega_{N} (1 - \omega_{N}) \leq C N \sqrt{\varepsilon}
\end{equation}
as claimed.

\medskip

\noindent{\underline{Check of (\ref{eq:propcomm1}).}} We write:
\begin{equation*}
\begin{split}
[e^{i p \cdot \hat{x}}, \omega_{N}] = \frac{1}{(2\pi)^{3}} \int d q\, d r \, \Big[M(q-p/2,r)- M(q+p/2,r) \Big]
|f_{q+ p/2,r}\rangle \langle f_{q- p/2,r}| \;,
\end{split}
\end{equation*}
and notice that 
\begin{equation*}
\Big|M(q-p/2,r)- M(q+p/2,r) \Big| = \mathbf{1}_{S_{p}(r)}(q)
\end{equation*}
the set $S_{p}(r)$ being the symmetric difference of two Fermi balls of radius $k_{F}(r)$, shifted by $p$, i.e.,
\begin{equation}
\label{eq:Spr}
S_{p}(r):= \big\{q \in \R^{3} \,\big| \,|q -  p/2| \leq \kappa \rho(r)^{1/3} \big\}
\ominus
\big\{q \in \R^{3} \, \big| \,|q +  p/2| \leq \kappa \rho(r)^{1/3} \big\}
\end{equation}
with measure
\begin{equation}
\label{eq:Spcoh}
|S_{p}(r)| \leq C |p|  \rho(r)^{2/3} \;.
\end{equation}
We compute:
\begin{equation}\label{eq:commcoh}
\begin{split}
\Big\| \W^{(n)}_{z}(t) [e^{i p \cdot \hat{x}}, \omega_{N}]  \Big\|_{\text{tr}} &\leq \frac{1}{(2\pi)^{3}}\int dqdr\, \mathbf{1}_{S_{p}(r)}(q) \Big\| \W_{z}^{(n)}(t) |f_{q+ p /2,r} \rangle \langle f_{q- p /2,r}| \Big\|_{\tr} \\
&= \frac{1}{(2\pi)^{3}}\int dqdr\, \mathbf{1}_{S_{p}(r)}(q)  \big\| \W_{z}^{(n)}(t) f_{q+ p /2,r} \big\|_{2} \\
&=  \frac{1}{(2\pi)^{3}}\int dqdr\, \mathbf{1}_{S_{p}(r)}(q)  \big\| \W_{z}^{(n)} e^{i \varepsilon \Delta t} f_{q+ p /2,r} \big\|_{2} \\
&= \frac{1}{(2\pi)^{3}}\int dqdr\, \frac{\mathbf{1}_{S_{p}(r)}(q)}{1 + |z-r|^{4n}} \big\|(1 + |z-r|^{4n}) \W_{z}^{(n)} e^{i \varepsilon \Delta t} f_{q+ p /2,r} \big\|_{2}\;.
\end{split}
\end{equation}
We estimate:
\begin{equation}\label{eq:insertion-hatx-f}
\begin{split}
&\big\|(1 + |z-r|^{4n}) \W_{z}^{(n)} e^{i \varepsilon \Delta t} f_{q+ p /2,r} \big\|_{2} \\
&\quad \leq C \big\|(1 + |z-\hat x|^{4n}) \W_{z}^{(n)} e^{i \varepsilon \Delta t} f_{q+ p /2,r} \big\|_{2} + C \big\|(1 + |\hat x-r|^{4n}) \W_{z}^{(n)} e^{i \varepsilon \Delta t} f_{q+ p /2,r} \big\|_{2}\\
&\quad \leq C + C\big\|(1 + |\hat x-r|^{4n}) \W_{z}^{(n)} e^{i \varepsilon \Delta t} f_{q+ p /2,r} \big\|_{2}\;.
\end{split}
\end{equation}
We estimate the second term as, using the unitarity of the free dynamics:
\begin{equation}\label{eq:unita}
\begin{split}
\big\|(1 + |\hat x-r|^{4n}) \W_{z}^{(n)} e^{i \varepsilon \Delta t} f_{q+ p /2,r} \big\|_{2} &\leq  \big\|(1 + |\hat x(t)-r|^{4n})  f_{q+ p /2,r} \big\|_{2} \\
&\leq \big\|(1 + |\hat{x}(t)-r - 2\varepsilon (q + p/2)t |^{4n})  g(\cdot-r) \big\|_{2}\\
&\leq C(1 + \varepsilon^{4n} |q + p/2|^{4n} t^{4n} + (\varepsilon \delta^{-1} t)^{4n})\;.
\end{split}
\end{equation}
The last inequality follows from the smoothness of $g$, and from its fast decay at infinity. Therefore, going back to (\ref{eq:commcoh}), for $\veps|p|\leq 1$, using that $\varepsilon \delta^{-1} \leq C$:
\begin{equation}\label{eq:unita2}
\begin{split}
\Big\| \W^{(n)}_{z}(t) [e^{i p \cdot \hat{x}}, \omega_{N}]  \Big\|_{\text{tr}} &\leq \int dqdr\, \frac{\mathbf{1}_{S_{p}(r)}(q)}{1 + |z-r|^{4n}} C(1 + \varepsilon^{4n} |q + p/2|^{4n} t^{4n}) \\
&\leq C(1 + t^{4n}) \int dqdr\, \frac{\mathbf{1}_{S_{p}(r)}(q)}{1 + |z-r|^{4n}}\;,
\end{split}
\end{equation}
where in the last step we used that, by the compact support properties of the integral and by (\ref{eq:rhoass}) and \eqref{eq:Spr}, $\mathbf{1}_{S_{p}(r)}(q) = 0$, if $|q|\geq C\varepsilon^{-1}$. Recalling the bound (\ref{eq:Spcoh}):
\begin{equation*}
\Big\| \W^{(n)}_{z}(t) [e^{i p \cdot \hat{x}}, \omega_{N}]  \Big\|_{\text{tr}} \leq C(1 + t^{4n})  \int dr\, \frac{|p| \rho(r)^{2/3}}{1 + |z-r|^{4n}}\;;
\end{equation*}
hence,
\begin{equation*}
\begin{split}
\sup_{p:|p|\leq \varepsilon^{-1}} \sup_{z\in \mathbb{R}^{3}} \frac{X_{\Lambda}(z)}{1 + |p|} \Big\| \W^{(n)}_{z}(t) [e^{i p \cdot \hat{x}}, \omega_{N}]  \Big\|_{\text{tr}} \leq K(1 + t^{4n}) \sup_{z\in \mathbb{R}^{3}} X_{\Lambda}(z) \int dr\, \frac{\rho(r)^{2/3}}{1 + |z-r|^{4n}}\;.
\end{split}
\end{equation*}
To estimate the supremum, we proceed as follows, using that by the triangle inequality $X_{\Lambda}(z) \leq C X_{\Lambda}(r) (1+|z-r|^{4})$:
\begin{equation*}
\begin{split}
X_{\Lambda}(z) \int dr\, \frac{\rho(r)^{2/3}}{1 + |z-r|^{4n}} &= \int dr\, \frac{X_{\Lambda}(z)}{X_{\Lambda}(r)} X_{\Lambda}(r) \frac{\rho(r)^{2/3}}{1 + |z-r|^{4n}} \\
&\leq \int dr\, X_{\Lambda}(r) \frac{\rho(r)^{2/3}}{1 + |z-r|^{4(n-1)}} \\
&\leq C\varepsilon^{-2}\;,
\end{split}
\end{equation*}
where the last bound follows from the last assumption in (\ref{eq:rhoass}). This concludes the check of (\ref{eq:propcomm1}) for $n\geq 2$.

\medskip

\noindent{\underline{Check of (\ref{eq:propcomm2}).}} We start by writing:
\begin{equation*}
\begin{split}
[ \nabla, \omega_{N} ](x;y) &= (\nabla_{x} + \nabla_{y}) \omega_{N}(x;y) \\
&= \frac{1}{(2\pi)^{3}}\int_{\mathbb{R}^{3} \times \mathbb{R}^{3}} dqdr M(q,r) e^{iq(x-y)} \big( \nabla_{x} g(x-r) \overline{g(y-r)} + g(x-r) \nabla_{y} \overline{g(y-r)} \big)\\
&\equiv -\frac{1}{(2\pi)^{3}}\int_{\mathbb{R}^{3} \times \mathbb{R}^{3}} dqdr M(q,r) e^{iq(x-y)} \nabla_{r} g(x-r) \overline{g(y-r)}\;.
\end{split}
\end{equation*}
Integrating by parts, we get:
\begin{equation}
[ \nabla, \omega_{N} ](x;y) = \frac{1}{(2\pi)^{3}}\int_{\mathbb{R}^{3} \times \mathbb{R}^{3}} dqdr\, \delta(|q| - k_{F}(r)) (\nabla_{r} k_{F}(r)) e^{iq(x-y)}  g(x-r) \overline{g(y-r)}\;.
\end{equation}
We then have, proceeding as in (\ref{eq:commcoh})-(\ref{eq:unita2}):
\begin{equation*}
\begin{split}
&\Big\| \W^{(n)}_{z}(t) [ \varepsilon\nabla, \omega_{N} ]  \Big\|_{\text{tr}} \\
&\leq C\varepsilon \int_{|q| = k_{F}(r)} dqdr\, \frac{|\nabla_{r} k_{F}(r)|}{1 + |z-r|^{4n}} \big\|(1 + |z-r|^{4n}) \W_{z}^{(n)} e^{i \varepsilon \Delta t} e^{iq \cdot \hat{x}} g(\cdot-r) \big\|_{2} \\
&\leq C\varepsilon (1 + t^{4n}) \int dr\, \frac{k_{F}(r)^{2} |\nabla_{r} k_{F}(r)|}{1 + |z-r|^{4n}}\;.
\end{split}
\end{equation*}
Hence:
\begin{equation*}
\begin{split}
\sup_{z \in \mathbb{R}^{3}} X_{\Lambda}(z) \Big\| \W^{(n)}_{z}(t) [ \varepsilon\nabla, \omega_{N} ]  \Big\|_{\text{tr}} &\leq C\varepsilon (1 + t^{4n}) \sup_{z \in \mathbb{R}^{3}} X_{\Lambda}(z) \int dr\, \frac{k_{F}(r)^{2} |\nabla_{r} k_{F}(r)|}{1 + |z-r|^{4n}} \\
&= C\varepsilon (1 + t^{4n}) \sup_{z \in \mathbb{R}^{3}} \int dr\, \frac{X_{\Lambda}(z)}{X_{\Lambda}(r)} X_{\Lambda}(r) \frac{k_{F}(r)^{2} |\nabla_{r} k_{F}(r)|}{1 + |z-r|^{4n}} \\
&\leq C\varepsilon (1 + t^{4n}) \sup_{z \in \mathbb{R}^{3}} \int dr\, X_{\Lambda}(r) \frac{k_{F}(r)^{2} |\nabla_{r} k_{F}(r)|}{1 + |z-r|^{4(n-1)}} \\
&\leq K (1 + t^{4n}) \varepsilon^{-2}\;,
\end{split}
\end{equation*}
where the last step follows from the assumptions in (\ref{eq:rhoass}), recalling that $k_{F}(r) = \kappa \rho(r)^{1/3}$. This concludes the check of (\ref{eq:propcomm2}) for $n \geq 2$.

\medskip

\noindent{\underline{Check of (\ref{eq:H2}).}} To begin, we estimate:
\begin{equation*}
\begin{split}
\Big\| \W_{z}^{(n)}(t) \omega_{N} \Big\|_{\tr} &\leq \frac{1}{(2\pi)^{3}}\int dqdr\, M(q,r) \big\|  \W_{z}^{(n)}(t) f_{q,r} \big\|_{2} \\
&\leq \frac{1}{(2\pi)^{3}}\int dqdr\, M(q,r) \big\|\W_{z}^{(n)} e^{i\varepsilon \Delta t} f_{q,r} \big\|_{2}
\\
&
= \frac{1}{(2\pi)^{3}}\int dqdr\, \frac{M(q,r)}{1+|z-r|^{4n}} \big\|(1+|z-r|^{4n})\W_{z}^{(n)} e^{i\varepsilon \Delta t} f_{q,r} \big\|_{2}
\;,
\end{split}
\end{equation*}
where in the second last step we used the unitarity of the free dynamics. Next, following \eqref{eq:insertion-hatx-f} and \eqref{eq:unita}, we have:
\begin{equation*}
\begin{split}
\Big\| (1 + |z-r|^{4n}) \W_{z}^{(n)} e^{i\varepsilon \Delta t} f_{q,r} \Big\|_{2} &\leq C(1 + \varepsilon^{4n} |q|^{4n} t^{4n} + (\varepsilon \delta^{-1} t)^{4n}) \;.
\end{split}
\end{equation*}
To conclude, using that the integral is supported for $|q|\leq k_{F}(r) \leq C\varepsilon^{-1}$:
\begin{equation*}
\begin{split}
\sup_{z \in \mathbb{R}^{3}} X_{\Lambda}(z)\Big\| \W_{z}^{(n)}(t) \omega_{N} \Big\|_{\tr} &\leq  C \sup_{z \in \mathbb{R}^{3}} X_{\Lambda}(z) \int dqdr\, \frac{M(q,r)(1 +\veps^{4n} |q|^{4n} t^{4n} + (\varepsilon \delta^{-1} t)^{4n})}{1 + |z-r|^{4n}} \\
&\leq  C (1 + t^{4n}) \sup_{z \in \mathbb{R}^{3}} \int dr\, \frac{X_{\Lambda}(z)}{X_{\Lambda}(r)} X_{\Lambda}(r) \frac{\rho(r)}{1 + |z-r|^{4n}} \\
&\leq C(1 + t^{4n}) \sup_{z \in \mathbb{R}^{3}} \int dr\, X_{\Lambda}(r) \frac{\rho(r)}{1 + |z-r|^{4(n-1)}} \\
&\leq C (1 + t^{4n}) \varepsilon^{-3}\;,
\end{split}
\end{equation*}
where the last step follows from the assumptions (\ref{eq:rhoass}). This concludes the check of (\ref{eq:H2}), and establishes the validity of the local semiclassical structure for coherent states for $n\geq 2$. 

\medskip

\noindent{\underline{Check of (\ref{eq:assump_density}).}} To conclude, we check the validity of the assumption \eqref{eq:assump_density}. This follows from the previous computations, in fact:
\begin{equation*}
\begin{split}
\sup_{z \in \mathbb{R}^{3}} \tr\, \W^{(1)}_{z}(t) \omega_{N}  \leq C (1 + t^{4}) \sup _{z \in \mathbb{R}^{3}} \int dr \, \frac{\rho(r)}{1 + |z-r|^{4}}
\leq C (1+t^{4}) \veps^{-3} \;.
\end{split}
\end{equation*}

\medskip

\noindent
\textbf{Acknowledgements.} We thank Alessandro Giuliani for suggesting the relation of our result with the Kac limit. The work of L.F.~has been supported by the European Research Council through the ERC-AdG CLaQS and by the Swiss National Science Foundation grant number 200160. The work of M.P.~has been supported
by the European Research Council (ERC) under the European Union's Horizon 2020 research and innovation program ERC StG MaMBoQ, n.80290. The work of M.P. has been carried out under the auspices of the GNFM of INdAM. B.S.~acknowledges partial support from the NCCR SwissMAP, from the Swiss National Science Foundation through the Grant ``Dynamical
and energetic properties of Bose-Einstein condensates'' and from the European Research Council through the ERC-AdG CLaQS.

\bigskip


\begin{thebibliography}{10}

\bibitem{APPP} A.~Athanassoulis, T.~Paul, F.~Pezzotti, M.~Pulvirenti. Strong Semiclassical Approximation of Wigner Functions for the Hartree Dynamics. {\it Rend. Lincei Mat.~Appl.~}{\bf 22}, 525--552 (2011).

\bibitem{Bach} V.~{Bach}. {Error bound for the {H}artree--{F}ock energy of atoms and molecules}. \emph{Comm.~Math.~Phys.~}\textbf{147}, no.~3, 527--548 (1992).

\bibitem{BBPPT} V.~Bach, S.~Breteaux, S.~Petrat, P.~Pickl, T.~Tzaneteas. Kinetic energy estimates for the accuracy of the time-dependent Hartree--Fock approximation with Coulomb interaction. {\it J.~Math.~Pures Appl.~}{\bf 105}(1), 1--30 (2016).
 
\bibitem{BGGM} C.~{Bardos}, F.~{Golse}, A.~D.~{Gottlieb} and N.~J.~{Mauser}. {Mean-field dynamics of fermions and the time-dependent {H}artree--{F}ock equation}. \emph{J.~Math.~Pures Appl.~(9)} \textbf{82}, no.~6, 665--683 (2003).

\bibitem{BJPSS} N.~{Benedikter}, V.~Jaksic, M.~{Porta}, C.~Saffirio and B.~{Schlein}. Mean-field Evolution of Fermionic Mixed States. {\it Commun.~Pur.~Appl.~Math.~}\textbf{69}, 2250--2303 (2016).

\bibitem{BNPSSa} N.~Benedikter, P.~T.~Nam, M.~Porta, B.~Schlein, R.~Seiringer. Optimal upper bound for the correlation energy of a Fermi gas in the mean-field regime. {\it Comm.~Math.~Phys.~}{\bf 374}, 2097--2150 (2020).

\bibitem{BNPSSb} N.~Benedikter, P.~T.~Nam, M.~Porta, B.~Schlein, R.~Seiringer. Correlation Energy of a Weakly Interacting Fermi Gas. {\it Invent.~Math.~}{\bf 225}, 885--979 (2021)

\bibitem{BPSScorr} N.~Benedikter, M.~Porta, B.~Schlein, R.~Seiringer. Correlation Energy of a Weakly Interacting Fermi Gas with Large Interaction Potential. {\tt arXiv:2106.13185}

\bibitem{BNPSS} N.~{Benedikter}, P.~T.~Nam, M.~{Porta}, B.~{Schlein} and R.~Seiringer.
Bosonization of Fermionic Many-Body Dynamics. {\it Ann.~H.~Poincar\'e} (2021), {\tt https://doi.org/10.1007/s00023-021-01136-y}

\bibitem{BPSS} N.~{Benedikter}, M.~{Porta}, C.~Saffirio and B.~{Schlein}. {From the Hartree dynamics to the Vlasov equation.} {\it Arch.~Rational Mech.~Anal.~}{\bf 221}, 273--334 (2016)

\bibitem{BPS} N.~{Benedikter}, M.~{Porta} and B.~{Schlein}. {Mean-field evolution of fermionic systems}. \emph{Comm.~Math.~Phys.~}\textbf{331}, 1087--1131 (2014).

\bibitem{BPS2} N.~{Benedikter}, M.~{Porta} and B.~{Schlein}. {Mean-field dynamics of fermions with relativistic dispersion}. \emph{J.~Math.~Phys.~}\textbf{55}, 021901 (2014).

\bibitem{BPSbook} N.~{Benedikter}, M.~{Porta} and B.~{Schlein}. {\it Effective evolution equations from quantum dynamics}. SpringerBriefs in Mathematical Physics {\bf 7} (2016).

\bibitem{CHN} M.~R.~Christiansen, C.~Hainzl, P.~T.~Nam. The Random Phase Approximation for Interacting Fermi Gases in the Mean-Field Regime. {\tt arXiv:2106.11161}

\bibitem{CLS} J.~J.~Chong, L.~Lafleche, C.~Saffirio. From many-body quantum dynamics to the Hartree--Fock and Vlasov equations with singular potentials. {\tt arXiv:2103.10946} 

\bibitem{DFPP} 
D.-A.~Deckert, J.~Fr\"ohlich, P.~Pickl, A.~Pizzo. Effective Dynamics of a Tracer Particle Interacting with an Ideal Bose Gas. {\it Comm.~Math.~Phys.~}{\bf 328} (2014), 597--624.

\bibitem{EESY} A.~{Elgart}, L.~{Erd{\H{o}}s}, B.~{Schlein} and H.-T.~{Yau}. {Nonlinear {H}artree equation as the mean-field limit of weakly
  coupled fermions}. \emph{J.~Math.~Pures Appl.~(9)} \textbf{83}, no.~10, 1241--1273 (2004).

\bibitem{FK} J.~{Fr{\"o}hlich} and A.~{Knowles}. {A microscopic derivation of the time-dependent {H}artree-{F}ock equation with {C}oulomb two-body interaction}, \emph{J.~Stat.~Phys.~}\textbf{145}, no.~1, 23--50 (2011).
 
\bibitem{GS} G.~M.~Graf and J.~P.~Solovej. A correlation estimate with applications to quantum systems with Coulomb interactions. \emph{Rev.~Math.~Phys.~}\textbf{6}, 977--997 (1994).

\bibitem{HPR}  C.~Hainzl, M.~Porta, F.~Rexze. On the correlation energy of interacting fermionic systems in the mean-field regime. {\it Comm.~Math.~Phys.~}{\bf 374}, 485--524 (2020)

\bibitem{LS} L.~Lafleche and C.~Saffirio. Strong semiclassical limit from Hartree and Hartree--Fock to Vlasov-Poisson equation. {\it Analysis and PDE} (to appear). 

\bibitem{LeSa} M.~Lewin and J.~Sabin. The Hartree and Vlasov equations at positive density. {\it Comm.~Part.~Differ.~Equat.~}{\bf 45}, 1702--1754 (2020).

\bibitem{LiSa} E.~H.~Lieb and R.~Seiringer. {\it The Stability of Matter in Quantum Mechanics.} Cambridge University Press (2010). 

\bibitem{LP} P.-L.~Lions and T.~Paul. Sur les mesures de Wigner. {\it Rev.~Mat.~Iberoamericana} {\bf 9}, 553--618 (1993).

\bibitem{MM} P.~A.~Markowich and N.~J.~Mauser. The Classical Limit of a Self-Consistent Quantum Vlasov Equation. {\it Math.~Models Methods Appl.~Sci.~}{\bf 3}, no. 1, 109--124 (1993).

\bibitem{MP} D.~Mitrouskas, P.~Pickl. Effective pair interaction between impurity particles induced by a dense  Fermi gas. {\tt arXiv:2105.02841}
 

\bibitem{NS} H.~{Narnhofer} and G.~L.~{Sewell}. {Vlasov hydrodynamics of a quantum mechanical model}. \emph{Comm.~Math.~Phys.~}\textbf{79}, no.~1, 9--24 (1981).

\bibitem{PP} S.~{Petrat} and P.~{Pickl}. {A new method and a new scaling for deriving fermionic mean-field dynamics}. {\it Math.~Phys.~Anal.~Geom.~}{\bf 19}: 3, (2016).

\bibitem{PRSS} M.~Porta, S.~Rademacher, C.~Saffirio and B.~Schlein. Mean Field Evolution of Fermions with Coulomb Interaction. 
\textit{J.~Stat.~Phys.~}\textbf{166}, 1345--1364 (2017).

\bibitem{EP} E.~Presutti. {\it Scaling Limits in Statistical Mechanics and Microstructures in Continuum Mechanics}. Theoretical and Mathematical Physics, Springer (2008). 

\bibitem{Solovej} J.~P.~Solovej. {{Many Body Quantum Mechanics}}. Lecture Notes. Summer 2007. 

\bibitem{Sp} H.~{Spohn}. {On the {V}lasov hierarchy}, \emph{Math.~Methods Appl.~Sci.~}\textbf{3}, no.~4, 445--455 (1981).
 
\end{thebibliography}
\end{document}